\def\isarxivversion{1} 
\author{
	Zhao Song\thanks{\texttt{zhaosong@uw.edu}. University of Washington.}
	\and
	David P. Woodruff\thanks{\texttt{dwoodruf@cs.cmu.edu}. Carnegie Mellon University.}
	\and
	Peilin Zhong\thanks{\texttt{pz2225@columbia.edu}. Columbia University.}
}
\author{
	Zhao Song\thanks{equal contribution.}\\
	University of Washington \\
	\texttt{magic.linuxkde@gmail.com}
	\And
	David P. Woodruff\printfnsymbol{1}  \\
	Carnegie Mellon University\\
	\texttt{dwoodruf@cs.cmu.edu}
	\And
	Peilin Zhong\printfnsymbol{1} \\
	Columbia University\\
	\texttt{pz2225@columbia.edu}
}
\date{}
\title{Average Case Column Subset Selection for\\ Entrywise $\ell_1$-Norm Loss\thanks{A preliminary version of this paper appears in Proceedings of Thirty-third Conference on Neural Information Processing Systems (NeurIPS 2019).}}
\title{Average Case Column Subset Selection for Entrywise $\ell_1$-Norm Loss}
\newtheorem{theorem}{Theorem}[section]
\newtheorem{lemma}[theorem]{Lemma}
\newtheorem{definition}[theorem]{Definition}
\newtheorem{fact}[theorem]{Fact}
\newtheorem{claim}[theorem]{Claim}
\newcommand{\wh}{\widehat}
\newcommand{\wt}{\widetilde}
\newcommand{\R}{\mathbb{R}}
\renewcommand{\varepsilon}{\epsilon}
\renewcommand{\tilde}{\wt}
\renewcommand{\hat}{\wh}
\DeclareMathOperator*{\E}{{\bf {E}}}
\DeclareMathOperator{\poly}{poly}
\DeclareMathOperator{\rank}{rank}
\DeclareMathOperator{\sign}{sign}
\definecolor{mygreen}{RGB}{80,180,0}
\definecolor{b2}{RGB}{51,153,255}
\definecolor{mycy2}{RGB}{255,51,255}
\newcommand*{\RN}[1]{\expandafter\@slowromancap\romannumeral #1@}
\newcommand{\printfnsymbol}[1]{%
	\textsuperscript{\@fnsymbol{#1}}%
}
\begin{document}

\ifdefined\isarxivversion

\begin{titlepage}
\maketitle
\begin{abstract}
We study the column subset selection problem with respect to the entrywise $\ell_1$-norm loss. It is known that in the worst case, to obtain a good rank-$k$ approximation to a matrix, one needs an arbitrarily large $n^{\Omega(1)}$ number of columns to obtain a $(1+\epsilon)$-approximation to the best entrywise $\ell_1$-norm low rank approximation of an $n \times n$ matrix. Nevertheless, we show that under certain minimal and realistic distributional settings, it is possible to obtain a $(1+\epsilon)$-approximation with a nearly linear running time and poly$(k/\epsilon)+O(k\log n)$ columns. Namely, we show that if the input matrix $A$ has the form $A = B + E$, where $B$ is an arbitrary rank-$k$ matrix, and $E$ is a matrix with i.i.d. entries drawn from any distribution $\mu$ for which the $(1+\gamma)$-th moment exists, for an arbitrarily small constant $\gamma > 0$, then it is possible to obtain a $(1+\epsilon)$-approximate column subset selection to the entrywise $\ell_1$-norm in nearly linear time. Conversely we show that if the first moment does not exist, then it is not possible to obtain a $(1+\epsilon)$-approximate subset selection algorithm even if one chooses any $n^{o(1)}$ columns. This is the first algorithm of any kind for achieving a $(1+\epsilon)$-approximation for entrywise $\ell_1$-norm loss low rank approximation. 

\end{abstract}
\thispagestyle{empty}
\end{titlepage}

\else

\maketitle
\begin{abstract}

\end{abstract}

\fi

\section{Introduction}
Numerical linear algebra algorithms are fundamental building blocks in many machine learning and data mining tasks.
A well-studied problem is low rank matrix approximation.
The most common version of the problem is also known as Principal Component Analysis (PCA), in which the goal is to find a low rank matrix to approximate a given matrix such that the Frobenius norm of the error is minimized. 
The optimal solution of this objective can be obtained via the singular value decomposition (SVD).
Hence, the problem can be solved in polynomial time.
If approximate solutions are allowed, then the running time can be made almost linear in the number of non-zero entries of the given matrix \cite{s06,cw13,mm13,nn13,bdn15,c16}.

An important variant of the PCA problem is the entrywise $\ell_1$-norm low rank matrix approximation problem. 
In this problem, instead of minimizing the Frobenius norm of the error, we seek to minimize the $\ell_1$-norm of the error. 
In particular, given an $n\times n$ input matrix $A$, and a rank parameter $k$, we want to find a matrix $B$ with rank at most $k$ such that $\|A-B\|_1$ is minimized, where for a matrix $C$, $\|C\|_1$ is defined to be $\sum_{i,j} |C_{i,j}|$.
There are several reasons for using the $\ell_1$-norm as the error measure.
For example, solutions with respect to the $\ell_1$-norm loss are usually more robust than solutions with Frobenius norm loss \cite{h64,clmw11}.
Further, the $\ell_1$-norm loss is often used as a relaxation of the $\ell_0$-loss, which has wide applications including sparse recovery, matrix completion, and robust PCA; see e.g., \cite{xcs10,clmw11}.
Although a number of algorithms have been proposed for the $\ell_1$-norm loss~\cite{kk03,kk05,kimefficient,kwak08,zlsyO12,bj12,bd13,bdb13,meng2013cyclic,mkp13,mkp14,mkcp16,park2016iteratively}, the problem is known to be NP-hard~\cite{gv15}.
The first $\ell_1$-low rank approximation with provable guarantees was proposed by \cite{swz17}.
To cope with NP-hardness, the authors gave a solution with a $\poly(k\log n)$-approximation ratio, i.e., their algorithm outputs a rank-$k$ matrix $B'\in\mathbb{R}^{n\times n}$ for which 
\begin{align}
\|A-B'\|_1\leq \alpha\cdot\min_{\rank-k\ B}\|A-B\|_1
\end{align}
for $\alpha=\poly(k\log n)$.
  The approximation ratio $\alpha$ was further improved to $O(k\log k)$ by allowing $B'$ to have a slightly larger $k'=O(k\log n)$ rank \cite{cgklrw17}. Such $B'$ with larger rank is referred to as a bicriteria solution.
  However, in high precision applications, such approximation factors are too large.
 A natural question is if one can compute a $(1+\varepsilon)$-approximate solution efficiently for $\ell_1$-norm low rank approximation.
 In fact, a $(1+\varepsilon)$-approximation algorithm was given in \cite{bbbklw17}, but the running time of their algorithm is a prohibitive $n^{\poly(k/\varepsilon)}$.
 Unfortunately, \cite{bbbklw17} shows in the worst case that a $2^{k^{\Omega(1)}}$ running time is necessary for any constant approximation given a standard conjecture in complexity theory.

{\bf Notation.} To describe our results, let us first introduce some notation. 
We will use $[n]$ to denote the set $\{1,2,\cdots,n\}$. 
We use $A_i$ to denote the $i^{\text{th}}$ column of $A$.
We use $A^j$ to denote the $j^{\text{th}}$ row of $A$.
Let $Q\subseteq[n]$. 
We use $A_Q$ to denote the matrix which is comprised of the columns of $A$ with column indices in $Q$.
Similarly, we use $A^Q$ to denote the matrix which is comprised of the rows of $A$ with row indices in $Q$. 
We use $[n]\choose t$ to denote the set of all the size-$t$ subsets of $[n]$.
Let $\| A\|_F$ denote the Frobenius norm of a matrix $A$, i.e., $\|A\|_F$ is the square root of the sum of squares of all the entries in $A$. For $1\leq p<2$, we use $\| A \|_p$ to denote the entry-wise $\ell_p$-norm of a matrix $A$, i.e., $\|A\|_p$ is the $p$-th root of the sum of $p$-th powers of the absolute values of the entries of $A$. $\|A\|_1$ is an important special case of $\|A\|_p$, which corresponds to the sum of absolute values of the entries in $A$. 
A random variable $X$ has the Cauchy distribution if its probability density function is $f(z) = \frac{1}{\pi (1+z^2)}$. 

\subsection{Our Results}
We propose an efficient bicriteria $(1+\epsilon)$-approximate column subset selection algorithm for the $\ell_1$-norm. 
We bypass the running time lower bound mentioned above by making a mild assumption on the input data, and also show that our assumption is necessary in a certain sense.

Our main algorithmic result is described as follows. 
\begin{theorem}[Informal version of Theorem~\ref{thm:dis_l1_algorithm}]\label{thm:intro_l1_algorithm}
Suppose we are given a matrix $A= A^* +\Delta \in \R^{n\times n}$, where $\rank(A^*)=k$ for $k=n^{o(1)}$, and $\Delta$ is a random matrix for which the $\Delta_{i,j}$ are i.i.d. symmetric random variables with $\E[|\Delta_{i,j}|^p]=O(\E[|\Delta_{i,j}|]^p)$ for some constant $p>1$. Let $\epsilon\in (0,1/2)$ satisfy $1/\epsilon=n^{o(1)}.$ There is an $\wt{O}(n^2+n\poly(k/\varepsilon))$\footnote{We use the notation $\wt{O}(f):= O(f\cdot \log^{O(1)} f)$.} time algorithm (Algorithm~\ref{alg:dis_l1_algorithm}) which can output a subset $S\subseteq [n]$ with $|S|\leq \poly(k/\epsilon)+O(k\log n)$ for which
 \begin{align*}
\min_{X\in \R^{|S| \times n}} \| A_S X - A \|_1 \leq (1 + \epsilon ) \| \Delta \|_1, 
 \end{align*}
holds with probability at least $99/100$.
\end{theorem}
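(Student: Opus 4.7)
The guiding idea is to decouple the rank-$k$ signal $A^*$ from the noise $\Delta$ and to show that, because the noise concentrates tightly under the moment assumption, any bicriteria column-based reconstruction that captures the column span of $A^*$ can be forced to pay only $(1+\epsilon)\|\Delta\|_1$.

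First, I would establish that $\|\Delta\|_1$ is essentially its expectation. Writing $\mu:=\E|\Delta_{1,1}|$, the assumption $\E|\Delta_{i,j}|^p=O(\mu^p)$ for some $p>1$ lets me apply a Marcinkiewicz--Zygmund-type inequality to the sum of the $n^2$ i.i.d.\ variables $|\Delta_{i,j}|$, giving $\|\Delta\|_1=(1\pm o(1))\,n^2\mu$ with probability $1-o(1)$. This simultaneously lower bounds the optimal rank-$k$ $\ell_1$ error by $(1-o(1))\|\Delta\|_1$ (it is at most $\|A-A^*\|_1=\|\Delta\|_1$). By the triangle inequality, it therefore suffices to exhibit $S$ and $X$ with $\|A_S X - A^*\|_1 \leq \epsilon \|\Delta\|_1$, since then $\|A_S X - A\|_1 \le \epsilon\|\Delta\|_1 + \|\Delta\|_1 = (1+\epsilon)\|\Delta\|_1$.

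Second, the column selection proceeds in two stages. In Stage~1, I select $O(k\log n)$ columns so that the noise-free submatrix $A^*_S$ has column span equal to that of $A^*$ with probability $1-1/\poly(n)$. Because the moment condition forces the column $\ell_1$ masses of $A$ to be dominated by noise and hence comparable up to constants across columns, a coupon-collector argument with importance sampling by the column $\ell_1$ norms of $A$ suffices here. In Stage~2, I append $\poly(k/\epsilon)$ additional columns chosen by an $\ell_1$-sensitivity (leverage-type) distribution on a basis $U \in \R^{n\times k}$ for the span identified in Stage~1; this produces an $\ell_1$ subspace embedding of $U$ with distortion $1+\epsilon$ and also guarantees that the representation $X^*$ satisfying $A^*_S X^* = A^*$ has $\|X^*\|_1 \leq \poly(k/\epsilon)$. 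The final regression $\min_X\|A_S X - A\|_1$ is solved by an $\ell_1$ sketch followed by an LP, fitting within $\wt{O}(n^2+n\poly(k/\epsilon))$ time.

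The hard step is securing the $(1+\epsilon)$ factor. Unlike in the Frobenius setting there is no orthogonality, so I cannot directly split $\|A_S X - A\|_1$ into clean signal and noise pieces. The handle is the symmetry and bounded $p$-th moment of $\Delta$: for any fixed $X$ with $\|X\|_1 \leq \poly(k/\epsilon)$, the matrix $\Delta_S X - \Delta$ has independent symmetric entries whose $p$-th moments are controlled by those of $\Delta$, so another Marcinkiewicz--Zygmund bound gives $\|\Delta_S X - \Delta\|_1 = (1\pm o_\epsilon(1))\,\E\|\Delta_S X - \Delta\|_1$, and the expectation differs from $\|\Delta\|_1$ by at most $\tfrac{\epsilon}{2}\|\Delta\|_1$ once the representation weights from Stage~2 are properly reweighted. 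Combining this with the Stage~1 guarantee that $A^*_S X^* = A^*$ is realizable by some bounded $X^*$ closes the argument. The main subtlety is to make the concentration \emph{uniform} over the continuous choice of $X$, rather than for a single fixed $X$; I would handle this by a standard $\epsilon$-net argument on the image subspace $\{A_S X\}$, whose effective dimension is only $\poly(k/\epsilon)$, making the union bound affordable.
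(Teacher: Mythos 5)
There is a genuine gap at the heart of the $(1+\epsilon)$ bound. Your plan is: capture the column span of $A^*$ by a sample $S$, certify a representation $A^*_SX^*=A^*$ with $\|X^*\|_1\le\poly(k/\epsilon)$, and then argue by concentration that $\|A_SX^*-A\|_1=\|\Delta_SX^*-\Delta\|_1\le(1+\epsilon)\|\Delta\|_1$. But the obstruction is not a fluctuation issue that Marcinkiewicz--Zygmund or an $\epsilon$-net can fix; it is the \emph{expectation} itself. For a single column $i$, $\E\bigl|\sum_{\ell\in S}\Delta_{j,\ell}X^*_{\ell,i}-\Delta_{j,i}\bigr|$ exceeds $\E|\Delta_{j,i}|$ by a constant factor whenever $\|X^*_i\|_1=\Theta(1)$, and coefficients of constant size are unavoidable in general (Cramer's rule only gives $\|X^*_i\|_\infty\le 1$; e.g.\ a duplicated column forces a coefficient equal to $1$). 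Since each column of $\Delta_S$ carries $\ell_1$ mass about $n$, the overhead $\|\Delta_SX^*_i\|_1$ is $\Theta(n)$ per column, i.e.\ a constant-factor loss, not $O(\epsilon n)$; your sentence that ``the expectation differs from $\|\Delta\|_1$ by at most $\tfrac{\epsilon}{2}\|\Delta\|_1$ once the representation weights from Stage~2 are properly reweighted'' is precisely where a new idea is required and none is supplied ($\ell_1$ sensitivity sampling yields a subspace embedding and a bounded $\|X^*\|_1$, neither of which reduces this term). The paper's mechanism, absent from your proposal, is to express each $A^*_i$ as the \emph{average} of $t=\poly(k/\epsilon)$ representations over disjoint, independently sampled blocks $S_1,\dots,S_t$, each with Cramer-rule coefficients of absolute value at most $1$; because the noise columns of different blocks are independent and symmetric with bounded $p$-th moment, the combined noise satisfies $\frac{1}{t}\bigl\|\sum_{j}\Delta_{S_j}x_j\bigr\|_1=O(q^{1/p}t^{1/p-1}n)$ (the paper's averaging lemma), and only this drives the per-column overhead down to $O(\epsilon n)$.

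There are secondary gaps as well. Stage~1's claim that $O(k\log n)$ columns sampled by column $\ell_1$ mass span the column space of $A^*$ with high probability is false in general: a signal direction may appear in only a few columns and be drowned by noise, and column norms are dominated by noise so carry no information about $A^*$. The paper never needs exact span capture; it shows only that \emph{most} target columns admit bounded-coefficient representations from a random block (a ``good tuple''), and the exceptional columns --- together with the heavy columns containing entries larger than $n^{1/2+1/(2p)}$, for which averaging and per-column concentration both fail --- are collected into a small set $T$ whose total noise mass is $O(\epsilon n^2/(k\log k))$, on which the known $O(k\log k)$-approximate bicriteria algorithm is run; that fallback is where the extra $O(k\log n)$ columns in the statement come from. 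Your proposal has no fallback for uncovered columns and no treatment of heavy columns, and without excluding the latter the uniform concentration step over your net would also break.
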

Note the running time in Theorem \ref{thm:intro_l1_algorithm} is nearly linear in the number of non-zero entries of $A$, 
since for an $n \times n$ matrix with i.i.d. noise drawn from any continuous distribution, the number of non-zero entries of $A$ will be $n^2$ with probability $1$. 
We also show the moment assumption of Theorem \ref{thm:intro_l1_algorithm} is
necessary in the following precise sense.
\begin{theorem}[Hardness, informal version of Theorem~\ref{thm:dis_l1_hardness}]\label{thm:intro_l1_hardness}
Let $n>0$ be sufficiently large. Let $A=\eta\cdot \mathbf{1}\cdot\mathbf{1}^\top+\Delta\in\mathbb{R}^{n\times n}$ be a random matrix where $\eta=n^{c_0}$ for some sufficiently large constant $c_0,$ $\mathbf{1}\in\mathbb{R}^n$ is the all-ones vector, and $\forall i,j\in[n],\Delta_{i,j}\sim C(0,1)$ are i.i.d. standard Cauchy random variables. Let $r=n^{o(1)}.$ Then with probability at least $1-O(1/\log\log n),$ $\forall S\subseteq [n]$ with $|S|=r,$
\begin{align*}
\min_{X\in\mathbb{R}^{r\times n}} \|A_SX-A\|_1\geq 1.002\|\Delta\|_1.
\end{align*}
\end{theorem}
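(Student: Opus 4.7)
The plan is to lower-bound the cost column by column, exploiting Cauchy stability together with the huge signal $\eta\mathbf{1}\mathbf{1}^\top$ to force each residual column to be roughly twice as heavy in $\ell_1$ as a corresponding column of $\Delta$, which aggregates to an asymptotic factor-$2$ gap against $\|\Delta\|_1$ -- easily beating the $1.002$ target for $n$ large.

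First I would decompose $\|A_SX-A\|_1 = \sum_{j\in[n]} \|A_Sx_j - A_j\|_1$ and write $A_Sx_j - A_j = \eta(\mathbf{1}^\top x_j - 1)\mathbf{1} + \Delta_Sx_j - \Delta_j$. Because $\eta = n^{c_0}$ dwarfs the typical per-column noise $\|\Delta_j\|_1 = \Theta(n\log n)$, any solution competitive with $1.002\|\Delta\|_1$ must satisfy $\mathbf{1}^\top x_j = 1$ up to negligible slack; columns $j\in S$ are then fit exactly by taking $x_j$ to be the standard basis vector indexing $j$, and for $j\notin S$ we are reduced to bounding $\min_{\mathbf{1}^\top x_j=1}\|\Delta_Sx_j - \Delta_j\|_1$. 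Cauchy stability is key: each coordinate of $\Delta_Sx_j-\Delta_j$ is distributed as $C(0,\|x_j\|_1+1)$, and the triangle inequality $\|x_j\|_1\geq|\mathbf{1}^\top x_j|=1$ makes this scale at least $2$, so any fixed feasible $x_j$ yields a residual column asymptotically twice as heavy in $\ell_1$ as a fresh column of $\Delta$.

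To handle the data-dependent optimum $x_j^*$, I would invoke the LP structure of $\ell_1$-regression under the equality constraint: every basic optimal $x_j^*$ fits exactly $r-1$ rows, so $x_j^*$ ranges over a finite family indexed by subsets $K\in\binom{[n]}{r-1}$ of ``tight'' rows, with $x_j^*$ determined by $\Delta$ restricted to rows $K$ together with the constraint $\mathbf{1}^\top x_j^* = 1$. Conditional on $K$ and on $(\Delta_{k,\cdot})_{k\in K}$, the remaining rows of $\Delta$ are i.i.d.\ standard Cauchys independent of $x_j^*$; by stability the residuals at those rows are i.i.d.\ $C(0,\|x_j^*\|_1 + 1)$ with scale $\geq 2$. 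Applying a truncation-based Bernstein concentration to these $(n-r+1)$ i.i.d.\ Cauchys, together with a parallel concentration showing $\|\Delta\|_1 = (1\pm o(1))\cdot(4/\pi)n^2\log n$ on the $n^2$ i.i.d.\ entries, and aggregating the column-level bound across $j\notin S$ (the $n-r$ scale-$\geq 2$ residual columns collectively behave like $2$ times a fresh copy of $\Delta$, up to $o(1)$ relative correction from the $r$ missing columns), then yields $\min_X\|A_SX-A\|_1 \geq (2-o(1))\|\Delta\|_1 \geq 1.002\|\Delta\|_1$ after a union bound over $S\in\binom{[n]}{r}$, $K\in\binom{[n]}{r-1}$, and $j\in[n]$.

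The main obstacle is precisely this concentration plus union bound: because the Cauchy distribution has infinite first moment, the variance of a truncated $\ell_1$-sum scales linearly with the truncation threshold $T$, and obtaining per-event failure probability small enough to survive the union bound over $\sim n^{2r} = e^{n^{o(1)}\log n}$ LP-optimum configurations requires tuning $T \approx n^{1+\Theta(1)}$ and sharpening Bernstein-style inequalities for heavy-tailed variables. The other ingredients -- the $\eta$-reduction, Cauchy stability, and the LP characterization of the optimum -- are conceptually straightforward.
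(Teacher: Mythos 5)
Your reduction to the noise-only problem (forcing $\mathbf{1}^\top x_j\approx 1$ via the huge $\eta$ term, which mirrors the paper's Fact~\ref{fac:bound_of_sum_of_alpha}), the $1$-stability observation that each residual coordinate is $C(0,\|x_j\|_1+1)$ with scale $\geq 2$ for a \emph{fixed} $x_j$, and the idea of handling the data-dependent optimum by enumerating basic (LP-vertex) solutions instead of an $\epsilon$-net are all reasonable, and the last is a legitimate alternative to the paper's net argument (Lemmas~\ref{lem:l1_eps_net_size}, \ref{lem:for_all_alpha_can_not_be_too_large}, \ref{lem:bound_for_small_part}). But the central quantitative claim --- that the residual columns ``collectively behave like $2$ times a fresh copy of $\Delta$,'' giving $(2-o(1))\|\Delta\|_1$ --- is false, and the argument as written does not even reach the $1.002$ threshold. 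The point is that $\|\Delta\|_1\approx\frac{4}{\pi}n^2\ln n$ only because about half of its mass comes from matrix-wide heavy entries of magnitude in $[n,n^2]$ (effective truncation $\ln(n^2)=2\ln n$ per entry). Your concentration step is necessarily applied per column after truncating at roughly $n^{1-o(1)}$ (otherwise the truncated-Bernstein failure probability is not $2^{-n^{\Theta(1)}}$ and cannot survive the union bound), so for each candidate $x_j$ it certifies only about $\frac{2(\|x_j\|_1+1)}{\pi}\,n\ln n\geq\frac{4}{\pi}n\ln n\,(1-o(1))$ per column, hence about $\frac{4}{\pi}n^2\ln n\approx(1-o(1))\|\Delta\|_1$ in total --- slightly \emph{below} $\|\Delta\|_1$, not twice it. The residual matrix cannot have the heavy-tail profile of a fresh scale-$2$ Cauchy matrix: its entries of magnitude in $[n^{1.0001},n^2]$ are essentially just the heavy entries of $\Delta_{[n]\setminus S}$ themselves (there the residual is $\approx|\Delta_{i,j}|$, counted once, not $2|\Delta_{i,j}|$), since $\Delta_S$ has only $n^{1+o(1)}$ entries and with high probability none above $n^{1.0001}$, so $A_Sx_j$ cannot supply that mass.

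Consequently the missing ingredient is exactly the part of the paper's proof your plan omits: a separate accounting of the un-fittable heavy entries of $\Delta_{[n]\setminus S}$. The paper shows, via a well-conditioned basis for the $r$-dimensional column span of $A_S$ (Definition~\ref{def:well_conditioned_basis_for_l1}, Lemma~\ref{lem:size_of_the_bad_region}), that any vector in that span with $\ell_1$-norm $O(n^2\ln n)$ can exceed $n^{1.0001}$ on at most $n^{0.99999}$ rows (the ``bad region''), and then that the heavy entries $|\Delta_{i,j}|\geq n^{1.0002}$ avoid this region and contribute $\geq\frac{1.996}{\pi}n^2\ln n$ on their own (Lemmas~\ref{lem:analysis_for_large_part}, \ref{lem:bound_for_large_part}); only when this is added to a per-column small-entry bound (the paper gets $\frac{2.024}{\pi}n\ln n$ per column via the sign-agreement/disagreement analysis) does the total exceed $\|\Delta\|_1$ and yield the $1.002$ factor. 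Without some such argument, no per-column concentration on scale-$2$ residuals can push you past $\|\Delta\|_1$; and with it, the achievable constant is still around $1.5$ at best heuristically, never $2-o(1)$. So you should keep the $\eta$-reduction and (optionally) the basis-enumeration device, but replace the factor-$2$ aggregation step with a split into large and small coordinates and an argument that the low-dimensional span of $A_S$ cannot match the large ones.
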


\subsection{Our Techniques}

For an overview of our hardness result, we refer readers to the supplementary material, namely, Appendix~\ref{sec:hard}. 
In the following, we will outline the main techniques used in our algorithm.
\paragraph{$(1+\epsilon)$-Approximate $\ell_1$-Low Rank Approximation.}

We make the following distributional assumption on the input matrix $A\in\mathbb{R}^{n\times n}$: namely, $A = A^*+ \Delta$ where $A^*$ is an arbitrary
rank-$k$ matrix and the entries of $\Delta$ are i.i.d. from any symmetric
distribution with $\E[|\Delta_{i,j}|] = 1$
and $\E[|\Delta_{i,j}|^p] = O(1)$ for any real number $p$ strictly greater than $1$, e.g., $p = 1.000001$ would
suffice. Note that such an assumption is mild compared to typical noise models which require the noise
be Gaussian or have bounded variance; in our case the random variables may even be heavy-tailed with infinite variance.
In this setting we show it is possible to obtain a subset of $\poly(k(\epsilon^{-1} + \log n))$ columns spanning a
$(1+\epsilon)$-approximation. This provably overcomes the column subset selection lower bound of \cite{swz17}
which shows for entrywise $\ell_1$-low rank approximation that there are matrices for which any subset of $\poly(k)$
columns spans at best a $k^{\Omega(1)}$-approximation.

Consider the following algorithm: sample $\poly(k/\epsilon)$  columns of $A$, and try to cover as many of the remaining columns as possible. Here, by covering a column $i$, we mean that if $A_I$ is the subset of columns sampled, then $\min_y \|A_I y - A_i\|_1 \leq (1+O(\epsilon))n$. The reason for this notion of covering is that we are able to show in Lemma \ref{lem:lower_bound_on_cost} that in this noise model,  $\|\Delta\|_1 \geq (1-\epsilon)n^2$ w.h.p., and so if we could cover every column $i$, our overall cost would be $(1+O(\epsilon))n^2$, which would give a $(1+O(\epsilon))$-approximation to the overall cost.

We will not be able to cover all columns, unfortunately, with our initial sample of $\poly(k/\epsilon)$ columns of $A$. Instead, though, we will show that we will be able to cover all but a set $T$ of $\epsilon n/(k \log k)$ of the columns. Fortunately, we show in Lemma \ref{lem:remaining_is_small} another property of the noise matrix $\Delta$ is that {\it all} subsets $S$ of columns of size at most $n/r$, for $r \geq (1/\gamma)^{1+1/(p-1)}$ satisfy $\sum_{j \in S}\|\Delta_j\|_1 = O(\gamma n^2)$. Thus, for the above set $T$ that we do not cover, we can apply this lemma to it with $\gamma = \epsilon/(k \log k)$, and then we know that $\sum_{j \in T}\|\Delta_j\|_1 = O(\epsilon n^2/(k \log k))$, which then enables us to run a previous $\tilde{O}(k)$-approximate $\ell_1$ low rank approximation algorithm \cite{cgklrw17} on the set $T$, which will only incur total cost $O(\epsilon n^2)$, and since by Lemma \ref{lem:lower_bound_on_cost} above the overall cost is at least $(1-\epsilon)n^2$, we can still obtain a $(1+O(\epsilon))$-approximation overall.

The main missing piece of the algorithm to describe is why we are able to cover all but a small fraction of the columns. One thing to note is that our noise distribution may not have a finite variance, and consequently, there {\it can} be very large entries $\Delta_{i,j}$ in some columns. In Lemma \ref{lem:hard_one_is_small}, we show the number of columns in $\Delta$ for which there exists an entry larger than $n^{1/2+1/(2p)}$ in magnitude is $O(n^{(2-p)/2})$, which since $p > 1$ is a constant bounded away from $1$, is sublinear. Let us call this set with entries larger than $n^{1/2+1/(2p)}$ in magnitude the set $H$ of ``heavy" columns; we will not make any guarantees about $H$, rather, we will stuff it into the small set $T$ of columns above on which we will run our earlier $O(k \log k)$-approximation.

For the remaining, non-heavy columns, which constitute almost all of our columns, we show in Lemma \ref{lem:easy_one_is_concentrated} that $\|\Delta_i\|_1 \leq (1+\epsilon) n$ w.h.p. The reason this is important is that recall to cover some column $i$ by a sample set $I$ of columns, we need $\min_y \|A_I y - A_i\|_1 \leq (1+O(\epsilon))n$. It turns out, as we now explain, that we will get $\min_y \|A_Iy-A_i\|_1 \leq \|\Delta_i\|_1 + e_i$, where $e_i$ is a quantity which we can control and make $O(\epsilon n)$ by increasing our sample size $I$. Consequently, since $\|\Delta_i\|_1 \leq (1+\epsilon) n$, overall we will have $\min_y \|A_Iy-A_i\|_1 \leq (1+O(\epsilon))n$, which means that $i$ will be covered. We now explain what $e_i$ is, and why $\min_y \|A_Iy-A_i\|_1 \leq \|\Delta_i\|_1 + e_i$.

Towards this end, we first explain a key insight in this model. Since the $p$-th moment exists for some real number $p > 1$ (e.g., $p = 1.000001$ suffices), {\it averaging} helps reduce the noise of fitting a column $A_i$ by subsets of other columns.  Namely, we show in Lemma \ref{lem:averaging_works} that for any $t$ non-heavy column $\Delta_{i_1}, \ldots, \Delta_{i_t}$ of $\Delta$, and any coefficients $\alpha_1, \alpha_2, \ldots, \alpha_t \in [-1,1]$, $\|\sum_{j=1}^t \alpha_j \Delta_{i_j}\|_1 = O(t^{1/p} n)$, that is, since the individual coordinates of the $\Delta_{i_j}$ are zero-mean random variables, their sum {\it concentrates} as we add up more columns. We do not need bounded variance for this property.

How can we use this averaging property for subset selection? The idea is, instead of sampling a single subset $I$ of $O(k)$ columns and trying to cover each remaining column with this subset as shown in \cite{cgklrw17}, we will sample multiple independent subsets $I_1, I_2, \ldots, I_t$. Each set has size $\poly(k/\varepsilon)$ and we will sample at most $\poly(k/\varepsilon)$ subsets. 
By a similar argument of \cite{cgklrw17},
 for any given column index $i \in [n]$, for most of these subset $I_j$, we have that $A^*_i/\|\Delta_i\|_1$ can be expressed as a linear combination of columns $A^*_{\ell}/\|\Delta_{\ell}\|_1, \ell \in I_j,$ via coefficients of absolute value at most $1$. Note that this is only true for most $i$ and most $j$; we develop terminology for this in Definitions \ref{def:tuple}, \ref{def:core_and_good_tuple}, \ref{def:core_tuple}, and \ref{def:coefficients_tuple}, referring to what we call a {\it good core}. We quantify what we mean by most $i$ and most $j$ having this property in Lemma \ref{lem:label_uniform_samples} and Lemma \ref{lem:easy_to_find_good_tuple}.

The key though, that drives the analysis, is Lemma \ref{lem:good_tuple_low_cost}, which shows that $\min_y \|A_iy-A_i\|_1 \leq \|\Delta_i\|_1 + e_i$, where $e_i = O(q^{1/p}/t^{1-1/p} n)$, where $q$ is the size of each $I_j$, and $t$ is the number of different $I_j$. We need $q$ to be at least $k$, just as before, so that we can be guaranteed that when we adjoin a column index $i$ to $I_j$,
there is some positive probability that $A^*_i/\|\Delta_i\|_1$ can be expressed as a linear combination of columns $A^*_{\ell}/\|\Delta_{\ell}\|_1, \ell \in I_j$, with coefficients of absolute value at most $1$. What is different in our noise model though is the division by $t^{1-1/p}$. Since $p > 1$, if we set $t$ to be a large enough $\poly(k/\epsilon)$, then $e_i = O(\epsilon n)$, and then we will have covered $A_i$, as desired. This captures the main property that averaging the linear combinations for expression $A^*_i/\|\Delta_i\|_1$ using different subsets $I_j$ gives us better and better approximations to $A^*_i/\|\Delta_i\|_1$. Of course we need to ensure several properties such as not sampling a heavy column (the averaging in Lemma \ref{lem:averaging_works} does not apply when this happens), we need to ensure most of the $I_j$ have small-coefficient linear combinations expressing $A^*_i/\|\Delta_i\|_1$, etc. This is handled in our main theorem, Theorem \ref{thm:dis_l1_algorithm}.

\section{$\ell_1$-Norm Column Subset Selection}\label{sec:dis}

We first present two subroutines.

{\bf Linear regression with $\ell_1$ loss.} The first subroutine needed is an approximate $\ell_1$ linear regression solver.
In particular, given a matrix $M\in\mathbb{R}^{n\times d}$, $n$ vectors $b_1,b_2,\cdots,b_n\in\mathbb{R}^n$, and an error parameter $\varepsilon\in(0,1)$, we want to compute $x_1,x_2,\cdots,x_n\in\mathbb{R}^d$ for which $\forall i\in[n]$, we have 
\begin{align*}
\|Mx_i-b_i\|_1\leq (1+\varepsilon)\cdot\min_{x\in\mathbb{R}^d} \|Mx-b_i\|_1.
\end{align*}
Furthermore, we also need an estimate $v_i$ of the regression cost $\|Mx_i-b_i\|_1$ for each $i\in[n]$ such that $\|Mx_i-b_i\|_1\leq v_i\leq (1+\varepsilon)\|Mx_i-b_i\|_1$.
Such an $\ell_1$-regression problem can be solved efficiently (see \cite{w14} for a survey). 
The total running time to solve these $n$ regression problems simultaneously is at most $\wt{O}(n^2)+n\cdot \poly(d\log n)$, and the success probability is at least $0.999$.

{\bf $\ell_1$ Column subset selection for general matrices.} 
The second subroutine needed is an $\ell_1$-low rank approximation solver for general input matrices, though we allow a large approximation ratio.
We use the algorithm proposed by \cite{cgklrw17} for this purpose. 
In particular, given an $n\times d$ $(d\leq n)$ matrix $M$ and a rank parameter $k$, the algorithm can output a small set $S\subset[n]$ with size at most $O(k\log n)$, such that 
\begin{align*}
\min_{X\in\mathbb{R}^{|S|\times d}}\|M_SX-M\|_1\leq O(k\log k)\cdot \min_{\rank-k\ B}\|M-B\|_1.
\end{align*}
Furthermore, the running time is at most $\wt{O}(n^2)+n\cdot\poly(k\log n)$, and the success probability is at least $0.999$. Now we can present our algorithm, Algorithm~\ref{alg:dis_l1_algorithm}.
\begin{algorithm}
	\begin{algorithmic}[1]\caption{$\ell_1$-Low Rank Approximation with Input Assumption}\label{alg:dis_l1_algorithm}
		\Procedure{\textsc{L1NoisyLowRankApprox}}{$A\in\mathbb{R}^{n\times n},k,\varepsilon$} \Comment{Theorem~\ref{thm:dis_l1_algorithm}}
		\State Sample a set $I$ from ${[n] \choose s}$ uniformly at random, where $s=\poly(k/\varepsilon).$ \label{sta:uniform_sample}
		\State Solve the approximate $\ell_1$-regression problem $\min_{x\in\mathbb{R}^{|I|}}\|A_I x - A_i\|_1$ for each $i\in[n]$, and let $v_i$ be the estimated regression cost. 
		\label{sta:l1_regression}
		\State Compute the set $T=\{i\in[n]\mid v_i\text{ is one of the top }l\text{ largest values among } v_1,v_2,\cdots,v_n\}$, where $l=n/\poly(k/\varepsilon)$. 
		\State Solve $\ell_1$-column subset selection for $A_T.$ Let the solution be $A_Q$. 
		\State Solve the approximate $\ell_1$-regression problem $\min_{X\in\mathbb{R}^{(|I|+|Q|)\times n}}\|A_{(I\cup Q)}X - A\|_1$, and let $\hat{X}$ be the solution.
		Return $A_{(I\cup Q)}$ and $\hat{X}$. \Comment{$A_{(I\cup Q)}\hat{X}$ is a good low rank approximation to $A$}
		\EndProcedure
	\end{algorithmic}
\end{algorithm}

{\bf Running time.} Uniformly sampling a set $I$ can be done in $\poly(k/\varepsilon)$ time.
According to our $\ell_1$-regression subroutine, solving $\min_x \|A_Ix-A_i\|_1$ for all $i\in[n]$ can be finished in $\wt{O}(n^2)+n\cdot\poly(k\log(n)/\varepsilon)$ time.
We only need sorting to compute the set $T$ which takes $O(n\log n)$ time.
By our second subroutine, the $\ell_1$-column subset selection for $A_T$ will take $\wt{O}(n^2)+n\cdot\poly(k\log n)$.
The last step only needs an $\ell_1$-regression solver, which takes $\wt{O}(n^2)+n\cdot\poly(k\log(n)/\varepsilon)$ time.
Thus, the overall running time is $\wt{O}(n^2)+n\cdot\poly(k\log(n)/\varepsilon)$.

The remaining parts in this section will focus on analyzing the correctness of the algorithm.

\subsection{
Properties of the Noise Matrix
}

Recall that the input matrix $A\in\mathbb{R}^{n\times n}$ can be decomposed as $A^*+\Delta$, where $A^*$ is the ground truth, and $\Delta$ is a random noise matrix. 
In particular, $A^*$ is an arbitrary rank-$k$ matrix, and $\Delta$ is a random matrix where each entry is an i.i.d. sample drawn from an unknown symmetric distribution. 
The only assumption on $\Delta$ is that each entry $\Delta_{i,j}$ satisfies $\E[|\Delta_{i,j}|^p]=O(\E[|\Delta_{i,j}|^p])$ for some constant $p>1$, i.e., the $p$-th moment of the noise distribution is bounded.
Without loss of generality, we will suppose $\E[|\Delta_{i,j}|]=1$, $\E[|\Delta_{i,j}|^p]=O(1)$, and $p\in(1,2)$ throughout the paper. In this section, we will present some key properties of the noise matrix.

The following lemma provides a lower bound on $\|\Delta\|_1$. 
Once we have the such lower bound, we can focus on finding a solution for which the approximation cost is at most that lower bound.
\begin{lemma}[Lower bound on the noise matrix]\label{lem:lower_bound_on_cost}
Let $\Delta\in\mathbb{R}^{n\times n}$ be a random matrix where $\Delta_{i,j}$ are i.i.d. samples drawn from a symmetric distribution. 
Suppose $\E[|\Delta_{i,j}|]=1$ and $\E[|\Delta_{i,j}|^p]=O(1)$ for some constant $p\in(1,2)$. 
Then, $\forall \epsilon\in(0,1)$ which satisfies $1/\epsilon=n^{o(1)},$ we have
\begin{align*}
\Pr \left[ \|\Delta\|_1\geq(1-\epsilon)n^2 \right]\geq 1-e^{-\Theta(n)}.
\end{align*}
\end{lemma}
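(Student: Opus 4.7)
The plan is to reduce to a standard Hoeffding bound via truncation. Writing $\|\Delta\|_1 = \sum_{i,j}|\Delta_{i,j}|$, this is a sum of $n^2$ i.i.d. nonnegative random variables each with mean $1$, so we only need a one-sided lower-tail concentration. The obstacle is that we only have a finite $p$-th moment for some $p\in(1,2)$; in particular the variance may be infinite, so Bernstein-type inequalities do not apply directly. Truncation will cure this, and the hypothesis $1/\epsilon = n^{o(1)}$ together with $p$ being a constant strictly greater than $1$ will give us enough room to make the truncation threshold a tiny (sub-polynomial) quantity.

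Concretely, fix a truncation level $\tau = (C/\epsilon)^{1/(p-1)}$, where $C$ is a constant large enough that $C' \tau^{1-p}\le \epsilon/2$, with $C'$ the constant hidden in the $p$-th moment bound $\E[|\Delta_{i,j}|^p]=O(1)$. Because $p>1$ is constant and $1/\epsilon = n^{o(1)}$, we have $\tau = n^{o(1)}$. Define the truncated variables
\begin{align*}
Y_{i,j} := |\Delta_{i,j}|\cdot \mathbf{1}\!\left[|\Delta_{i,j}|\le \tau\right],
\end{align*}
so $0\le Y_{i,j}\le \tau$ and $Y_{i,j}\le |\Delta_{i,j}|$. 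Using the elementary bound $x\le x^p\tau^{1-p}$ when $x>\tau$, the tail contribution to the mean satisfies
\begin{align*}
\E\bigl[|\Delta_{i,j}|-Y_{i,j}\bigr] \le \tau^{1-p}\,\E[|\Delta_{i,j}|^p] \le C'\tau^{1-p}\le \epsilon/2,
\end{align*}
so $\E[Y_{i,j}]\ge 1-\epsilon/2$ and therefore $\E\bigl[\sum_{i,j} Y_{i,j}\bigr]\ge (1-\epsilon/2)n^2$.

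Now apply Hoeffding's inequality to the $n^2$ independent variables $Y_{i,j}\in[0,\tau]$ with deviation $(\epsilon/2)n^2$:
\begin{align*}
\Pr\!\left[\,\sum_{i,j} Y_{i,j} < (1-\epsilon)n^2\right] \le \exp\!\left(-\frac{2(\epsilon n^2/2)^2}{n^2\tau^2}\right) = \exp\!\left(-\Omega\!\left(\frac{\epsilon^2 n^2}{\tau^2}\right)\right).
\end{align*}
Since $\tau = n^{o(1)}$ and $\epsilon = n^{-o(1)}$, the exponent is $n^{2-o(1)}$, which is much larger than $n$, so the failure probability is at most $e^{-\Theta(n)}$ (in fact $e^{-\omega(n)}$). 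Combining with $\|\Delta\|_1 \ge \sum_{i,j} Y_{i,j}$ yields the claimed lower bound.

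The hardest part is just getting the quantitative balance right: the threshold $\tau$ must be large enough that the truncated mean is within $\epsilon/2$ of $1$ (which forces $\tau \gtrsim \epsilon^{-1/(p-1)}$), yet small enough that Hoeffding with range $\tau$ still yields the required $e^{-\Theta(n)}$ bound (which requires $\epsilon^2 n^2/\tau^2 = \Omega(n)$, i.e.\ $\tau = O(\epsilon\sqrt n)$). The hypotheses $p>1$ constant and $1/\epsilon = n^{o(1)}$ exactly guarantee both ranges are compatible, since then $\epsilon^{-1/(p-1)} = n^{o(1)} \ll n^{1/2-o(1)}$. Symmetry of $\Delta_{i,j}$ is not needed for this lemma; only the first and $p$-th absolute moment assumptions are used.
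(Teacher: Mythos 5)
Your proof is correct. Both you and the paper follow the same high-level strategy (truncate the heavy tails, then apply a concentration bound to the truncated sum and use that the truncated sum lower bounds $\|\Delta\|_1$), but the details differ in an instructive way. The paper truncates at the large threshold $n$, which costs only $O(1/n^{p-1})$ in the mean, but leaves variables with range up to $n$; it then needs a second-moment computation ($\E[Z_{i,j}^2]=O(n^{2-p})$) and invokes Maurer's one-sided Bernstein-type inequality for sums of nonnegative variables to get a lower-tail bound of $e^{-\Theta(\epsilon^2 n^p)}\le e^{-\Theta(n)}$. You instead truncate at the much smaller, $\epsilon$-dependent threshold $\tau=(C/\epsilon)^{1/(p-1)}=n^{o(1)}$, absorbing the resulting $\epsilon/2$ loss in the mean into the slack of the statement, which makes the variables bounded by $n^{o(1)}$ and lets plain Hoeffding finish the job with an even stronger failure probability $e^{-n^{2-o(1)}}$. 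Your route is more elementary (no second-moment estimate, no citation of a specialized inequality) and makes transparent exactly where the hypotheses $p>1$ constant and $1/\epsilon=n^{o(1)}$ are used, at the mild cost of a truncation level that depends on $\epsilon$; your observation that symmetry of the distribution is not needed here is also accurate, as the paper's argument likewise uses only the first and $p$-th absolute moments.
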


The next lemma shows the main reason why we are able to get a small fitting cost when running regression.
Consider a toy example.
Suppose we have a target number $a\in\mathbb{R}$, and another $t$ numbers $a+g_1,a+g_2,\cdots,a+g_t\in\mathbb{R}$, where $g_i$ are i.i.d. samples drawn from the standard Gaussian distribution $N(0,1)$. 
If we use $a+g_i$ to fit $a$, then the expected cost is $\E[|a+g_i-a|]=\E[|g_i|]=\sqrt{2/\pi}$.
However, if we use the average of $a+g_1,a+g_2,\cdots,a+g_t$ to fit $a$, then the expected cost is $\E[|\sum_{i=1}^t g_i|/t]$. 
Since the $g_i$ are independent, $\sum_{i=1}^t g_i$ is a random Gaussian variable with variance $t$, which means that the above expected cost is $\sqrt{2/\pi }/\sqrt{t}$.
Thus the fitting cost is reduced by a factor $\sqrt{t}$. 
By generalizing the above argument, we obtain the following lemma. 

\begin{lemma}[Averaging reduces the noise]\label{lem:averaging_works}
Let $\Delta_1,\Delta_2,\cdots,\Delta_t\in\mathbb{R}^n$ be $t$ random vectors. The $\Delta_{i,j}$ are i.i.d. symmetric random variables with $\E[|\Delta_{i,j}|]=1$ and $\E[|\Delta_{i,j}|^p]=O(1)$ for some constant $p\in(1,2)$. Let $\alpha_1,\alpha_2,\cdots,\alpha_t\in[-1,1]$ be $t$ real numbers. Conditioned on $\forall i\in[n],j\in[t],|\Delta_{i,j}|\leq n^{1/2+1/(2p)},$ with probability at least $1-2^{-n^{\Theta(1)}},$
\begin{align*}
\left\|\sum_{i=1}^t\alpha_i\Delta_i \right\|_1\leq O(t^{1/p}n).
\end{align*}
\end{lemma}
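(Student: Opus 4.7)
I write $\|\sum_{\ell=1}^t \alpha_\ell \Delta_\ell\|_1 = \sum_{i=1}^n |Y_i|$ where $Y_i := \sum_{\ell=1}^t \alpha_\ell \Delta_{i,\ell}$, and set $M := n^{1/2+1/(2p)}$. Because different rows of $\Delta$ are independent and the conditioning event $\{\forall i,\ell : |\Delta_{i,\ell}| \leq M\}$ is a product of per-entry events, the truncated $|Y_i|$'s remain independent across $i$. The plan is to bound $\mathbb{E}|Y_i|$ row by row, exploiting independence within a row through a $p$-th moment inequality (von Bahr--Esseen), and then to concentrate $\sum_i |Y_i|$ around its mean via Bernstein's inequality, exploiting independence across rows.

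\textbf{Step 1: row-wise mean bound.} The conditional distribution of each $\Delta_{i,\ell}$ is still symmetric (truncation to $[-M,M]$ preserves symmetry) and still has $p$-th moment $O(1)$, since $\Pr[|\Delta_{i,\ell}|\leq M] \geq 1 - O(M^{-p})$ while the original $p$-th moment is $O(1)$. Applying the von Bahr--Esseen inequality (valid for $p\in[1,2]$ with independent symmetric summands),
\begin{align*}
\mathbb{E}|Y_i|^p \leq 2\sum_{\ell=1}^t |\alpha_\ell|^p\,\mathbb{E}|\Delta_{i,\ell}|^p = O(t).
\end{align*}
Jensen's inequality for the concave map $x \mapsto x^{1/p}$ then gives $\mathbb{E}|Y_i| \leq (\mathbb{E}|Y_i|^p)^{1/p} = O(t^{1/p})$, and summing over $i$, $\mathbb{E}\sum_i |Y_i| \leq O(nt^{1/p})$.

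\textbf{Step 2: variance bound and Bernstein concentration.} Under the truncation, $|Y_i| \leq tM$ deterministically. The pointwise inequality $x^2 \leq M^{2-p}|x|^p$ valid for $|x|\leq M$ yields $\mathbb{E}\Delta_{i,\ell}^2 \leq M^{2-p}\cdot O(1) = O(M^{2-p})$; symmetry kills the cross terms, so $\mathrm{Var}(|Y_i|) \leq \mathbb{E}Y_i^2 \leq O(tM^{2-p})$. Setting $B := tM$ and $\sigma^2 := \sum_i \mathrm{Var}(|Y_i|) = O(ntM^{2-p})$, Bernstein's inequality applied to the independent $|Y_i|$'s with deviation $a := Cnt^{1/p}$ for a large constant $C$ gives
\begin{align*}
\Pr\left[\sum_i |Y_i| > \mathbb{E}\sum_i |Y_i| + a\right] \leq \exp\left(-\frac{a^2/2}{\sigma^2 + Ba/3}\right).
\end{align*}
A direct computation shows $\sigma^2/(Ba) = O(M^{1-p}/t^{1/p}) = n^{-\Omega(1)}$ for fixed $p>1$, so the linear term $Ba/3$ dominates the denominator. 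The exponent then simplifies to $\Theta(a/B) = \Theta\!\left(n^{(p-1)/(2p)}/t^{(p-1)/p}\right)$, which is $n^{\Omega(1)}$ for $p$ bounded away from $1$ and $t$ a small enough polynomial in $n$ (in particular when $t = n^{o(1)}$, as in every use of the lemma). This yields the claimed $2^{-n^{\Theta(1)}}$ failure probability.

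\textbf{Main obstacle.} The technical heart is the calibration in Step 2: one has to check that the truncation threshold $M = n^{1/2+1/(2p)}$ is precisely such that (i) conditioning on $\{|\Delta|\leq M\}$ negligibly perturbs the $p$-th moment, and (ii) the Bernstein exponent $a/(tM)$ grows as a positive power of $n$ despite the large pointwise bound $tM$. Both require $p$ bounded strictly above $1$; as $p\downarrow 1$ the averaging gain disappears entirely, which is consistent with the hardness result of Theorem~\ref{thm:intro_l1_hardness}.
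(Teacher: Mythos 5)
Your proposal is correct and follows essentially the same route as the paper: a per-row mean bound of $O(t^{1/p})$ for the truncated sums via a von Bahr--Esseen/Marcinkiewicz--Zygmund-type $p$-th moment inequality plus Jensen, a second-moment bound $O(tM^{2-p})$ using $x^2\le M^{2-p}|x|^p$ under the truncation $M=n^{1/2+1/(2p)}$, and Bernstein's inequality across the $n$ independent rows. The only difference is the range parameter in Bernstein: you correctly use $tM$ (the paper writes $M$), which costs you the restriction that $t$ be a small polynomial in $n$ — a restriction you state explicitly and which holds in every application of the lemma (where $q,t\le n^{o(1)}$).
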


The above lemma needs a condition that each entry in the noise column should not be too large.
Fortunately, we can show that most of the (noise) columns do not have any large entry.
\begin{lemma}[Only a small number of columns have large entries]\label{lem:hard_one_is_small}
	Let $\Delta\in\mathbb{R}^{n\times n}$ be a random matrix where the $\Delta_{i,j}$ are i.i.d. symmetric random variables with $\E[|\Delta_{i,j}|]=1$ and $\E[|\Delta_{i,j}|^p]=O(1)$ for some constant $p\in(1,2)$. Let
	\begin{align*}
	H=\{j\in[n] ~\big|~ \exists i\in[n],|\Delta_{i,j}|>n^{1/2+1/(2p)}\}.
	\end{align*}
	Then with probability at least $0.999$ $|H|\leq O(n^{1-(p-1)/2}).$
\end{lemma}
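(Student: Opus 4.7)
The plan is to control $|H|$ by first bounding the probability that a single entry is ``heavy'' via a $p$-th moment Markov inequality, then union-bounding over the rows within one column, and finally applying Markov to the resulting sum across columns.

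First I would estimate the per-entry tail. Since $\E[|\Delta_{i,j}|^p] = O(1)$, Markov's inequality gives
\begin{align*}
\Pr\!\left[|\Delta_{i,j}| > n^{1/2+1/(2p)}\right] \;\leq\; \frac{\E[|\Delta_{i,j}|^p]}{n^{p(1/2+1/(2p))}} \;=\; O\!\left(n^{-(p+1)/2}\right).
\end{align*}
Next, for a fixed column index $j$, a union bound over the $n$ entries in that column yields
\begin{align*}
\Pr\!\left[j \in H\right] \;=\; \Pr\!\left[\exists i\in[n]:\ |\Delta_{i,j}| > n^{1/2+1/(2p)}\right] \;\leq\; n \cdot O\!\left(n^{-(p+1)/2}\right) \;=\; O\!\left(n^{(1-p)/2}\right).
\end{align*}

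Then I would sum over columns using linearity of expectation. Writing $|H| = \sum_{j=1}^n \mathbf{1}[j\in H]$, one gets
\begin{align*}
\E[|H|] \;\leq\; n \cdot O\!\left(n^{(1-p)/2}\right) \;=\; O\!\left(n^{(3-p)/2}\right) \;=\; O\!\left(n^{1-(p-1)/2}\right).
\end{align*}
A final application of Markov's inequality to the nonnegative integer random variable $|H|$ gives $\Pr[|H| \geq 1000 \cdot \E[|H|]] \leq 1/1000$, so with probability at least $0.999$ we obtain $|H| = O(n^{1-(p-1)/2})$, as claimed.

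There is no real obstacle here: the statement is essentially a two-step tail bound. The only mild subtlety is making sure that the exponent $1/2 + 1/(2p)$ in the heaviness threshold is tuned precisely so that $p \cdot (1/2 + 1/(2p)) = (p+1)/2$, which is what makes the per-entry probability decay like $n^{-(p+1)/2}$ and gives exactly the advertised bound $n^{1-(p-1)/2}$ after the two union/expectation steps. Note also that we do not need the symmetry of the distribution or any independence across different columns for this particular lemma; only the $p$-th moment bound is used, together with linearity of expectation and Markov.
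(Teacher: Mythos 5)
Your proof is correct and follows essentially the same route as the paper: a $p$-th moment Markov bound on each entry (the exponent $p(1/2+1/(2p))=(p+1)/2$ matches the paper's $n^{p/2+1/2}$), a union bound over rows to get $\Pr[j\in H]=O(n^{-(p-1)/2})$, linearity of expectation to get $\E[|H|]=O(n^{1-(p-1)/2})$, and a final Markov step. Your side remark that only the moment bound (not symmetry or cross-column independence) is needed is also accurate.
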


The following lemma shows that any small subset of the columns of the noise matrix $\Delta$ cannot contribute too much to the overall error.
By combining with the previous lemma, the entrywise $\ell_1$ cost of all columns containing large entries can be bounded.
\begin{lemma}
	\label{lem:remaining_is_small}
Let $\Delta\in\mathbb{R}^{n\times n}$ be a random matrix where $\Delta_{i,j}$ are i.i.d. symmetric random variables with $\E[|\Delta_{i,j}|]=1$ and $\E[|\Delta_{i,j}|^p]=O(1)$ for some constant $p\in(1,2)$. Let $\epsilon\in(0,1)$ satisfy $1/\epsilon=n^{o(1)}.$ Let $r\geq (1/\epsilon)^{1+1/(p-1)}.$ Then, with probability at least $.999,$ $\forall S\subset [n]$ with $|S|\leq n/r,$ $\sum_{j\in S}\|\Delta_j\|_1 =  O(\epsilon n^2).$
\end{lemma}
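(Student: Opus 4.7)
The plan is to truncate each entry of $\Delta$ at a single threshold $T$ chosen to balance two competing effects, and then bound separately the contribution of the truncated parts (small entries) and the tail parts (large entries). Concretely, set $T := C\,\epsilon^{-1/(p-1)}$ for a sufficiently large absolute constant $C>0$, and split $|\Delta_{i,j}| = U_{i,j} + V_{i,j}$ where $U_{i,j} = |\Delta_{i,j}|\cdot\mathbf{1}\{|\Delta_{i,j}|\leq T\}$ and $V_{i,j} = |\Delta_{i,j}|\cdot\mathbf{1}\{|\Delta_{i,j}|> T\}$. Writing $U_j := \sum_i U_{i,j}$ and $V_j := \sum_i V_{i,j}$, we have $\|\Delta_j\|_1 = U_j + V_j$, and for every $S\subseteq[n]$,
\begin{align*}
\sum_{j\in S}\|\Delta_j\|_1 \leq \sum_{j\in S}U_j + \sum_{j=1}^{n} V_j.
\end{align*}
The whole strategy is driven by the observation that the exponent $1+\frac{1}{p-1}=\frac{p}{p-1}$ appearing in the hypothesis on $r$ is precisely the exponent that lets both pieces come in at $O(\epsilon n^2)$ once $T$ is tuned as above.

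To bound the tail sum $\sum_j V_j$, I use the standard truncation estimate that follows from the $p$-th moment bound: $\E[V_{i,j}] \leq T^{1-p}\,\E[|\Delta_{i,j}|^p] = O(T^{1-p}) = O(\epsilon)$ by the choice of $T$. Summing over all $n^2$ entries yields $\E[\sum_j V_j] = O(\epsilon n^2)$, so Markov's inequality gives $\sum_j V_j = O(\epsilon n^2)$ with probability at least $0.9995$. This bound does not depend on $S$, so it uniformly dominates $\sum_{j\in S} V_j$ over every admissible $S$.

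For the bulk terms I plan to use Bernstein's inequality: each $U_{i,j}\in[0,T]$ with $\E[U_{i,j}]\leq 1$ and $\E[U_{i,j}^2]\leq T^{2-p}\,\E[|\Delta_{i,j}|^p] = O(T^{2-p})$ (using $|\Delta_{i,j}|^2 = |\Delta_{i,j}|^p \cdot |\Delta_{i,j}|^{2-p} \leq T^{2-p}|\Delta_{i,j}|^p$ on the truncation event), so $U_j$ is a sum of $n$ independent bounded variables with mean at most $n$ and variance $O(nT^{2-p})$. Bernstein yields
\begin{align*}
\Pr[U_j\geq (1+\epsilon)n] \leq \exp\!\left(-\Omega\!\left(\min\!\left(\frac{\epsilon^2 n}{T^{2-p}},\, \frac{\epsilon n}{T}\right)\right)\right) = \exp\!\left(-\Omega(\epsilon^{p/(p-1)}\, n)\right).
\end{align*}
Since $1/\epsilon = n^{o(1)}$, this is $\exp(-n^{1-o(1)})$, which easily absorbs a union bound over all $n$ columns. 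Conditioning on $\max_j U_j\leq (1+\epsilon)n$, every $S$ with $|S|\leq n/r$ satisfies $\sum_{j\in S} U_j \leq (n/r)(1+\epsilon)n \leq 2\epsilon^{p/(p-1)} n^2 \leq 2\epsilon n^2$ (using $p/(p-1)>1$); adding the tail bound completes the argument. The main obstacle is that $T$ must be small enough for Bernstein concentration of $U_j$ yet large enough to make $\E[V_{i,j}]=O(\epsilon)$, and the choice $T=\Theta(\epsilon^{-1/(p-1)})$ works precisely because the hypothesis $r\geq(1/\epsilon)^{1+1/(p-1)}$ is calibrated to match; the additional hypothesis $1/\epsilon=n^{o(1)}$ ensures $T=n^{o(1)}$, so Bernstein handles the union bound over $n$ columns comfortably.
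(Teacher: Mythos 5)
Your proof is correct, but it takes a different route from the paper's. You split at the \emph{entry} level: truncate each $|\Delta_{i,j}|$ at $T=\Theta(\epsilon^{-1/(p-1)})$, bound the global tail mass by $\E[V_{i,j}]\leq T^{1-p}\E[|\Delta_{i,j}|^p]=O(\epsilon)$ plus Markov, and control the bulk per column via Bernstein, then multiply by $|S|\leq n/r$. The paper instead splits at the \emph{column} level: it forms level sets $G_l=\{j:\|\Delta_j\|_1\in(n2^l,n2^{l+1}]\}$, bounds $\E[|G_l|]=O(n/2^{lp})$ using $\|\Delta_j\|_1\leq n^{1-1/p}\|\Delta_j\|_p$ and Markov on $\|\Delta_j\|_p^p$, shows the expected total mass of columns with $\|\Delta_j\|_1\geq n2^{l^*}$ (where $2^{l^*}\approx\epsilon r$) is $O(n^2/(\epsilon r)^{p-1})=O(\epsilon n^2)$, and charges the light columns deterministically at $\epsilon rn$ each times $|S|\leq n/r$. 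The paper's argument is purely first-moment (no concentration inequality, and no real use of $1/\epsilon=n^{o(1)}$ beyond the statement), while your Bernstein step is stronger than necessary: the deterministic bound $U_j\leq nT$ already gives $\sum_{j\in S}U_j\leq n^2T/r\leq O(\epsilon n^2)$ under the stated hypothesis on $r$, so you could drop the concentration entirely. On the other hand, your concentration-based version buys something the paper's proof does not: once $\max_j U_j\leq(1+\epsilon)n$, the bulk bound only needs $|S|\lesssim\epsilon n$, i.e., $r\gtrsim 1/\epsilon$ rather than the full $r\geq(1/\epsilon)^{1+1/(p-1)}$ (which in your argument is really only used, via the choice of $T$, to kill the tail term), at the price of leaning on $1/\epsilon=n^{o(1)}$ so that the exponent $\Omega(\epsilon^{p/(p-1)}n)=n^{1-o(1)}$ survives the union bound over columns. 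One small conceptual nit: it is the tail estimate $T^{1-p}=O(\epsilon)$, not the hypothesis on $r$, that forces $T=\Theta(\epsilon^{-1/(p-1)})$; this does not affect correctness.
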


We say a (noise) column is good if it does not have a large entry.
We can show that, with high probability, the entry-wise $\ell_1$ cost of a good (noise) column is small.

\begin{lemma}[Cost of good noise columns]\label{lem:easy_one_is_concentrated}
Let $\Delta\in\mathbb{R}^{n}$ be a random vector where $\Delta_{i}$ are i.i.d. symmetric random variables with $\E[|\Delta_{i}|]=1$ and $\E[|\Delta_{i}|^p]=O(1)$ for some constant $p\in(1,2)$. Let $\epsilon\in(0,1)$ satisfy $1/\epsilon=n^{o(1)}.$ If $\forall i\in[n],|\Delta_i|\leq n^{1/2+1/(2p)},$ then with probability at least $1-2^{-n^{\Theta(1)}},$ $\|\Delta\|_1\leq (1+\epsilon)n.$
\end{lemma}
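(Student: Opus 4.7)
The strategy is a truncation-plus-Bernstein argument. Set $M := n^{1/2+1/(2p)}$, and let $\tilde Y_i$ denote $|\Delta_i|$ under the conditional law given the event $E_i := \{|\Delta_i| \leq M\}$. By i.i.d.\ independence, conditioning on $E = \bigcap_i E_i$ (the hypothesis of the lemma) yields i.i.d.\ bounded random variables $\tilde Y_1, \ldots, \tilde Y_n \in [0, M]$, and it suffices to prove $\sum_i \tilde Y_i \leq (1+\varepsilon) n$ with conditional probability at least $1 - 2^{-n^{\Theta(1)}}$.

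\textbf{Step 1: control of the conditional moments.} Since $\Pr[|\Delta_i| > M] \leq \E[|\Delta_i|^p]/M^p = O(M^{-p})$, and $M^{-p} = n^{-(p+1)/2} = o(1)$, the conditioning event $E_i$ has probability $1-o(1)$. Hence $\E[\tilde Y_i] = \E[|\Delta_i|\,\mathbf{1}_{E_i}]/\Pr[E_i] \leq 1/(1 - o(1)) = 1 + o(1)$, so $\sum_i \E[\tilde Y_i] \leq (1 + o(1))n$. For the second moment I use the elementary interpolation $x^2 \leq M^{2-p} x^p$ for $x \in [0,M]$: $\E[\tilde Y_i^2] \leq M^{2-p}\, \E[|\Delta_i|^p]/\Pr[E_i] = O(M^{2-p})$.

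\textbf{Step 2: Bernstein's inequality.} Applying Bernstein's inequality to the centered i.i.d.\ bounded variables $\tilde Y_i - \E[\tilde Y_i]$ (bounded by $M$, with variance $O(M^{2-p})$), and taking deviation $t = \tfrac{\varepsilon}{2} n$ (which is enough since $\sum_i \E[\tilde Y_i] \leq n + o(n) \leq n + \tfrac{\varepsilon}{2} n$ for $n$ large), gives
\begin{align*}
\Pr\!\left[\, \sum_i \tilde Y_i > (1+\varepsilon) n \,\Big|\, E \,\right] \;\leq\; \exp\!\left( -\,\Omega\!\left( \min\!\left( \frac{\varepsilon^2 n}{M^{2-p}},\; \frac{\varepsilon n}{M} \right) \right) \right).
\end{align*}
Plugging in $M = n^{1/2 + 1/(2p)}$ and simplifying yields $\varepsilon^2 n / M^{2-p} = \varepsilon^2 \cdot n^{(p-1)(p+2)/(2p)}$ and $\varepsilon n / M = \varepsilon \cdot n^{(p-1)/(2p)}$. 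Both exponents of $n$ are strictly positive for every $p \in (1,2)$, so since $1/\varepsilon = n^{o(1)}$, each quantity is $n^{\Omega(1)}$, and the failure probability is $2^{-n^{\Theta(1)}}$, as required.

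\textbf{Main obstacle.} The only nonroutine point is verifying that the two exponents arising after the Bernstein step are positive \emph{for every} $p \in (1,2)$; this amounts to checking $(p-1)(p+2)/(2p) > 0$ and $(p-1)/(2p) > 0$, both of which hold iff $p > 1$. This is exactly why the $p$-th moment assumption (rather than just finiteness of the first moment) is needed here, and also why the truncation level $M = n^{1/2+1/(2p)}$ from Lemma~\ref{lem:hard_one_is_small} is chosen where it is: it is precisely the scale at which the ``variance after truncation'' term and the ``maximum jump'' term in Bernstein both produce the same $n^{\Omega(1)}$ in the exponent, giving the strong $1 - 2^{-n^{\Theta(1)}}$ concentration rather than merely a polynomial-in-$n$ failure probability.
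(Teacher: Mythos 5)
Your proof is correct and follows essentially the same route as the paper: condition on the truncation event to get i.i.d.\ variables bounded by $M=n^{1/2+1/(2p)}$, bound the conditional mean by roughly $1$ and the conditional second moment by $O(M^{2-p})$ via $x^2\leq M^{2-p}x^p$, and apply Bernstein's inequality, using $1/\epsilon=n^{o(1)}$ to make the exponent $n^{\Theta(1)}$. The only cosmetic differences are that the paper bounds the conditional mean by exactly $1$ (rather than $1+o(1)$ with a half-deviation) and uses the slightly cruder bound $\E[Z_i^2]=O(n^{2-p})$, which changes nothing.
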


\subsection{Definition of Tuples and Cores}
In this section, we provide some basic definitions, e.g., of a tuple, a good tuple, the core of a tuple, and a coefficients tuple. These definitions will be heavily used later when we analyze the correctness of our algorithm.

Before we present the definitions, we introduce a notion $R_{A^*}(S)$.
Given a matrix $A^* \in \R^{n_1 \times n_2}$,
for a set $S \subseteq [n_2]$, we define
\begin{align*}
R_{A^*}(S) := \arg\max_{P:P \subseteq S} \left\{ \left|\det\left( (A^*)_P^Q \right)\right| ~\bigg|~ |P| = |Q|= \rank (A^*_S), Q \subseteq [n_1]  \right\},
\end{align*}
where for a squared matrix $C$, $\det(C)$ denotes the determinant of $C$.
The above maximum is over both $P$ and $Q$ while $R_{A^*}(S)$ only takes the value of the corresponding $P$.

By Cramer's rule, if we use the columns of $A^*$ with index in the set $R_{A^*}(S)$ to fit any column of $A^*$ with index in the set $S$, the absolute value of any fitting coefficient will be at most $1$. 
The use of Cramer's rule is as follows.
Consider a rank $k$ matrix $M\in\mathbb{R}^{n\times(k+1)}$.
Let $P\subseteq[k+1],Q\subseteq [n],|P|=|Q|=k$ be such that $|\det(M_P^Q)|$ is maximized.
Since $M$ has rank $k$, we know $\det(M_P^Q)\not= 0$ and thus the columns of $M_P$ are independent.
Let $i\in [k+1]\setminus P$.
Then the linear equation $M_Px=M_i$ is feasible and
there is a unique solution $x$.
Furthermore, by Cramer's rule $x_j={\det(M^Q_{[k+1]\setminus\{j\}})}/{\det(M_P^Q)}$. 
Since $|\det(M_P^Q)|\geq |\det(M^Q_{[k+1]\setminus\{j\}})|$, we have $\|x\|_{\infty}\leq 1$.

Small fitting coefficients are good since they will not increase the noise by too much.
For example, suppose $A^*_i=A^*_S x$ and $\|x\|_{\infty}\leq 1$, i.e., the $i$-th column can be fit by the columns with indices in the set $S$ and the fitting coefficients $x\in\mathbb{R}^{|S|}$ are small. 
If we use the noisy columns of $A^*_S+\Delta_S$ to fit the noisy column $A^*_i+\Delta_i$, then the fitting cost is at most $\|(A^*_S+\Delta_S)x-(A^*_i+\Delta_i)\|_1\leq \|\Delta_i\|_1+\|\Delta_S x\|_1$.
Since $\|x\|_{\infty}\leq 1$, it is possible to give a good upper bound for $\|\Delta_S x\|_1$. 

\begin{definition}[Tuple]\label{def:tuple}
A $(q,t,n)-$tuple is defined to be
$
(S_1,S_2,\cdots,S_t,i),
$
where $\forall j\in[t],S_j\subset [n]$ with $|S_j|=q.$ Let $S=\bigcup_{j=1}^t S_j$. 
Then $|S|=qt,$ i.e., $S_1,S_2,\cdots,S_t$ are disjoint. Furthermore, $i\in[n]$ and $i\not\in S.$ For simplicity, we use $(S_{[t]},i)$ to denote $(S_1,S_2,\cdots,S_t,i)$.
\end{definition}

We next provide the definition of a good tuple.
\begin{definition}[Good tuple]\label{def:core_and_good_tuple}
Given a $\rank$-$k$ matrix $A^*\in\mathbb{R}^{n\times n},$ an $(A^*,q,t,\alpha)$-good tuple is a $(q,t,n)$-tuple $
(S_{[t]},i)
$ which satisfies
\begin{align*}
|\{j\in[t]\mid i\not\in R_{A^*}(S_j\cup \{i\})\}|\geq \alpha \cdot t.
\end{align*}
\end{definition}

We need the definition of the core of a tuple.
\begin{definition}[Core of a tuple]\label{def:core_tuple}
The core of $(S_{[t]},i)$ is defined to be the set
\begin{align*}
\{j\in[t]\mid i\not\in R_{A^*}(S_j\cup \{i\})\}.
\end{align*}
\end{definition}

We define a coefficients tuple as follows.
\begin{definition}[Coefficients tuple]\label{def:coefficients_tuple}
Given a $\rank$-$k$ matrix $A^*\in\mathbb{R}^{n\times n},$ let $(S_{[t]},i)$ be an $(A^*,q,t,\alpha)$-good tuple. Let $C$ be the core of $(S_{[t]},i)$. A coefficients tuple corresponding to $(S_{[t]},i)$ is defined to be $(x_1,x_2,\cdots,x_t)$ where $\forall j\in[t],x_j \in \R^q.$  The vector $x_j \in \R^q$ satisfies:
$x_j = 0$ if $j \in [t] \backslash C$, while 
$A_{S_j}^* x_j = A_i^*$ and $\| x_j \|_{\infty}\leq 1$, if $j \in C$. 
To guarantee the coefficients tuple is unique, we restrict each vector $x_j \in \R^q$ to be one that has the minimum lexicographic order.
\end{definition}

\subsection{Properties of a Good Tuple and a Coefficients Tuple}

Consider a good tuple $(S_1,S_2,\cdots,S_t,i)$. 
By the definition of a good tuple, the size of the core $C$ of the tuple is large.
For each $j\in C$, the coefficients $x_j$ of using $A^*_{S_j}$ to fit $A^*_i$ should have absolute value at most $1$.
Now consider the noisy setting.
As discussed in the previous section, using $A_{S_j}$ to fit $A_i$ has cost at most $\|\Delta_i\|_1+\|\Delta_{S_j}x_j\|_1$.
Although $\|\Delta_{S_j}x_j\|_1$ has a good upper bound, it is not small enough.
To further reduce the $\ell_1$ fitting cost, we can now apply the averaging argument (Lemma~\ref{lem:averaging_works}) over all the fitting choices corresponding to $C$.
Formally, we have the following lemma.

\begin{lemma}[Good tuples imply low fitting cost]\label{lem:good_tuple_low_cost}
Suppose we are given a matrix $A\in\mathbb{R}^{n\times n}$ which satisfies $A=A^*+\Delta$, where $A^*\in\mathbb{R}^{n\times n}$ has rank $k$. Here $\Delta\in\mathbb{R}^{n\times n}$ is a random matrix where $\Delta_{i,j}$ are i.i.d. symmetric random variables with $\E[|\Delta_{i,j}|]=1$ and $\E[|\Delta_{i,j}|^p]=O(1)$ for some constant $p\in(1,2)$. Let $H \subset [n]$ be defined as follows:
\begin{align*}
 H= \left\{ j\in[n] ~\bigg|~ \exists i\in[n],|\Delta_{i,j}|>n^{1/2+1/(2p)} \right\}.
\end{align*}
Let $q,t\leq n^{o(1)}.$ Then, with probability at least $1-2^{-n^{\Theta(1)}},$ for all $(A^*,q,t,1/2)$-good tuples $(S_1,S_2,\cdots,S_t,i)$ which satisfy $H\cap\left(\bigcup_{j=1}^t S_j\right)=\emptyset,$ we have 
\begin{align*}
\min_{y\in\mathbb{R}^{qt}}\left\|A_{\{\bigcup_{j=1}^t S_j\}}y-A_i\right\|_1\leq\left\|\frac{1}{|C|}\sum_{j=1}^t A_{S_j}x_j-A_i\right\|_1\leq \|\Delta_i\|_1+O(q^{1/p}/t^{1-1/p}n),
\end{align*}
where $C$ is the core of $(S_1,S_2,\cdots,S_t,i)$, and $(x_1,x_2,\cdots,x_t)$ is the coefficients tuple corresponding to $(S_1,S_2,\cdots,S_t,i).$
\end{lemma}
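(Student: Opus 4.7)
The first inequality is immediate: the vector $y\in\mathbb{R}^{qt}$ obtained by concatenating $\frac{1}{|C|}x_1,\frac{1}{|C|}x_2,\ldots,\frac{1}{|C|}x_t$ (in the order matching the columns of $A_{\bigcup_j S_j}$) is a feasible point of the regression problem, so the minimum is at most the middle quantity. It remains to prove the second inequality, which I plan to do in three steps.

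First, I would decompose $A=A^*+\Delta$ and exploit the defining properties of the coefficients tuple. Using $A^*_{S_j}x_j=A^*_i$ for $j\in C$, $x_j=0$ for $j\notin C$, and $|C|\geq t/2$ from the good-tuple hypothesis, one gets
\begin{align*}
\frac{1}{|C|}\sum_{j=1}^t A_{S_j}x_j - A_i \;=\; -\Delta_i \;+\; \frac{1}{|C|}\sum_{j\in C}\Delta_{S_j}x_j,
\end{align*}
so by the triangle inequality the target norm is bounded by $\|\Delta_i\|_1+\frac{1}{|C|}\left\|\sum_{j\in C}\Delta_{S_j}x_j\right\|_1$.

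Second, I would rewrite the residual as a signed sum of columns of $\Delta$ and apply Lemma~\ref{lem:averaging_works}. Because $S_1,\ldots,S_t$ are disjoint by Definition~\ref{def:tuple}, we can write $\sum_{j\in C}\Delta_{S_j}x_j = \sum_{\ell\in L}\alpha_\ell\Delta_\ell$, where $L=\bigcup_{j\in C}S_j$, $|L|\leq qt$, and each $\alpha_\ell$ is a coordinate of some $x_j$, hence lies in $[-1,1]$. The hypothesis $H\cap\bigcup_j S_j=\emptyset$ ensures every entry of every $\Delta_\ell$ with $\ell\in L$ is bounded in magnitude by $n^{1/2+1/(2p)}$, so Lemma~\ref{lem:averaging_works} applies and yields $\|\sum_{\ell\in L}\alpha_\ell\Delta_\ell\|_1\leq O((qt)^{1/p}n)$ with probability $1-2^{-n^{\Theta(1)}}$. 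Dividing by $|C|\geq t/2$ gives the claimed $O(q^{1/p}/t^{1-1/p}\cdot n)$ bound.

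Third, and this is the main subtlety, the statement must hold simultaneously for \emph{all} good tuples $(S_{[t]},i)$ satisfying the $H$-disjointness condition. I plan to handle this by a union bound, with the critical observation that for each fixed tuple, the coefficients $x_j$ are determined by $A^*$ alone via the unique lexicographic minimizer specified in Definition~\ref{def:coefficients_tuple}; thus the vector $\alpha$ is deterministic once the tuple is chosen, and the randomness of $\Delta$ is untouched. The total number of $(q,t,n)$-tuples is at most $n^{qt+1}$, and since $q,t\leq n^{o(1)}$ this is $2^{n^{o(1)}}$, which is easily absorbed by the $2^{-n^{\Theta(1)}}$ failure probability from Lemma~\ref{lem:averaging_works}. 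The principal obstacle to watch for is exactly this quantifier order: Lemma~\ref{lem:averaging_works} applies to a fixed coefficient sequence, so the argument must apply it \emph{after} fixing the tuple (and hence $\alpha$) and only then union-bound, rather than attempt a uniform bound over a continuous family of coefficient vectors.
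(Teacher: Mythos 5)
Your proposal is correct and follows essentially the same route as the paper's proof: decompose $A=A^*+\Delta$, use the core/coefficients-tuple properties and $|C|\geq t/2$ with the triangle inequality, apply Lemma~\ref{lem:averaging_works} (valid thanks to disjointness of the $S_j$, $\|x_j\|_\infty\leq 1$, and the $H$-disjointness condition), and union bound over the at most $n^{qt+1}\leq 2^{n^{o(1)}}$ tuples. Your explicit remarks that the concatenated $\frac{1}{|C|}x_j$ give a feasible $y$ and that the coefficients are determined by $A^*$ alone (so the union bound is over a discrete, $\Delta$-independent family) are exactly the implicit points in the paper's argument.
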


We next show that if we choose columns randomly, it is easy to find a good tuple.

\begin{lemma}
	\label{lem:label_uniform_samples}
	Given a rank-$k$ matrix $A^*\in\mathbb{R}^{n\times n}$, let $q>10k,t>0.$ Let $I=\{i_1,i_2,\cdots,i_{qt+1}\}$ be a subset drawn uniformly at random from $[n]\choose qt+1$. Let $\pi : I \rightarrow I$ be a random permutation of $qt+1$ elements. $\forall j\in[t],$ let
	\begin{align*}
	S_j= \left\{ i_{\pi((j-1)q+1)},i_{\pi((j-1)q+2)},\cdots,i_{\pi((j-1)q+q)} \right\}. 
	\end{align*}
	We use $i$ to denote $i_{\pi(qt+1)}$. With probability $\geq 1-2k/q$, $(S_1,S_2,\cdots,S_t,i)$ is an $(A^*,q,t,1/2)-$good tuple.
\end{lemma}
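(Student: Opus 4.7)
The plan is to reduce the claim to a first-moment calculation via Markov's inequality. Let $X := |\{j \in [t] : i \in R_{A^*}(S_j \cup \{i\})\}|$. By Definition~\ref{def:core_and_good_tuple}, the tuple $(S_1,\ldots,S_t,i)$ is $(A^*,q,t,1/2)$-good iff $t - X \geq t/2$, i.e., iff $X \leq t/2$. So it suffices to prove $\Pr[X > t/2] \leq 2k/q$, which by Markov reduces to showing $\E[X] \leq kt/q$, and by linearity of expectation this reduces further to a single-$j$ bound
\begin{align*}
\Pr[i \in R_{A^*}(S_j \cup \{i\})] \leq k/q \quad \text{for every fixed } j\in[t].
\end{align*}

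The key step is a symmetry argument to compute this marginal. For each fixed $j$, set $T_j := S_j \cup \{i\}$, which has size $q+1$. I would argue that the joint distribution of $(T_j, i)$ can be re-described as: first sample $T_j$ uniformly from $\binom{[n]}{q+1}$, then independently sample $i$ uniformly from the $q+1$ elements of $T_j$. The reason is that $I$ is uniform over $\binom{[n]}{qt+1}$ and $\pi$ is an independent uniform permutation, so the $q+1$ elements of $I$ placed into the positions $\{(j-1)q+1,\ldots,jq,\ qt+1\}$ form a uniformly random $(q+1)$-subset of $[n]$, and within those positions the assignment is exchangeable; thus the element sent to the singleton position $qt+1$ is uniform over $T_j$.

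Given this, I would finish as follows. Conditional on $T_j$, the set $R_{A^*}(T_j)$ is a deterministic subset of $T_j$ of size $\rank(A^*_{T_j}) \leq k$ (since $\rank(A^*) = k$). Therefore
\begin{align*}
\Pr[i \in R_{A^*}(T_j) \mid T_j] \;=\; \frac{|R_{A^*}(T_j)|}{q+1} \;\leq\; \frac{k}{q+1} \;\leq\; \frac{k}{q}.
\end{align*}
Taking expectations over $T_j$ and summing over $j\in[t]$ gives $\E[X] \leq kt/q$, and Markov's inequality yields $\Pr[X > t/2] \leq 2k/q$, which is exactly the claim.

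The only slightly subtle point is the symmetry step, where one must be careful that the ``singleton'' element $i = i_{\pi(qt+1)}$ really is uniform inside $T_j$; this follows from the exchangeability of the uniform permutation $\pi$ and does not require the hypothesis $q > 10k$ (that hypothesis merely ensures the resulting failure probability $2k/q$ is a small constant, useful downstream). The rest of the argument is a routine Markov/linearity combination, and nothing about the noise matrix $\Delta$ or the $\ell_1$-norm enters here — the lemma is purely a statement about the combinatorics of $A^*$.
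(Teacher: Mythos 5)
Your proof is correct and follows essentially the same route as the paper: bound the marginal probability $\Pr[i\in R_{A^*}(S_j\cup\{i\})]\leq k/(q+1)\leq k/q$ by exchangeability (the paper simply says ``by symmetry of the choices of $S_j$ and $i$''), then combine linearity of expectation with Markov's inequality to get the $2k/q$ failure bound. Your write-up just makes the symmetry step explicit, which the paper leaves implicit.
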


Lemma~\ref{lem:label_uniform_samples} implies that if we randomly choose $S_1,S_2,\cdots,S_t$, then with high probability, there are many choices of $i\in[n]$, such that $(S_1,S_2,\cdots,S_t,i)$ is a good tuple.
Precisely, we can show the following.

\begin{lemma}
	\label{lem:easy_to_find_good_tuple}
Given a rank-$k$ matrix $A^*\in\mathbb{R}^{n\times n}$, let $q>10k,t>0.$ Let $I=\{i_1,i_2,\cdots,i_{qt}\}$ be a random subset uniformly drawn from $[n]\choose qt$. Let $\pi$ be a random permutation of $qt$ elements. $\forall j\in[t],$ we define $S_j$ as follows:
\begin{align*}
S_j= \left\{ i_{\pi((j-1)q+1)}, i_{\pi((j-1)q+2)}, \cdots, i_{\pi((j-1)q+q)} \right\}.
\end{align*}
Then with probability at least $2k/q$,
\begin{align*}
\left|\left\{i\in[n]\setminus I ~\big|~ (S_1,S_2,\cdots,S_t,i)\mathrm{~is~an~}(A^*,q,t,1/2)\mathrm{-good~tuple~}\right\}\right|\geq (1-4k/q)(n-qt).
\end{align*}
\end{lemma}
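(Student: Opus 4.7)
The plan is to derive Lemma~\ref{lem:easy_to_find_good_tuple} from Lemma~\ref{lem:label_uniform_samples} via a standard averaging/Markov argument that converts the ``random $i$ is good'' guarantee into a ``most $i$'s are good'' guarantee.

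First, I would show that the sampling scheme of Lemma~\ref{lem:label_uniform_samples} is distributionally equivalent to a two-stage procedure: (i) sample a uniformly random size-$qt$ subset $I\subseteq[n]$ together with a uniformly random permutation $\pi$ of its elements, producing $S_1,\ldots,S_t$ exactly as in the present lemma; and (ii) sample $i$ uniformly from $[n]\setminus I$. Indeed, both procedures generate the same joint law on an ordered $(qt+1)$-tuple of distinct elements of $[n]$ (split into a permuted $qt$-block followed by a single extra element). Consequently, Lemma~\ref{lem:label_uniform_samples} is exactly the statement that, under this two-stage joint distribution, $(S_{[t]},i)$ is an $(A^*,q,t,1/2)$-good tuple with probability at least $1-2k/q$.

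Next, I would define
\[
p(I,\pi)\;:=\;\Pr_{i\in[n]\setminus I}\!\bigl[(S_{[t]},i)\text{ is }(A^*,q,t,1/2)\text{-good}\bigr],
\]
so that the previous paragraph reads $\E_{(I,\pi)}[\,p(I,\pi)\,]\geq 1-2k/q$. Because $p(I,\pi)\in[0,1]$, Markov's inequality applied to the nonnegative quantity $1-p(I,\pi)$ yields
\[
\Pr_{(I,\pi)}\!\bigl[\,1-p(I,\pi)>4k/q\,\bigr]\;\leq\;\frac{\E[1-p(I,\pi)]}{4k/q}\;\leq\;\frac{2k/q}{4k/q}\;=\;\frac{1}{2}.
\]
Hence with probability at least $1/2$ over the choice of $(I,\pi)$ we have $p(I,\pi)\geq 1-4k/q$, which is precisely the statement that at least $(1-4k/q)(n-qt)$ indices $i\in[n]\setminus I$ yield an $(A^*,q,t,1/2)$-good tuple. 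Since the hypothesis $q>10k$ gives $1/2\geq 2k/q$, the probability lower bound $2k/q$ claimed in the lemma follows a fortiori.

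The only non-mechanical step is the distributional identification in the first paragraph; all the substantive probabilistic content is already contained in Lemma~\ref{lem:label_uniform_samples}, and the present lemma is just a Markov-style reinterpretation of its conclusion. I do not anticipate any significant obstacle beyond carefully bookkeeping the equivalence of the two sampling descriptions.
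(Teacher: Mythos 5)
Your proposal is correct and follows essentially the same route as the paper: both identify the sampling of Lemma~\ref{lem:label_uniform_samples} with first drawing $(S_1,\ldots,S_t)$ and then a uniform $i\in[n]\setminus I$, so that the lemma yields $\E_{(I,\pi)}[p(I,\pi)]\geq 1-2k/q$, and then conclude by an averaging argument over $(I,\pi)$. The only difference is the last step: you apply Markov's inequality to $1-p$ and get success probability at least $1/2$, while the paper uses the bound $\Pr[p\geq 1-4k/q]\geq \E[p]-(1-4k/q)=2k/q$; your bound is stronger and implies the stated one since $q>10k$.
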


\subsection{Main Result}
Now we are able to put all ingredients together to prove our main theorem,  
Theorem~\ref{thm:dis_l1_algorithm}.

\begin{theorem}[Formal version of Theorem~\ref{thm:intro_l1_algorithm}]\label{thm:dis_l1_algorithm}
Suppose we are given a matrix $A= A^* +\Delta \in \R^{n\times n}$, where $\rank(A^*)=k$ for $k=n^{o(1)}$, and $\Delta$ is a random matrix for which the $\Delta_{i,j}$ are i.i.d. symmetric random variables with $\E[|\Delta_{i,j}|]=1$ and $\E[|\Delta_{i,j}|^p]=O(1)$ for some constant $p\in(1,2)$. Let $\epsilon\in (0,1/2)$ satisfy $1/\epsilon=n^{o(1)}.$ There is an $\wt{O}(n^2+n\poly(k/\varepsilon))$ time algorithm (Algorithm~\ref{alg:dis_l1_algorithm}) which can output a subset $S\in [n]$ with $|S|\leq \poly(k/\epsilon)+O(k\log n)$ for which 
 \begin{align*}
\min_{X\in \R^{|S| \times n}} \| A_S X - A \|_1 \leq (1 + \epsilon ) \| \Delta \|_1,
 \end{align*}
holds with probability at least $99/100$.
\end{theorem}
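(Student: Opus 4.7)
The plan is to combine the noise-concentration results (Lemmas~\ref{lem:lower_bound_on_cost}--\ref{lem:easy_one_is_concentrated}) with the good-tuple machinery (Lemmas~\ref{lem:good_tuple_low_cost}--\ref{lem:easy_to_find_good_tuple}) to show that $A_I$ and $A_Q$ together span a $(1+\epsilon)$-approximation. First I would fix parameters: set $l = n/\poly(k/\epsilon)$ small enough that Lemma~\ref{lem:remaining_is_small} applies with slack $\epsilon/(k\log k)$; write $s = qt$ with $q = \poly(k/\epsilon)$ satisfying $q \geq 10k$ and $4kn/q \leq l/4$, and $t = \poly(k/\epsilon)$ chosen so that $q^{1/p}/t^{1-1/p} \leq \epsilon$ (feasible since $p > 1$). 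Lemma~\ref{lem:lower_bound_on_cost} gives $\|\Delta\|_1 \geq (1-\epsilon)n^2$, so it suffices to produce a solution of cost $\|\Delta\|_1 + O(\epsilon n^2)$.

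For the columns $i \notin T$, the key is to analyze $v_i$ via a fictitious uniform random partition of $I$ into disjoint $S_1,\dots,S_t$ of size $q$: the regression cost $v_i$ depends only on $I$, so any linear combination built from the partition upper bounds $v_i$. By Lemma~\ref{lem:easy_to_find_good_tuple}, for some such partition at most $4kn/q \leq l/4$ indices $i \in [n]\setminus I$ fail to form an $(A^*,q,t,1/2)$-good tuple with $(S_1,\dots,S_t)$; combining with $|H| \leq O(n^{1-(p-1)/2}) \leq l/4$ from Lemma~\ref{lem:hard_one_is_small} and the fact that $\Pr[I \cap H \neq \emptyset] = o(1)$, the total number of ``bad'' indices is at most $l$. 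Since $T$ is the set of $l$ largest $v_i$'s, it contains every bad index, so each $i \in [n]\setminus T$ is non-heavy and good-tupled. Lemma~\ref{lem:good_tuple_low_cost} then gives $v_i \leq (1+\epsilon)\cdot(\|\Delta_i\|_1 + O(q^{1/p} n / t^{1-1/p})) \leq \|\Delta_i\|_1 + O(\epsilon n)$, while Lemma~\ref{lem:easy_one_is_concentrated} gives $\|\Delta_i\|_1 \leq (1+\epsilon) n$; summing, $\sum_{i \notin T} v_i \leq \|\Delta_{[n]\setminus T}\|_1 + O(\epsilon n^2)$.

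For the columns in $T$, Lemma~\ref{lem:remaining_is_small} with the above $l$ yields $\|\Delta_T\|_1 = O(\epsilon n^2/(k\log k))$, so the \cite{cgklrw17} subroutine returns $A_Q$ with $|Q| = O(k\log n)$ and $\min_X \|A_Q X - A_T\|_1 \leq O(k\log k) \cdot \|\Delta_T\|_1 = O(\epsilon n^2)$, using that $\rank(A^*_T)\leq k$ makes $\|\Delta_T\|_1$ an upper bound on the rank-$k$ optimum for $A_T$. Combining the two halves via a block construction of $X$ — one block fits $[n]\setminus T$ using $I$, the other fits $T$ using $Q$ — gives $\min_X \|A_{I\cup Q} X - A\|_1 \leq \sum_{i \notin T} v_i + \min_X \|A_Q X - A_T\|_1 \leq \|\Delta\|_1 + O(\epsilon n^2) \leq (1+O(\epsilon))\|\Delta\|_1$, after a final constant rescaling of $\epsilon$. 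The main obstacle will be coordinating the two halves: the cost-based thresholding $T$ is blind to $H$ and to good-tupledness, so the choice of $l$ must simultaneously be large enough that $T$ captures every bad index (so Lemma~\ref{lem:good_tuple_low_cost} applies uniformly on $[n]\setminus T$) and small enough that Lemma~\ref{lem:remaining_is_small} absorbs the $O(k\log k)$ bicriteria loss on $T$. The fictitious-partition trick — that $v_i$ is determined by $I$ alone — is the reconciliation that lets both constraints be met by a single choice $l = n/\poly(k/\epsilon)$.
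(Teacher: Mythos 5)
There is a genuine gap at the heart of your covering argument: you treat Lemma~\ref{lem:easy_to_find_good_tuple} as if it guaranteed, for the sampled set $I$, the existence of \emph{some} partition of $I$ into $S_1,\dots,S_t$ under which at most $4kn/q$ indices fail to be good-tupled. But that lemma only says the event ``at least $(1-4k/q)(n-qt)$ indices form good tuples'' holds with probability at least $2k/q$ over the \emph{joint} randomness of $I$ and the partition $\pi$ --- a success probability of order $k/q = o(1)$, not $1-o(1)$. It is entirely consistent with the lemma that only a $2k/q$ fraction of sets $I$ admit any good partition at all, so your ``fictitious uniform random partition'' of a single sample $I$ of size $qt$ cannot be upgraded to an existence claim, and the whole plan would succeed only with probability roughly $2k/q$, far below the required $99/100$. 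The paper closes exactly this hole by amplification: it draws $r=\Theta(q/k)$ \emph{independent} batches $I_1,\dots,I_r$, each of size $qt$ (the algorithm's sampled set is their union, still $\poly(k/\epsilon)$ columns), and uses independence/Chernoff to conclude that with probability $0.999$ some batch $I_s$ satisfies the good-tuple count; since the regression in the algorithm is onto $A_I \supseteq A_{I_s}$, the per-column cost bound from Lemma~\ref{lem:good_tuple_low_cost} transfers. Your proposal, as written with $s=qt$ and a single batch, is missing this repetition step, and it is not a cosmetic omission --- without it the correctness probability collapses.

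A secondary, repairable looseness: you assert that because $T$ consists of the $l$ largest $v_i$'s, ``it contains every bad index.'' That is not literally true --- a heavy or non-good-tupled column can happen to have small regression cost and small $v_i$, and nothing forces it into $T$. What the thresholding actually buys (and what the argument needs) is the contrapositive counting statement: every column with large cost must be bad, there are at most $l$ bad columns, hence every $i\notin T$ has $v_i$ (and thus true cost) at most $(1+O(\epsilon))n$; the columns of $T$ are then handled wholesale by Lemma~\ref{lem:remaining_is_small} regardless of which columns they are. Your use of $\|\Delta_i\|_1+O(\epsilon n)$ per column outside $T$ should likewise be replaced by (or justified through) this uniform $(1+O(\epsilon))n$ bound, with the final comparison to $\|\Delta\|_1$ made through Lemma~\ref{lem:lower_bound_on_cost}, as you do. The rest of your outline --- the parameter choices, the block construction combining $I$ and $Q$, and the use of Lemmas~\ref{lem:hard_one_is_small}, \ref{lem:easy_one_is_concentrated}, and \ref{lem:remaining_is_small} --- matches the paper's proof.
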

\begin{proof}
We discussed the running time at the beginning of  Section~\ref{sec:dis}. Next, we turn to correctness. Let $q=\Omega\left(\frac{k(k\log k)^{1+\frac{1}{p-1}}}{\epsilon^{1+\frac{1}{p-1}}}\right),$ $t=\frac{q^{\frac{1}{p-1}}}{\epsilon^{1+\frac{1}{p-1}}}.$ Let $r=\Theta(q/k).$ Let
\begin{align*}
 I_1= \left\{ i_1^{(1)},i_2^{(1)},\cdots,i_{qt}^{(1)} \right\}, I_2= \left\{i_1^{(2)},i_2^{(2)},\cdots,i_{qt}^{(2)} \right\},\cdots,I_r= \left\{i_1^{(r)},i_2^{(r)},\cdots,i_{qt}^{(r)} \right\},
\end{align*}
 be $r$ independent subsets drawn uniformly at random from  $[n]\choose qt$. 
 Let $I=\bigcup_{s\in[r]}I_s$, which is the same as that in Algorithm~\ref{alg:dis_l1_algorithm}.
 Let $\pi_1,\pi_2,\cdots,\pi_{r}$ be $r$ independent random permutations of $qt$ elements. Due to Lemma~\ref{lem:easy_to_find_good_tuple} and a Chernoff bound, with probability at least $.999,$ $\exists s\in[r]$,  
\begin{align*}
\left|\left\{i\in[n]\setminus I_s ~\big|~ (S_1,S_2,\cdots,S_t,i)\text{~is an $(A^*,q,t,1/2)-$good tuple~}\right\}\right|\geq (1-4k/q)(n-qt)
\end{align*}
where
\begin{align*}
S_j= \left\{ i^{(s)}_{\pi_s((j-1)q+1)},i^{(s)}_{\pi_s((j-1)q+2)},\cdots,i^{(s)}_{\pi_s((j-1)q+q)} \right\}, \forall j \in [t].
\end{align*}

Let set $H \subset [n]$ be defined as follows:
\begin{align*}
H=\{j\in[n]\mid \exists i\in[n],|\Delta_{i,j}|>n^{1/2+1/(2p)}\}.
\end{align*}
 Then due to Lemma~\ref{lem:hard_one_is_small}, with probability at least $0.999,$ $|H|\leq O(n^{1-(p-1)/2}).$ Thus, for $j\in[r],$ the probability that $H\cap I_j\not=\emptyset$ is at most $O(qt\cdot n^{1-(p-1)/2}/(n-qt))=1/n^{\Omega(1)}.$ By taking a union bound over all $j\in[r],$ with probability at least $1-1/n^{\Omega(1)},$ $\forall j\in[r],I_j\cap H=\emptyset.$ Thus, we can condition on $I_s\cap H=\emptyset.$
Due to Lemma~\ref{lem:good_tuple_low_cost} and $q^{1/p}/t^{1-1/p}=\epsilon,$
\begin{align*}
\left|\left\{i\in[n]\setminus I_s ~\bigg|~ \min_{y\in\mathbb{R}^{qt}}\|A_{I_s}y-A_i\|_1\leq \|\Delta_i\|_1+O(\epsilon n)\right\}\right|\geq (1-4k/q)(n-qt). 
\end{align*}
 Due to Lemma~\ref{lem:easy_one_is_concentrated} and a union bound over all $i\in[n]\setminus H$, with probability at least $.999,$ $\forall i\not\in H,\|\Delta_i\|\leq (1+\epsilon)n$. Thus,
 \begin{align*}
\left|\left\{i\in[n]\setminus I_s ~\bigg|~ \min_{y\in\mathbb{R}^{qt}}\|A_{I_s}y-A_i\|_1\leq (1+O(\epsilon)) n\right\}\right|\geq (1-4k/q)(n-qt)-|H|.
\end{align*}
Let
 \begin{align*}
T'=[n]\setminus\left\{i\in[n] ~\bigg|~ \min_{y\in\mathbb{R}^{qt}}\|A_{I_s}y-A_i\|_1\leq (1+O(\epsilon)) n\right\}.
\end{align*}
Then $|T'|\leq O(kn/q+n^{1-(p-1)/2})=O(kn/q)=O(( \epsilon / ( k\log k ) )^{1+ 1 /(p-1)}n).$ 
By our selection of $T$ in algorithm~\ref{alg:dis_l1_algorithm}, $T'$ should be a subset of $T$.
Due to Lemma~\ref{lem:remaining_is_small}, with probability at least $.999,$ $\|\Delta_{T}\|_1\leq O(\epsilon n^2/(k\log k) )$. By our second subroutine mentioned at the beginning of Section~\ref{sec:dis}
it can find a set $Q\subset[n]$ with $|Q|=O(k\log n)$ such that
$min_{X\in\mathbb{R}^{|Q|\times |T|}}\|A_QX-A_T\|_1\leq O(k\log k)\|\Delta_T\|_1\leq O(\epsilon n^2).$
Thus, we have
$\min_{X\in\mathbb{R}^{(|Q|+q\cdot t\cdot r)\times n}}\|A_{(Q\cup I)}X-A\|_1
\leq 
\min_{X_1\in\mathbb{R}^{(q\cdot t\cdot r)\times n}}\|A_{ I}X_1-A_{[n]\setminus T}\|_1 + \min_{X_2\in\mathbb{R}^{|Q|\times n}}\|A_{ Q}X_2-A_{T}\|_1
\leq (1+O(\epsilon)) n^2.$
Due to Lemma~\ref{lem:lower_bound_on_cost}, with probability at least $.999,$ $\|\Delta\|_1\geq (1-\epsilon)n^2,$ and thus 
$\min_{X\in\mathbb{R}^{(|Q|+q\cdot t\cdot r)\times n}}\|A_{(Q\cup  I ) }X-A\|_1\leq (1+O(\epsilon)) \|\Delta\|_1.$

\end{proof}

\section{Experiments}\label{sec:exp}
The take-home message from our theoretical analysis is that although the noise distribution may be heavy-tailed, if the $p$-th $(p>1)$ moment of the distribution exists, averaging the noise may reduce the noise. 
In the spirit of averaging, we found that taking a median works a bit better in practice.
Inspired by our theoretical analysis, we propose a simple heuristic algorithm (Algorithm~\ref{alg:heu}) which can output a rank-$k$ solution. We tested Algorithm~\ref{alg:heu} on both synthetic and real datasets.

\begin{algorithm}
	\begin{algorithmic}[1]\caption{Median Heuristic}\label{alg:heu}
		\Procedure{\textsc{L1NoisyLowRankApproxHeu}}{$A\in\mathbb{R}^{n\times d},k\geq 1$} 
		\State Sample a set $I=\{i_1,i_2,\cdots,i_{sk}\}$  from ${[n] \choose sk}$ uniformly at random.
		\State Compute $B\in\mathbb{R}^{n\times k}$ s.t., for $t\in[n],q\in[k],$ $B_{t,q}=\mathrm{median}(A_{t,i_{s(q-1)+1}},\cdots,A_{t,i_{sq}})$.
		
		\State  Solve $\min_{X\in\mathbb{R}^{k\times d}}\|BX-A\|_1$ and let the solution be $X^*$. Output $BX^*$.
		\EndProcedure
	\end{algorithmic}
\end{algorithm}

{\bf Datasets.} 
For each rank-$k$ experiment, we chose a high rank matrix $\hat{A}\in\mathbb{R}^{n\times d}$, applied top-$k$ SVD to $\hat{A}$ and obtained a rank-$k$ matrix $A^*$ as our ground truth matrix.
For our synthetic data experiments, the matrix $\hat{A}\in\mathbb{R}^{500\times 500}$ was generated at random, where each entry was drawn uniformly from $\{0,1,\cdots,9\}$. For real datasets, we chose \textit{isolet}\footnote{\url{https://archive.ics.uci.edu/ml/datasets/isolet}} $(617 \times 1559)$ or \textit{mfeat}\footnote{\url{https://archive.ics.uci.edu/ml/datasets/Multiple+Features}} $(651\times 2000)$ as $\hat{A}$~\cite{an07}. We tested two different noise distributions. One distribution is the standard L\'evy $1.1$-stable distribution~\cite{m60}. Another distribution is constructed from the standard Cauchy distribution, i.e., to draw a sample from the constructed distribution, we draw a sample from the Cauchy distribution, keep the sign unchanged, and take the $\frac{1}{1.1}$-th power of the absolute value.
Notice that both distributions have bounded $1.1$-th moment, but do not have a $p$-th moment for any $p>1.1$. 
To construct the noise matrix $\Delta\in\mathbb{R}^{n\times d}$, we drew a matrix $\hat{\Delta}$ where each entry is an i.i.d. sample from one of the two noise distributions, and then scaled the noise: $\Delta = \hat{\Delta}\cdot \frac{\|A^*\|_1}{20\cdot n\cdot d}$.
 We set $A=A^*+\Delta$ as the input.

{\bf Methodologies.} We compare Algorithm~\ref{alg:heu} with SVD, $\poly(k,\log n)$-approximate entrywise $\ell_1$ low rank approximation~\cite{swz17}, and uniform $k$-column subset sampling~\cite{cgklrw17}\footnote{We chose to compare with \cite{swz17,cgklrw17} due to their theoretical guarantees. Though the uniform $k$-column subset sampling described in the experiments of \cite{cgklrw17} is a heuristic algorithm, it is inspired by their theoretical algorithm.}.
 For Algorithm~\ref{alg:heu}, we set $s=\min(50,\lfloor n/k\rfloor)$.
 For all of algorithms 
we repeated the experiment the same number of times and compared the best solution obtained by each algorithm. 
 We report the approximation ratio $\|B-A\|_1/\|\Delta\|_1$ for each algorithm, where $B\in\mathbb{R}^{n\times d}$ is the output rank-$k$ matrix. The results are shown in Figure~\ref{fig:results}. As shown in the figure, Algorithm~\ref{alg:heu} outperformed all of the other algorithms.

 \begin{figure*} 
 	
 	\centering
 	\bgroup
 	\setlength\tabcolsep{-0.1cm}
 	\begin{tabular}{ccc}
 		\textsc{synthetic} & \textsc{isolet} & \textsc{mfeat} \\
 		\includegraphics[width=0.36\textwidth]{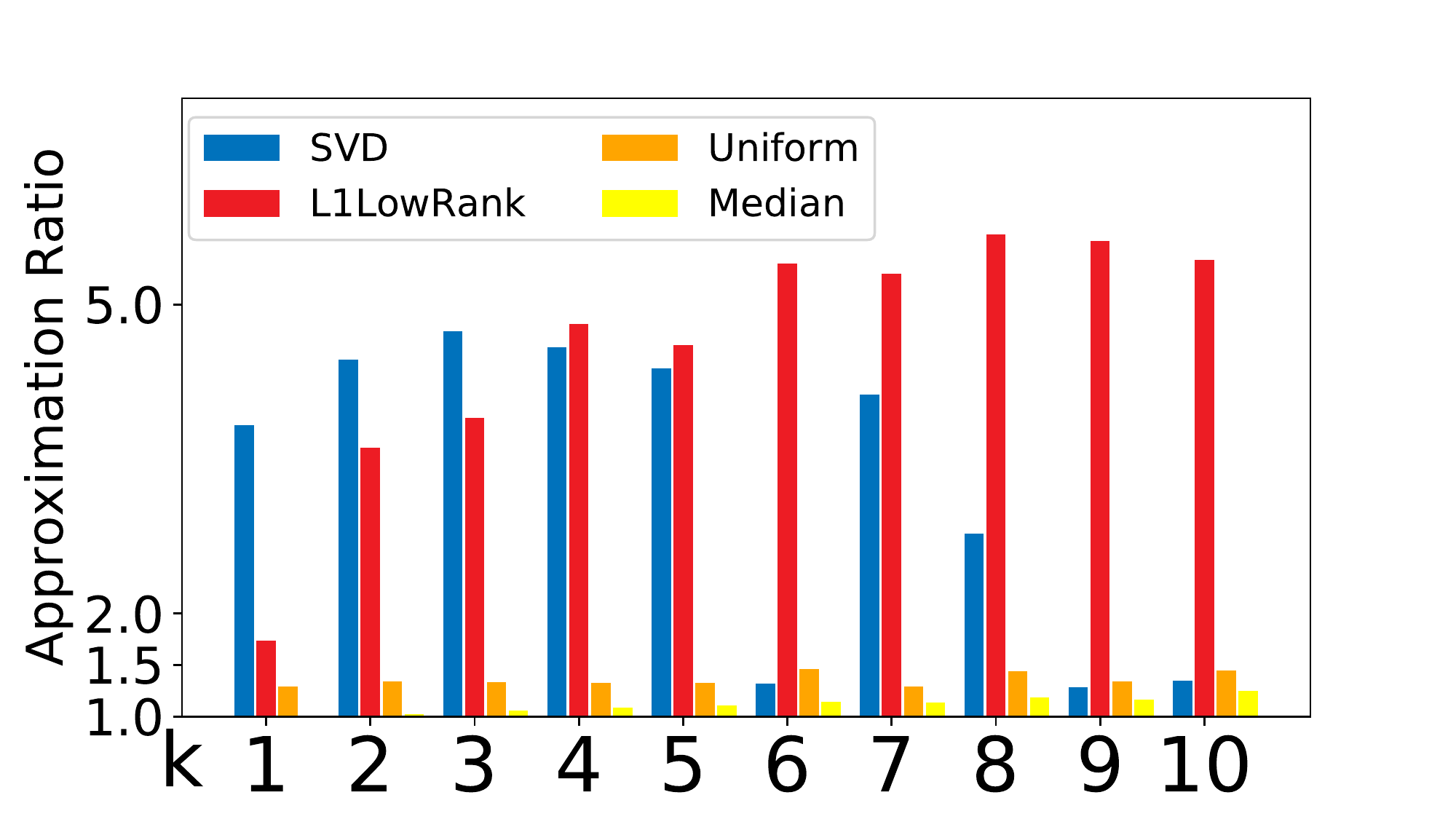}&
 		\includegraphics[width=0.36\textwidth]{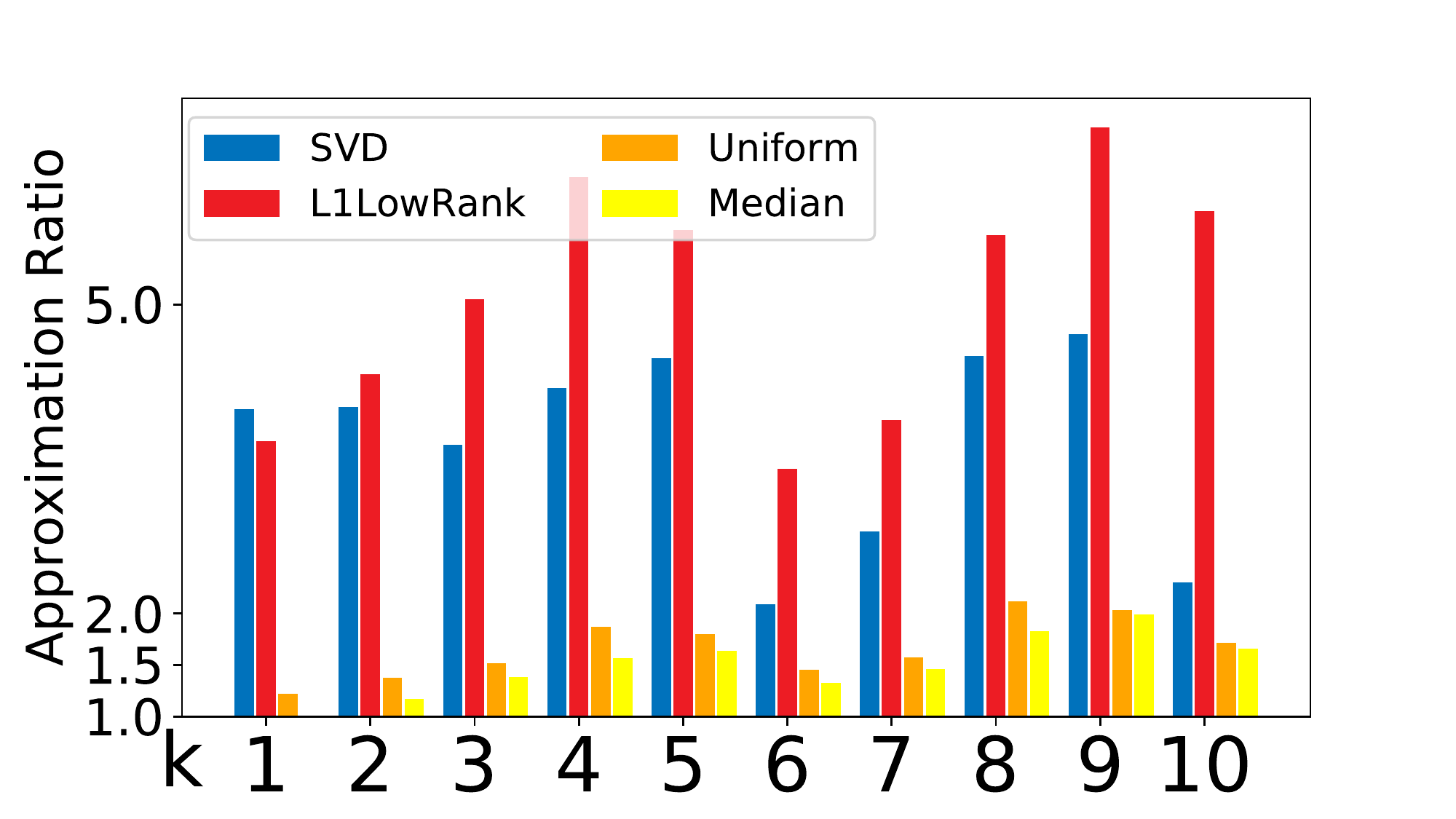}&
 		\includegraphics[width=0.36\textwidth]{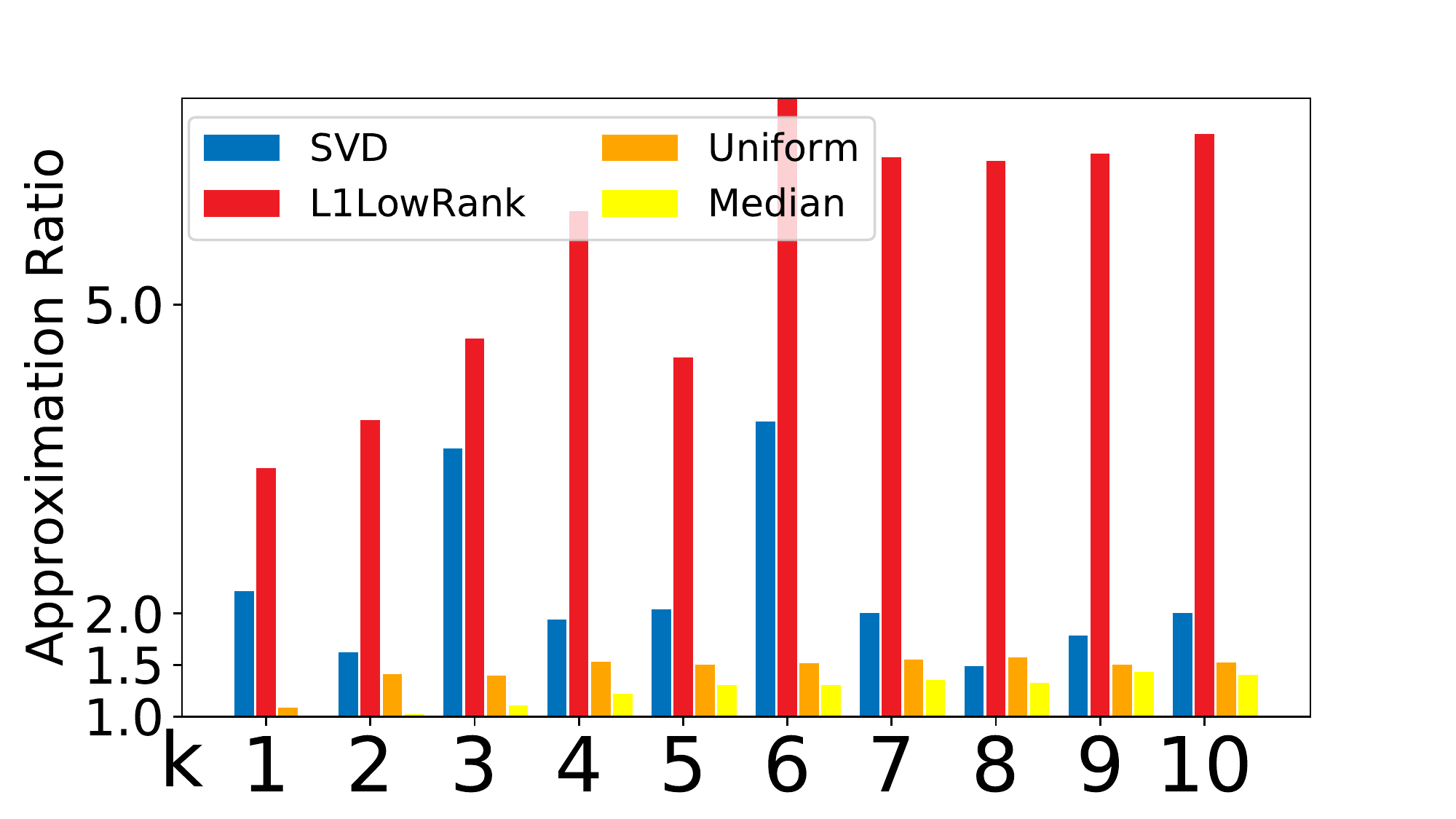}\\
 		\includegraphics[width=0.36\textwidth]{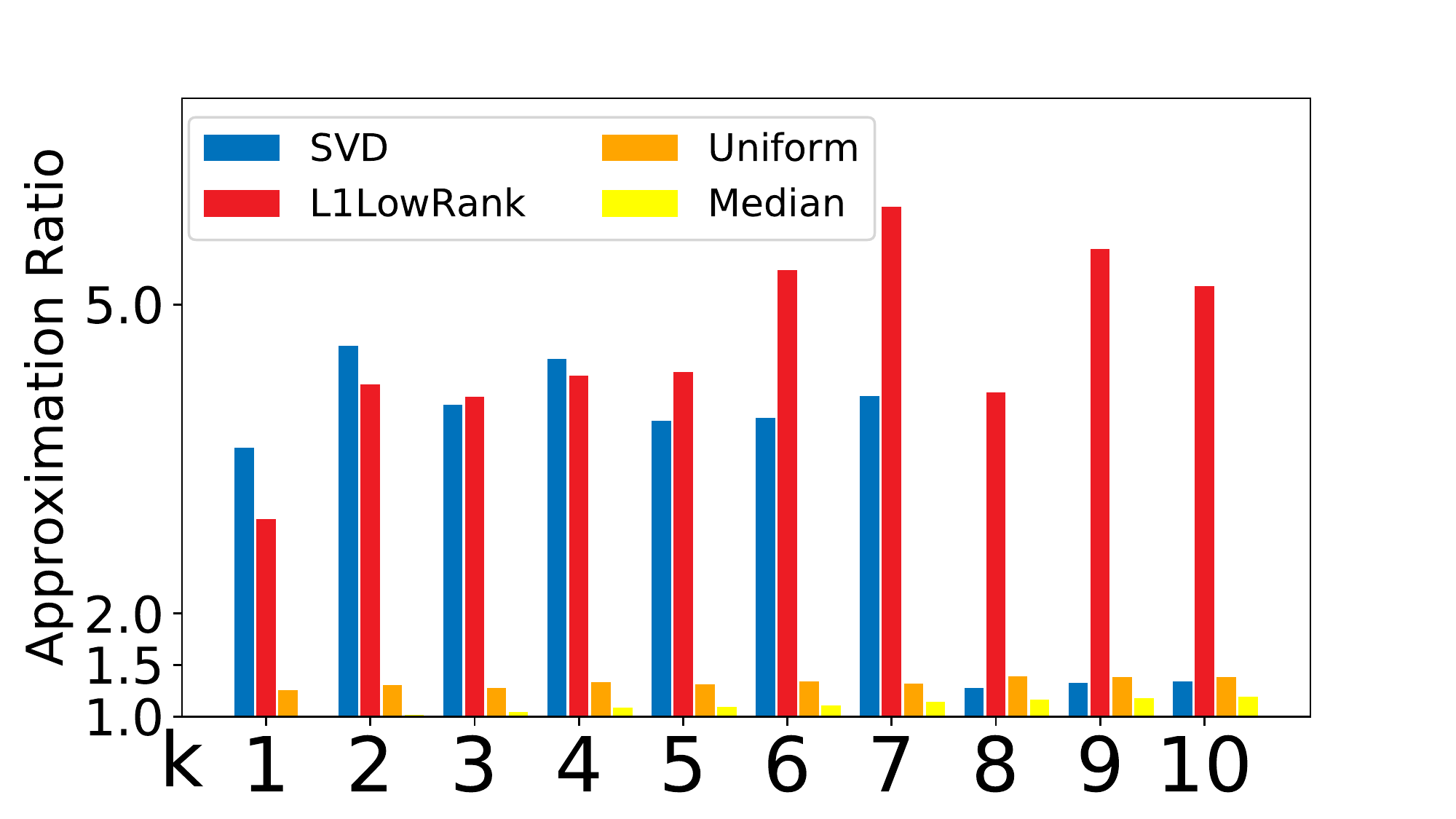}&
 		\includegraphics[width=0.36\textwidth]{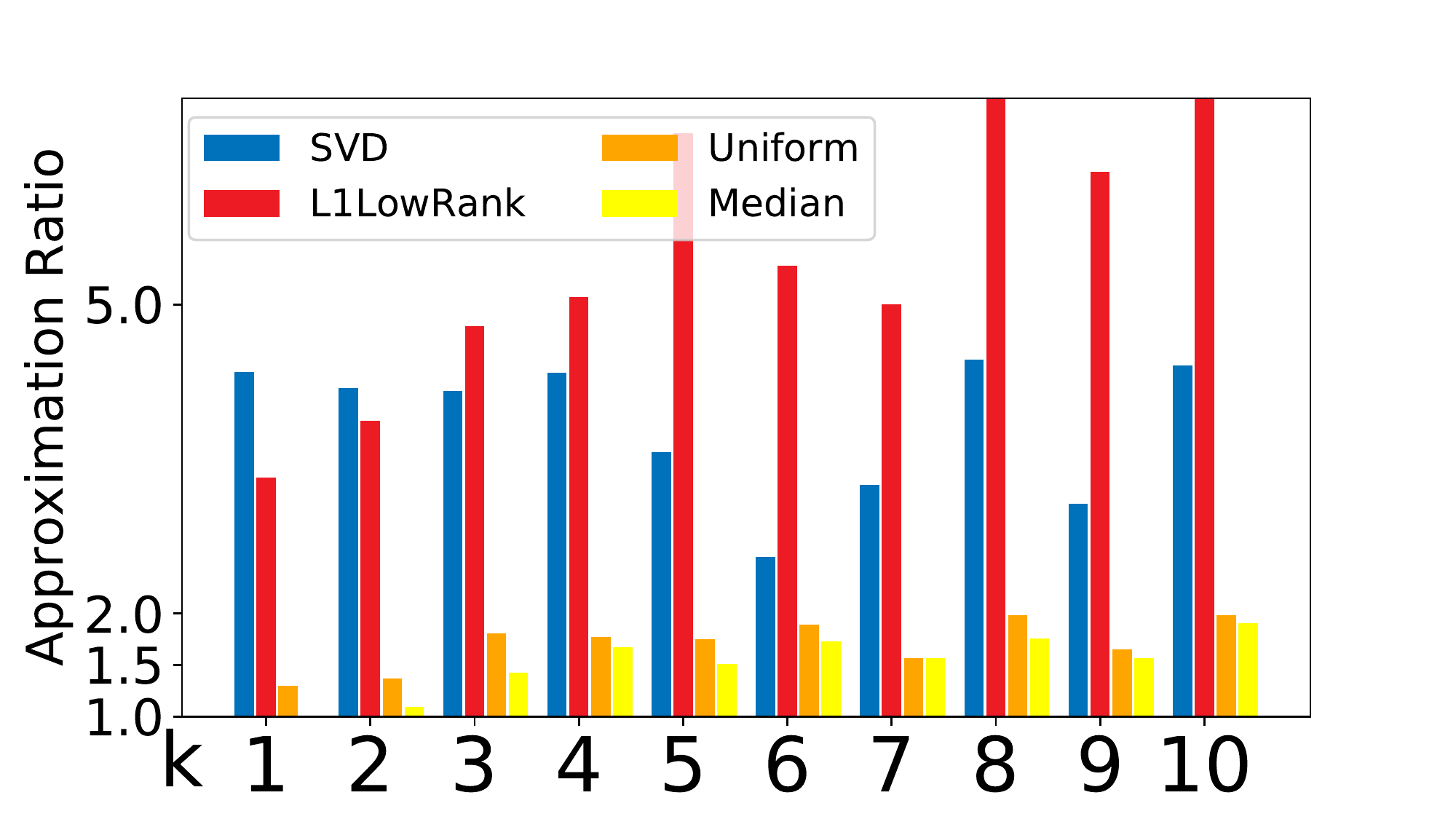}&
 		\includegraphics[width=0.36\textwidth]{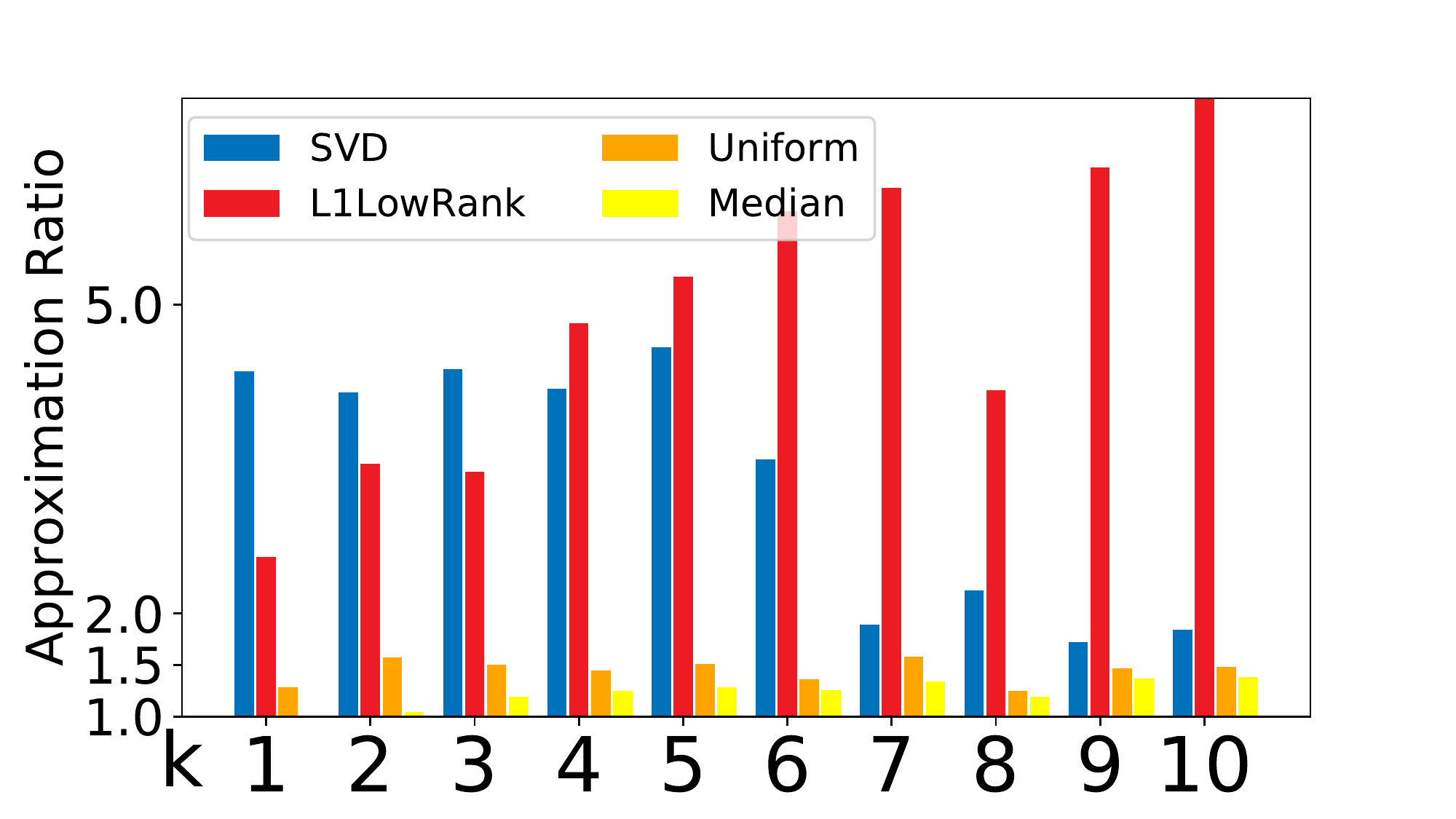}\\
 	\end{tabular}
 	\egroup
 	
 	\caption{\small \textbf{Empirical results.} The noise distributions of the experiments in the first row are from a $1.1$-stable distribution. The noise distributions corresponding to the second row are the $1.1$-th root of a Cauchy distribution. 
 	The blue, red, orange and yellow bar denote SVD, the entrywise $\ell_1$-norm low rank algorithm in \cite{swz17}, the uniform $k$-column subset sampling algorithm in \cite{cgklrw17}, and Algorithm~\ref{alg:heu} respectively.}\label{fig:results}
 \end{figure*}

\paragraph{Acknowledgments.}
David P. Woodruff was supported in part by Office of Naval Research (ONR) grant N00014- 18-1-2562. Part of this work was done while he was visiting the Simons Institute for the Theory of Computing.
Peilin Zhong is supported in part by NSF grants (CCF-1703925, CCF-1421161, CCF-1714818, CCF-1617955 and CCF-1740833), Simons Foundation (\#491119 to Alexandr Andoni), Google Research Award and a Google Ph.D. fellowship.
Part of this work was done 
while Zhao Song and Peilin Zhong were interns at IBM Research - Almaden and 
while Zhao Song was visiting the Simons Institute for the Theory of Computing.
\newpage
\ifdefined\isarxivversion
\bibliographystyle{alpha}
\bibliography{ref}
\else
\bibliographystyle{unsrt}
\bibliography{ref}
\fi
\newpage
\appendix
\section{Missing Proofs in Section \ref{sec:dis}}

\subsection{Proof of Lemma~\ref{lem:lower_bound_on_cost}}
\begin{proof}
	Let $Z\in\mathbb{R}^{n\times n}$ be a random matrix. For each $i,j\in[n],$ define random variable $Z_{i,j}$ as
	\begin{align*}
	Z_{i,j}=\left\{\begin{array}{ll} |\Delta_{i,j}|, & \text{~if~} |\Delta_{i,j}|\leq n; \\ n, &\mathrm{otherwise.}\end{array}\right.
	\end{align*}
	For $i,j\in [n],$ by Markov's inequality, we have
	\begin{align}
	\Pr[|\Delta_{i,j}|\geq n]=\Pr[|\Delta_{i,j}|^p\geq n^p]\leq \E[|\Delta_{i,j}|^p]/n^p=O(1/n^p).\label{eq:tail_bound_of_delta}
	\end{align}
	Notice that
	\begin{align*}
	\E[|\Delta_{i,j}|^p]=\int_{0}^n x^p f(x) \mathrm{d}x+\int_{n}^{\infty} x^p f(x) \mathrm{d}x=O(1)
	\end{align*}
	where $f(x)$ is the probability density function of $|\Delta_{i,j}|.$
	Thus we have
	\begin{align*}
	\int_{n}^{\infty} x f(x) \mathrm{d}x\leq\int_{n}^{\infty} x^p/n^{p-1}\cdot f(x) \mathrm{d}x=O(1/n^{p-1}).
	\end{align*}
	Because $\E[|\Delta_{i,j}|]=1,$ we have
	\begin{align}
	\int_{0}^{\infty} x f(x) \mathrm{d}x = \E[|\Delta_{i,j}|]-\int_{n}^{\infty} x f(x) \mathrm{d} x \geq 1-O(1/n^{p-1}).\label{eq:tail_bound_of_expectation}
	\end{align}
	By Equation~\eqref{eq:tail_bound_of_expectation}, we have
	\begin{align*}
	\E[Z_{i,j}]=\int_{0}^n x f(x) dx + n \cdot \Pr[|\Delta_{i,j}|\geq n]\geq \int_{0}^n x f(x) dx\geq 1-O(1/n^{p-1}).
	\end{align*}
	By Equation~\eqref{eq:tail_bound_of_delta} and $E[|\Delta_{i,j}|^p]\leq O(1)$, we have
	\begin{align*}
	\E[Z_{i,j}^2]=\int_{0}^n x^2 f(x) \mathrm{d}x+n^2\Pr[|\Delta_{i,j}|\geq n]\leq O(n^{2-p})+O(n^{2-p})=O(n^{2-p}).
	\end{align*}
	By the inequality of~\cite{m03},
	\begin{align*}
	\Pr[\E[\|Z\|_1]-\|Z\|_1\geq \epsilon \E[\|Z\|_1]/2]&\leq\exp\left(\frac{-\epsilon^2\E[\|Z\|_1]^2/4}{2\sum_{i,j}\E[Z_{i,j}^2]}\right)\\
	&\leq \exp\left(\frac{-\epsilon^2(n^2-O(n^{3-p}))^2/4}{2n^2\cdot O(n^{2-p})}\right)\\
	&\leq e^{-\Theta(n)}
	\end{align*}
	Thus with probability at least $1-e^{-\Theta(n)},$ $\|Z\|_1\geq (1-\epsilon/2)\E[\|Z\|_1]\geq (1-\epsilon)n^2$ where the last inequality follows by $\E[\|Z\|_1\geq n^2-O(n^{3-p})]$ and $1/\epsilon=n^{o(1)}.$ Since $\|\Delta\|_1\geq \|Z\|_1,$ we complete the proof.
\end{proof}

\subsection{Proof of Lemma~\ref{lem:averaging_works}}
\begin{proof}
	Let $Z\in\mathbb{R}^{n\times t}$ be a random matrix where $Z_{i,j}$ are i.i.d. random variables with probability density function:
	\begin{align*}
	g(x)=\left\{\begin{array}{ll}f(x)/\Pr[|\Delta_{1,1}|\leq n^{1/2+1/(2p)}], & \text{~if~} |x|\leq n^{1/2+1/(2p)}; \\ 0, & \mathrm{~otherwise.}\end{array}\right.
	\end{align*}
	where $f(x)$ is the probability density function of $\Delta_{1,1}$. (Note that in the above equation, $\Pr[|\Delta_{1,1}|\leq n^{1/2+1/(2p)}] > 0$.)  Now, we have $\forall a\geq 0$,
	\begin{align*}
	\Pr\left[\left\|\sum_{j=1}^t\alpha_j\Delta_j\right\|_1\leq a ~\bigg|~ \forall i\in[n],j\in[t],|\Delta_{i,j}|\leq n^{1/2+1/(2p)}\right]=\Pr\left[\left\|\sum_{j=1}^t\alpha_jZ_j\right\|_1\leq a\right].
	\end{align*}
	Now we look at the $i$-th row of $\sum_{j=1}^t\alpha_jZ_j.$ We have
	\begin{align}\label{eq:expectation_of_averaging}
	\E\left[\left|\sum_{j=1}^t\alpha_jZ_{i,j}\right|\right]
	= & ~ \left(\E\left[\left|\sum_{j=1}^t\alpha_jZ_{i,j}\right|\right]^p\right)^{1/p} \notag \\
	\leq & ~ \E\left[\left|\sum_{j=1}^t\alpha_jZ_{i,j}\right|^p\right]^{1/p} \notag\\
	\leq & ~ \E\left[\left(  \left( \sum_{j=1}^t\alpha_j^2Z_{i,j}^2 \right)^{1/2} \right)^p\right]^{1/p} \notag \\
	\leq & ~ \E\left[\sum_{j=1}^t|\alpha_jZ_{i,j}|^p\right]^{1/p} \notag\\
	\leq & ~ \left(\sum_{j=1}^t\E[|\alpha_jZ_{i,j}|^p]\right)^{1/p} \notag \\
	\leq & ~ \left(\sum_{j=1}^t\E[|Z_{i,j}|^p]\right)^{1/p} \notag\\
	\leq & ~ O(t^{1/p}),
	\end{align}
	where the first inequality follows by Jensen's inequality, the second inequality follows by Remark 3 of~\cite{l97}, the third inequality follows by $\|x\|_2\leq \|x\|_p$ for $p<2$, the fourth inequality follows by $|\alpha_j|\leq 1$, the fifth inequality follows by $\E[|Z_{i,j}|^p]=\E[|\Delta_{i,j}|^p\mid |\Delta_{1,1}|\leq n^{1/2+1/(2p)}]\leq \E[|\Delta_{i,j}|^p]=O(1).$
	For the second moment, we have
	\begin{align}\label{eq:the_bound_for_the_second_moment}
	\E\left[\left|\sum_{j=1}^t\alpha_jZ_{i,j}\right|^2\right]&=\sum_{j=1}^t\E\left[\alpha_j^2Z_{i,j}^2\right]+\sum_{j\not=k}\E[\alpha_j\alpha_kZ_{i,j}Z_{i,k}]\notag\\
	= & ~ \sum_{j=1}^t\alpha_j^2\E\left[Z_{i,j}^2\right]+\sum_{j\not=k}\alpha_j\alpha_k\E[Z_{i,j}]\E[Z_{i,k}]\notag\\
	\leq & ~ \sum_{j=1}^t\E\left[Z_{i,j}^2\right]\notag\\
	= & ~ t\cdot 2\int_{0}^{n^{1/2+1/(2p)}} x^2f(x)/\Pr \left[|\Delta_{i,j}|\leq n^{1/2+1/(2p)} \right]\mathrm{d}x\notag\\
	\leq & ~ 2t/\Pr \left[ |\Delta_{i,j}|\leq n^{1/2+1/(2p)} \right]\cdot (n^{1/2+1/(2p)})^{2-p}\int_{0}^{n^{1/2+1/(2p)}}x^pf(x)\mathrm{d}x\notag\\
	\leq & ~ O(tn^{2-p}),
	\end{align}
	where the second inequality follows by independence of $Z_{i,j}$ and $Z_{i,k}.$ The first inequality follows by $|\alpha_j|\leq 1$ and $\E[Z_{i,j}]=\E[Z_{i,k}]=0.$ The third equality follows by the probability density function of $Z_{i,j}.$ The second inequality follows by $x^{2-p}\leq   (n^{1/2+1/(2p)})^{2-p}$ when $0\leq x\leq  n^{1/2+1/(2p)}.$ The last inequality follows by $\E[|\Delta_{i,j}|^p]=O(1),p>1$ and $\Pr[|\Delta_{i,j}|\leq n^{1/2+1/(2p)}]\geq 1- \E[|\Delta_{i,j}|^p]/(n^{1/2+1/(2p)})^p=1-O(1/n^{p/2+1/2})\geq 1/2.$
	
	For $i\in[n],$ define $X_i=|\sum_{j=1}^t\alpha_jZ_{i,j}|.$ Then, by Bernstein's inequality
	\begin{align*}
	&\Pr\left[\left\|\sum_{j=1}^t\alpha_jZ_j\right\|_1-\E\left[\left\|\sum_{j=1}^t\alpha_jZ_j\right\|_1\right]\geq 0.5t^{1/p}n\right]\\
	=~&\Pr\left[\sum_{i=1}^n X_i-\E\left[\sum_{i=1}^n X_i\right]\geq 0.5t^{1/p}n\right]\\
	\leq~&\exp\left(-\frac{0.5\cdot0.5^2t^{2/p}n^2}{\sum_{i=1}^n \E[X_i^2]+\frac{1}{3}n^{1/2+1/(2p)}\cdot0.5t^{1/p}n}\right)\\
	\leq~&e^{-n^{\Theta(1)}}.
	\end{align*}
	The last inequality follows by Equation~\eqref{eq:the_bound_for_the_second_moment}. According to Equation~\eqref{eq:expectation_of_averaging}, with probability at least $1-e^{-n^{\Theta(1)}},$
	\begin{align*}
	\left\|\sum_{j=1}^t\alpha_jZ_j\right\|_1\leq \E\left[\left\|\sum_{j=1}^t\alpha_jZ_j\right\|_1\right]+0.5t^{1/p}n\leq O(t^{1/p}n).
	\end{align*}
\end{proof}

\subsection{Proof of Lemma~\ref{lem:hard_one_is_small}}
\begin{proof}
	For $i,j\in[n],$ we have
	\begin{align*}
	\Pr \left[ |\Delta_{i,j}|> n^{1/2+1/(2p)} \right]=\Pr \left[ |\Delta_{i,j}|^p> n^{p/2+1/2} \right]\leq \E \left[|\Delta_{i,j}|^p \right]/n^{p/2+1/2}\leq O(1/n^{p/2+1/2}).
	\end{align*}
	For column $j$, by taking a union bound,
	\begin{align*}
	\Pr[j\in H]=\Pr \left[\exists i\in[n],|\Delta_{i,j}|>n^{1/2+1/(2p)} \right]\leq O(1/n^{p/2-1/2}).
	\end{align*}
	Thus, $\E[|H|]\leq O(n^{1-(p-1)/2}).$ By applying Markov's inequality, we complete the proof.
\end{proof}

\subsection{Proof of Lemma~\ref{lem:remaining_is_small}}
\begin{proof}
	For $l\in\mathbb{N}_{\geq 0},$ define $G_l=\{j\mid \|\Delta_j\|_1\in(n\cdot2^l,n\cdot 2^{l+1}]\}.$ We have
	\begin{align*}
	\E[|G_l|] \leq & ~ \sum_{j=1}^n \Pr \left[ \|\Delta_{j}\|_1\geq n\cdot 2^l \right]\\
	= & ~ n\Pr \left[ \|\Delta_{1}\|_1\geq n\cdot 2^l \right]\\
	\leq & ~ n\Pr \left[n^{1-1/p}\|\Delta_{1}\|_p\geq n\cdot 2^l \right]\\
	= & ~ n\Pr \left[ n^{p-1}\|\Delta_{1}\|_p^p\geq n^p\cdot 2^{lp} \right]\\
	\leq & ~ n \E \left[ n^{p-1}\|\Delta_{1}\|_p^p \right]/(n^p\cdot 2^{lp})\\
	\leq & ~ O(n/2^{lp}).
	\end{align*}
	The first inequality follows by the definition of $G_l$. The second inequality follows since $\forall x\in \mathbb{R}^n,\|x\|_1\leq n^{1-1/p}\|x\|_p.$ The third inequality follows by Markov's inequality. The last inequality follows since $\forall i,j\in[n],\E[|\Delta_{i,j}|^p]=O(1).$
	
	Let $l^*\in\mathbb{N}_{\geq 0}$ satisfy $2^{l^*}< \epsilon r$ and $2^{l^*+1}\geq \epsilon r.$ We have
	\begin{align*}
	\E\left[\sum_{j:\|\Delta_j\|_1\geq n2^{l^*}}\|\Delta_j\|_1\right]&\leq \E\left[\sum_{l=l^*}^\infty|G_l|\cdot n2^{l+1}\right]
	=\sum_{l=l^*}^{\infty}\E[|G_l|]\cdot n2^{l+1}\\
	&\leq \sum_{l=l^*}^{\infty} O(n/2^{lp})\cdot n2^{l+1}
	= \sum_{l=l^*}^{\infty} O(n^2/2^{l(p-1)})\\
	&= O(n^2/2^{l^*(p-1)})
	= O(n^2/(\epsilon r)^{p-1})\\
	&=O(\epsilon n^2).
	\end{align*}
	By Markov's inequality, with probability at least $.999,$ $\sum_{j:\|\Delta_j\|_1\geq n2^{l^*}}\|\Delta_j\|_1\leq O(\epsilon n^2).$ Conditioned on $\sum_{j:\|\Delta_j\|_1\geq n2^{l^*}}\|\Delta_j\|_1\leq O(\epsilon n^2),$ for any $S\subset[n]$ with $|S|\leq n/r,$ we have
	\begin{align*}
	\sum_{j\in S}\|\Delta_j\|_1\leq |S|\cdot n2^{l^*}+\sum_{j:\|\Delta_j\|_1\geq n2^{l^*}}\|\Delta_j\|_1\leq \epsilon n^2+O(\epsilon n^2)=O(\epsilon n^2).
	\end{align*}
	The second inequality follows because $|S|\leq n/r,2^{l^*}\leq \epsilon r$ and $\sum_{j:\|\Delta_j\|_1\geq n2^{l^*}}\|\Delta_j\|_1\leq O(\epsilon n^2).$
\end{proof}

\subsection{Proof of Lemma~\ref{lem:easy_one_is_concentrated}}
\begin{proof}
	Let $M=n^{1/2+1/(2p)}.$ Let $Z\in\mathbb{R}^{n}$ be a random vector where $Z_{i}$ are i.i.d. random variables with probability density function
	\begin{align*}
	g(x)=\left\{\begin{array}{ll}f(x)/\Pr[|\Delta_1|\leq M]& \text{~if~} 0\leq x\leq M ; \\ 0 & \text{~otherwise.}\end{array}\right.
	\end{align*}
	where $f(x)$ is the probability density function of $|\Delta_1|.$ Then $\forall a>0$
	\begin{align*}
	\Pr\left[\|\Delta\|_1\leq a\mid \forall i\in[n],|\Delta_i|\leq M\right]=\Pr\left[\|Z\|_1\leq a\right].
	\end{align*}
	For $i\in [n],$ because $\E[|\Delta_i|]=1,$ it holds that $\E[Z_i]\leq 1.$ We have $\E[\sum_{i=1}^n Z_i]\leq n.$ For the second moment, we have
	\begin{align*}
	\E[Z_i^2]&=\int_{0}^{M} x^2f(x)/\Pr[|\Delta_1|\leq M]\mathrm{d}x\\
	&\leq M^{2-p}/\Pr[|\Delta_1|\leq M]\int_{0}^{M} x^p f(x)\mathrm{d}x\\
	&\leq O(M^{2-p})\\
	&\leq O(n^{2-p})
	\end{align*}
	where the second inequality follows by $\E[|\Delta_1|^p]=O(1),$ and $\Pr[|\Delta_1|\leq M]\geq 1-\E[|\Delta_1|^p]/M^p\geq 1/2.$
	
	Then by Bernstein's inequality, we have
	\begin{align*}
	&\Pr\left[\sum_{i=1}^n Z_i-E\left[\sum_{i=1}^n Z_i\right]\geq\epsilon n\right]\\
	\leq~&\exp\left(\frac{-0.5\epsilon^2n^2}{\sum_{i=1}^n \E[Z_i^2]+\frac{1}{3}M\cdot\epsilon n}\right)\\
	\leq~&e^{-n^{\Theta(1)}}.
	\end{align*}
	Thus,
	\begin{align*}
	\Pr\left[\|\Delta\|_1\leq (1+\epsilon)n\mid \forall i\in[n],|\Delta_i|\leq M\right]=\Pr\left[\|Z\|_1\leq (1+\epsilon)n\right]\geq 1-e^{-n^{\Theta(1)}}.
	\end{align*}
\end{proof}

\subsection{Proof of Lemma~\ref{lem:good_tuple_low_cost}}
\begin{proof}
	Recall that $(S_1,S_2,\cdots,S_t,i)$ is equivalent to $(S_{[t]},i)$. Let $(S_{[t]},i)$ be an $(A^*,q,t,1/2)$-good tuple which satisfies $H\cap\left(\bigcup_{j=1}^t S_j\right)=\emptyset.$ Let $C$ be the core of $(S_{[t]},i).$ Let $(x_1,x_2,\cdots,x_t)$ be the coefficients tuple corresponding to $(S_{[t]},i).$ Then we have that
	\begin{align*}
	&\left\|\frac{1}{|C|}\sum_{j=1}^t A_{S_j}x_j-A_i\right\|_1\\
	=~&\left\|\frac{1}{|C|}\sum_{j=1}^t \left(A^*_{S_j}+\Delta_{S_j}\right)x_j-(A^*_i+\Delta_i)\right\|_1\\
	\leq~&\left\|\frac{1}{|C|}\sum_{j=1}^t A^*_{S_j}x_j-A^*_i\right\|_1+\|\Delta_i\|_1+\frac{1}{|C|}\left\|\sum_{j=1}^t \Delta_{S_j}x_j\right\|_1\\
	=~&\|\Delta_i\|_1+\frac{1}{|C|}\left\|\sum_{j=1}^t \Delta_{S_j}x_j\right\|_1\\
	\leq~&\|\Delta_i\|_1+\frac{2}{t}\left\|\sum_{j=1}^t \Delta_{S_j}x_j\right\|_1\\
	\leq~&\|\Delta_i\|_1+O\left(\frac{1}{t}\cdot(qt)^{1/p}n\right)\\
	=~&\|\Delta_i\|_1+O\left(q^{1/p}/t^{1-1/p}n\right)
	\end{align*}
	holds with probability at least $1-2^{-n^{\Theta(1)}}.$ The first equality follows using $A=A^*+\Delta.$ The first inequality follows using the triangle inequality. The second equality follows using the definition of the core and the coefficients tuple (see Definition~\ref{def:core_and_good_tuple} and Definition~\ref{def:coefficients_tuple}). The second inequality follows using Definition~\ref{def:core_and_good_tuple}. The third inequality follows by Lemma~\ref{lem:averaging_works} and the condition that $H\cap\left(\bigcup_{j=1}^t S_j\right)=\emptyset.$
	
	Since the size of $\left|\{i\}\cup\left(\bigcup_{j=1}^t S_j\right)\right|=qt+1,$ the total number of $(A^*,q,t,1/2)-$good tuples is upper bounded by $n^{qt+1}\leq 2^{n^{o(1)}}.$ By taking a union bound, we complete the proof.
\end{proof}

\subsection{Proof of Lemma~\ref{lem:label_uniform_samples}}
\begin{proof}
	For $j\in[t],$ by symmetry of the choices of $S_j$ and $i$, we have $\Pr[i\in R_{A^*}(S_j\cup\{i\})]\leq k/(q+1).$ Thus, by Markov's inequality,
	\begin{align*}
	&\Pr[|\{j\in[t]\mid i\in R_{A^*}(S_j\cup\{i\})\}|>0.5t]\\
	\leq~& \E[|\{j\in[t]\mid i\in R_{A^*}(S_j\cup\{i\})\}|]/(0.5t)\\
	\leq~&2k/q.
	\end{align*}
	Thus,
	\begin{align*}
	&\Pr[|\{j\in[t]\mid i\not\in R_{A^*}(S_j\cup\{i\})\}|\geq0.5t]\geq 1-2k/q.
	\end{align*}
\end{proof}

\subsection{Proof of Lemma~\ref{lem:easy_to_find_good_tuple}}
\begin{proof}
	For $S_1,S_2,\cdots,S_t\in {[n]\choose q}$ with $\sum_{j=1}^t |S_j|=qt,$ define
	\begin{align*}
	P_{(S_1,S_2,\cdots,S_t)}=\Pr_{i\in [n]\setminus \left(\bigcup_{j=1}^t S_j\right)}[(S_1,S_2,\cdots,S_t,i)\text{~is an $(A^*,q,t,1/2)-$good tuple~}].
	\end{align*}
	Let set $T$ be defined as follows:
	\begin{align*}
	\left\{ (S_1,S_2,\cdots,S_t) ~\bigg|~ S_1,S_2,\cdots,S_t\in {[n]\choose q} \text{~with~} \sum_{j=1}^t |S_j|=qt \right\}.
	\end{align*}
	Let $G$ be the set of all the $(A^*,q,t,1/2)-$good tuples. Then, we have
	\begin{align*}
	&\Pr_{(S_1,S_2,\cdots,S_t)\sim T}\left[\left|\left\{i\in[n]\setminus \left(\cup_{j=1}^t S_j\right)\mid (S_1,S_2,\cdots,S_t,i)\in G\right\}\right|\geq (1-4k/q)(n-qt)\right]\\
	=~&\frac{1}{|T|}\left|\left\{(S_1,S_2,\cdots,S_t)\mid (S_1,S_2,\cdots,S_t)\in T\text{~and~}P_{(S_1,S_2,\cdots,S_t)}\geq 1-4k/q\right\}\right|\\
	=~&\frac{1}{|T|}\underset{P_{(S_1,S_2,\cdots,S_t)}\geq 1-4k/q}{\sum_{(S_1,S_2,\cdots,S_t)\in T}} 1\\
	\geq~& \frac{1}{|T|}\underset{P_{(S_1,S_2,\cdots,S_t)}\geq 1-4k/q}{\sum_{(S_1,S_2,\cdots,S_t)\in T}} P_{(S_1,S_2,\cdots,S_t)}\\
	\geq~&1-2k/q-\frac{1}{|T|}\underset{P_{(S_1,S_2,\cdots,S_t)}< 1-4k/q}{\sum_{(S_1,S_2,\cdots,S_t)\in T}} P_{(S_1,S_2,\cdots,S_t)}\\
	\geq~&1-2k/q-(1-4k/q)\\
	\geq~&2k/q.
	\end{align*}
	The second inequality follows from Lemma~\ref{lem:label_uniform_samples}
	\begin{align*}
	\frac{1}{|T|}\underset{P_{(S_1,S_2,\cdots,S_t)}< 1-4k/q}{\sum_{(S_1,S_2,\cdots,S_t)\in T}} P_{(S_1,S_2,\cdots,S_t)}+\frac{1}{|T|}\underset{P_{(S_1,S_2,\cdots,S_t)}\geq 1-4k/q}{\sum_{(S_1,S_2,\cdots,S_t)\in T}} P_{(S_1,S_2,\cdots,S_t)}\geq 1-2k/q.
	\end{align*}
\end{proof}

\section{Hardness Result}\label{sec:hard}

\paragraph{An overview of the hardness result.}
Recall that we overcame the column subset selection lower bound
of \cite{swz17}, which shows for entrywise $\ell_1$-low rank approximation
that there are matrices for which any subset of $\poly(k)$
columns spans at best a $k^{\Omega(1)}$-approximation. Indeed, we came up with a column subset
of size $\poly(k(\epsilon^{-1} + \log n))$ spanning a $(1+\epsilon)$-approximation. To do this,
we assumed $A = A^* + \Delta$, where $A^*$ is an arbitrary rank-$k$ matrix, and the entries
are i.i.d. from a distribution with $\E[|\Delta_{i,j}|] = 1$
and $\E[|\Delta_{i,j}|^p] = O(1)$ for any real number $p$ strictly greater than
$1$.

Here we show an assumption on the moments is necessary,
by showing if instead $\Delta$ were drawn from a matrix of i.i.d. Cauchy random variables, for which the
$p$-th moment is undefined or infinite for all $p \geq 1$,
then for any subset of $n^{o(1)}$ columns, it spans at best a $1.002$ approximation. The input
matrix $A = n^C 1 \cdot 1^\top  + \Delta$, where $C > 0$ is a constant and we show that $n^{\Omega(1)}$
columns need to be chosen to obtain a $1.001$-approximation, even for $k = 1$.
Note that this
result is stronger than that in \cite{swz17} in that it rules out column subset selection even if one
were to choose $n^{o(1)}$ columns; the result in \cite{swz17} requires at most $\poly(k)$ columns, which
for $k = 1$, would just rule out $O(1)$ columns. Our
main goal here is to show that a moment assumption on our distribution is necessary, and our
result also applies to a symmetric noise distribution which is i.i.d. on all entries, whereas
the result of \cite{swz17} requires a specific deterministic pattern (namely, the identity matrix)
on certain entries.

Our main theorem is given in Theorem \ref{thm:dis_l1_hardness}. The outline of the proof is as follows. We first condition on the event that $\|\Delta\|_1 \leq \frac{4.0002}{\pi} n^2 \ln n$, which is shown in Lemma \ref{lem:bound_of_Delta} and follows form standard analysis of sums of absolute values of Cauchy random variables. Thus, it is sufficient to show if we choose any subset $S$ of $r = n^{o(1)}$ columns, denoted by the submatrix $A_S$, then
$\min_{X \in \mathbb{R}^{r \times n}}\|A_SX-A\|_1 \geq \frac{4.01}{\pi} \cdot n^2 \ln n$, as indeed then $\min_{X \in \mathbb{R}^{r \times n}}\|A_SX-A\|_1 \geq 1.002 \|\Delta\|_1$ and we rule out a $(1+\epsilon)$-approximation for $\epsilon$ a sufficiently small constant. To this end, we instead show for a fixed $S$, that $\min_{X \in \mathbb{R}^{r \times n}}\|A_SX-A\|_1 \geq \frac{4.01}{\pi} \cdot n^2 \ln n$ with probability $1-2^{-n^{\Theta(1)}}$, and then apply a union bound over all $S$. To prove this for a single subset $S$, we argue that for every ``coefficient matrix'' $X$, that
$\|A_SX-A\|_1 \geq \frac{4.01}{\pi} \cdot n^2 \ln n$.

We show in Lemma \ref{lem:for_all_alpha_can_not_be_too_large}, that with probability $1-(1/n)^{\Theta(n)}$ over $\Delta$, simultaneously for all $X$,
if $X$ has a column $X_j$ with $\|X_j\|_1 \geq n^c$ for a constant $c >0$,
then $\|A_SX_j - A_j\|_1 \geq .9n^3$, which is already too large to provide an $O(1)$-approximation.
Note that we need such a high probability bound to later union bound over {\it all $S$}.
Lemma \ref{lem:for_all_alpha_can_not_be_too_large} is in turn shown via a net argument on all $X_j$ (it suffices to prove this for a single $j \in [n]$,
since there are only $n$ different
$j$, so we can union bound over all $j$). The net bounds are given in Definition \ref{def:l1epsnet} and Definition \ref{lem:l1_eps_net_size},
and the high probability bound for a given coefficient vector $X_j$ is shown in Lemma \ref{lem:for_each_alpha_can_not_be_too_large}, where we use properties
of the Cauchy distribution. Thus, we can assume $\|X_j\|_1 < n^c$ for all $j \in [n]$. We also show in Fact \ref{fac:bound_of_sum_of_alpha},
conditioned on the fact that $\|\Delta\|_1 \leq \frac{4.002}{\pi} n^2 \ln n$,
it holds that for {\it any} vector $X_j$, if $\|X_j\|_1 < n^c$ and $|1-{\bf 1}^\top X_j| > 1-10^{-20}$,
then $\|A_SX-A\|_1 \geq \|A_SX_j - A_j\|_1 > n^3$. The intuition here is $A = n^{c_0}{\bf 1} \cdot {\bf 1^\top} + \Delta$
for a large constant $c_0$, and $X_j$ does not have enough norm ($\|X_j\|_1 \leq n^c$) or correlation with the vector ${\bf 1}$
($|1-{\bf 1}^\top X_j| > 1-10^{-20}$) to make $\|A_SX_j - A_j\|_1$ small.

Given the above, we can assume both that
$\|X_j\|_1 \leq n^c$ and $|1-{\bf 1}^\top X_j| \leq 1-10^{-20}$ for all columns $j$ of our coefficient matrix $X$. We can also assume
that $\|A_SX-A\|_1 \leq 4n^2 \ln n$, as otherwise such an $X$ already satisfies $\|A_SX-A\|_1 \geq \frac{4.01}{\pi} \cdot n^2 \ln n$ and
we are done.
To analyze $\|A_SX-A\|_1 = \sum_{i,j} |(A_SX-A_{[n]\setminus S})_{i,j}|$
in Theorem \ref{thm:dis_l1_hardness}, we then split the sum over ``large coordinates'' $(i,j)$ for which
$|\Delta_{i,j}|> n^{1.0002}$, and ``small coordinates'' $(i,j)$ for which $|\Delta_{i,j}| < n^{.9999}$, and since we
seek to lower bound $\|A_SX-A_{[n] \setminus S}\|_1$, we drop the remaining coordinates $(i,j)$. To handle large coordinates, we observe that since the column span of
$A_S$ is only $r = n^{o(1)}$-dimensional, as one ranges over all vectors $y$ in its span of $1$-norm, say, $O(n^2 \ln n)$, there is only a small subset
$T$, of size at most $n^{.99999}$ of coordinates $i \in [n]$ for which we could ever have $|y_i| \geq n^{1.0001}$.
We show this in Lemma \ref{lem:size_of_the_bad_region}. This uses the property of vectors in low-dimensional subspaces, and has been exploited in earlier works in the context of designing so-called subspace embeddings
\cite{cw13,mm13}. We call $T$ the ``bad region'' for $A_S$.
While the column span of $A_S$ depends on $\Delta_S$, it is independent of $\Delta_{[n] \setminus S}$, and thus it is extremely
unlikely that the large coordinate of $\Delta_S$ ``match up'' with the bad region of $A_S$. This is captured in Lemma \ref{lem:bound_for_large_part},
where we show that if $\|A_SX-A_{[n] \setminus S}\|_1 \leq 4n^2 \ln n$ (as we said we could assume above),
then $\sum_{\textrm{large coordinates }i,j}|(A_SX-A_{[n] \setminus S})_{i,j}|$ is at least $\frac{1.996}{\pi}n^2 \ln n$.
Intuitively, the heavy coordinates make up about
$\frac{2}{\pi}n^2 \ln n$ of the total mass of $\|\Delta\|_1$, by tail bounds of the Cauchy distribution, and for any set $S$ of size $n^{o(1)}$,
$A_S$ fits at most a small portion of this, still leaving us left with $\frac{1.996}{\pi} n^2 \ln n$ in cost. Our goal is to show
that $\|A_SX-A_{[n] \setminus S}\|_1 \geq \frac{4.01}{\pi} \cdot n^2 \ln n$, so we still have a way to go.

We next analyze $\sum_{\textrm{small coordinates }i,j}|(A_SX-A_{[n] \setminus S})_{i,j}|$. Via Bernstein's inequality, in Lemma \ref{lem:small_part_same_sign_delta} we argue
that for any fixed vector $y$ and random vector $\Delta_j$ of i.i.d. Cauchy entries, roughly half of the contribution of coordinates to
$\|\Delta_j\|_1$ will come from coordinates
$j$ for which sign$(y_j) = $sign$(\Delta_j)$ and $|\Delta_j| \leq n^{.9999}$, giving us a contribution of roughly $\frac{.9998}{\pi} n \ln n$
to the cost. The situation we will actually be in, when analyzing a
column of $A_SX-A_{[n] \setminus S}$, is that of taking the sum of two independent Cauchy vectors, shifted by a multiple of ${\bf 1}^\top$.
We
analyze this setting in Lemma \ref{lem:small_part_same_sign_alpha}, after first conditioning on certain level sets having typical behavior in Lemma \ref{lem:bound_of_Cauchy_level_size}.
This roughly doubles the contribution, gives us roughly a contribution of
$\frac{1.996}{\pi} n^2 \ln n$ from coordinates $j$ for which $(i,j)$ is a small coordinate and we look at coordinates $i$ on which the
sum of two independent Cauchy vectors have the same sign. Combined with the contribution from the heavy coordinates, this gives
us a cost of roughly $\frac{3.992}{\pi} n^2 \ln n$, which still falls short of the $\frac{4.01}{\pi} \cdot n^2 \ln n$ total cost
we are aiming for. Finally, if we sum up two independent Cauchy vectors and look at the contribution to the sum from coordinates which
disagree in sign, due to the anti-concentration of the Cauchy distribution we can still ``gain a little bit of cost'' since the values, although differing in sign, are still likely not to be very close in magnitude. We formalize this
in Lemma \ref{lem:small_part_different_sign}. We combine all of the costs from small coordinates in Lemma \ref{lem:fixed_small_part}, where we show we obtain a contribution
of at least $\frac{2.025}{\pi} n \ln n$. This is enough, when combined with our earlier $\frac{1.996}{\pi} n^2 \ln n$ contribution
from the heavy coordinates, to obtain an overall $\frac{4.01}{\pi} \cdot n^2 \ln n$ lower bound on the cost, and conclude the proof
of our main theorem in Theorem \ref{thm:dis_l1_hardness}.

In the remaining sections, we will present detailed proofs.
\subsection{A Useful Fact}

\begin{fact}\label{fac:bound_of_sum_of_alpha}
Let $c_0>0$ be a sufficiently large constant. Let $u= n^{c_0} \cdot {\bf 1 } \in \R^n$ and $\Delta\in\mathbb{R}^{n\times (d+1)}.$ If $\sum_{i=1}^{d+1} \| \Delta_{i}\|_1 \leq n^3$ and if $\alpha \in \R^d$ satisfies $ | 1 - {\bf 1}^\top \alpha | > 1/n^{c_1}$ and $\|\alpha\|_1\leq n^c$, where $0<c<c_0-10$ is a constant and $c_1>3$ is another constant depending on $c_0,c$, then
\begin{align*}
\| u - u {\bf 1}^\top \alpha + \Delta_{d+1} - \Delta_{[d]} \alpha \|_1 > n^3.
\end{align*}
\end{fact}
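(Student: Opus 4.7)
The plan is a short triangle-inequality calculation whose point is to quantify the two contributions in $u - u\mathbf{1}^\top\alpha + \Delta_{d+1} - \Delta_{[d]}\alpha$: a deterministic ``signal'' term coming from the constant vector $u$, and a ``noise'' term coming from $\Delta$.

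First I would simplify the signal term. Since $u = n^{c_0}\mathbf{1}$, we have
\begin{align*}
u - u\mathbf{1}^\top\alpha \;=\; n^{c_0}\bigl(1 - \mathbf{1}^\top\alpha\bigr)\,\mathbf{1},
\end{align*}
so this is a constant vector and its $\ell_1$-norm is exactly $n^{c_0+1}\,|1-\mathbf{1}^\top\alpha|$. Using the hypothesis $|1-\mathbf{1}^\top\alpha|>1/n^{c_1}$, this norm is at least $n^{c_0-c_1+1}$.

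Next I would upper bound the noise term by the triangle inequality and H\"older:
\begin{align*}
\|\Delta_{d+1} - \Delta_{[d]}\alpha\|_1 \;\leq\; \|\Delta_{d+1}\|_1 + \sum_{i=1}^d |\alpha_i|\,\|\Delta_i\|_1 \;\leq\; \|\Delta_{d+1}\|_1 + \|\alpha\|_1\cdot\sum_{i=1}^d\|\Delta_i\|_1.
\end{align*}
By the assumptions $\sum_{i=1}^{d+1}\|\Delta_i\|_1\leq n^3$ and $\|\alpha\|_1\leq n^c$, the right-hand side is at most $n^3 + n^{c+3} \leq 2n^{c+3}$.

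Finally I would apply the reverse triangle inequality to conclude
\begin{align*}
\| u - u\mathbf{1}^\top\alpha + \Delta_{d+1} - \Delta_{[d]}\alpha\|_1 \;\geq\; n^{c_0-c_1+1} \;-\; 2n^{c+3}.
\end{align*}
The only remaining step is the choice of the constant $c_1>3$: since $c_0-c>10$, I would pick $c_1$ in the range $(3,\,c_0-c-3]$ (for example $c_1=c_0-c-3$), which makes the signal term at least $n^{c+4}$ and therefore the right-hand side strictly exceeds $n^3$. There is no real ``hard part'' here; the only subtlety is to record the dependence $c_1 = c_1(c_0,c)$ required to make the signal swamp both the noise bound $2n^{c+3}$ and the target $n^3$, and to verify that the allowed window for $c_1$ is nonempty under $c<c_0-10$.
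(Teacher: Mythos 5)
Your proposal is correct and follows essentially the same route as the paper's proof: lower-bound the signal term $\|u-u\mathbf{1}^\top\alpha\|_1 = n^{c_0+1}|1-\mathbf{1}^\top\alpha|\geq n^{c_0+1-c_1}$, upper-bound the noise via the triangle inequality using $\|\alpha\|_1\leq n^c$ and $\sum_i\|\Delta_i\|_1\leq n^3$, and subtract. The only (inessential) difference is that your noise bound $2n^{c+3}$ is slightly tighter than the paper's $n^{5+c}$, so your admissible window for $c_1$ (needing $c_1\leq c_0-c-3$ rather than $c_0-c_1>c+5$) is marginally wider; both windows are nonempty under $c<c_0-10$.
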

\begin{proof}
\begin{align*}
&\| u - u {\bf 1}^\top \alpha + \Delta_{d+1} - \Delta_{[d]} \alpha \|_1\\
\geq ~& |1-{\bf 1}^\top \alpha|\cdot\|u\|_1-\|\Delta_{d+1}\|_1-\|\Delta_{[d]}\alpha\|_1\\
\geq ~& |1-{\bf 1}^\top \alpha|\cdot n\cdot n^{c_0}-n^3-n^4\|\alpha\|_1\\
\geq~ & |1-{\bf 1}^\top \alpha|\cdot n\cdot n^{c_0}-n^{5+c}\\
\geq~ & n^{c_0+1-c_1}-n^{5+c}\\
\geq~ & n^3.
\end{align*}
The first inequality follows by the triangle inequality. The second inequality follows since $u= n^{c_0} \cdot {\bf 1 } \in \R^n$ and $\sum_{i=1}^{d+1} \| \Delta_{i}\|_1 \leq n^3.$ The third inequality follows since $\|\alpha\|_1\leq n^c.$ The fourth inequality follows since $ | 1 - {\bf 1}^\top \alpha | > 1/n^{c_1}.$ The last inequality follows since $c_0-c_1>c+5.$
\end{proof}

\subsection{One-Sided Error Concentration Bound for a Random Cauchy Matrix}

\begin{lemma}[Lower bound on the cost]\label{lem:bound_of_Delta} If $n$ is sufficiently large, then 
\begin{align*}
\Pr_{\Delta \sim \{C(0,1)\}^{n\times n}} \left[ \| \Delta \|_1 \leq \frac{4.0002}{\pi} n^2 \ln n \right] \geq 1-O(1/\log\log n).
\end{align*}
\end{lemma}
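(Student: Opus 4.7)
The plan is to truncate each Cauchy entry at a carefully chosen threshold $M$ slightly above $n^2$, use the Cauchy tail bound to show that no entry exceeds $M$ except with probability $O(1/\log\log n)$, and then apply Chebyshev's inequality to the resulting bounded sum.

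I would set $M = n^{2}\log\log n$ and define $Y_{i,j} = |\Delta_{i,j}|\cdot \mathbf{1}_{|\Delta_{i,j}|\le M}$. Using the standard Cauchy tail estimate
\[
\Pr[|\Delta_{i,j}|>M] = \frac{2}{\pi}\int_M^\infty \frac{dx}{1+x^2} \le \frac{2}{\pi M},
\]
a union bound over all $n^2$ entries shows that, except with probability at most $2n^2/(\pi M) = O(1/\log\log n)$, every entry satisfies $|\Delta_{i,j}|\le M$ and consequently $\|\Delta\|_1 = \sum_{i,j}Y_{i,j}$.

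Next I would estimate the expectation and variance of $Y_{i,j}$ directly from the Cauchy density:
\[
\E[Y_{i,j}] \;=\; \frac{1}{\pi}\ln(1+M^2) \;=\; \frac{4\ln n + 2\ln\log\log n + O(1)}{\pi},
\qquad
\E[Y_{i,j}^2] \;=\; \frac{2(M-\arctan M)}{\pi} \;\le\; \frac{2M}{\pi}.
\]
Since $2\ln\log\log n = o(\ln n)$, we get $\E[\sum_{i,j}Y_{i,j}] \le (4.0001/\pi)\,n^2\ln n$ for all sufficiently large $n$, while $\var(\sum_{i,j}Y_{i,j}) \le n^2\cdot 2M/\pi = O(n^4\log\log n)$. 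Applying Chebyshev's inequality with slack $t = (0.0001/\pi)\,n^2\ln n$ yields
\[
\Pr\!\left[\textstyle\sum_{i,j}Y_{i,j} > \tfrac{4.0002}{\pi}\,n^2\ln n\right] \;\le\; \frac{\var(\sum_{i,j}Y_{i,j})}{t^2} \;=\; O\!\left(\frac{\log\log n}{\ln^2 n}\right) \;=\; o(1/\log\log n).
\]
Combining this with the earlier $O(1/\log\log n)$ failure probability of the truncation event via a union bound gives the claimed conclusion.

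The main obstacle is choosing $M$ so that three competing requirements all hold: $M$ must be large enough that the expected number of exceedances is $O(1/\log\log n)$ (requiring $M \gtrsim n^2\log\log n$); $M$ must be small enough that the expected truncated sum $(n^2/\pi)\ln(1+M^2)$ stays below the target, which forces $\ln M = (2+o(1))\ln n$, i.e.\ $M = n^{2+o(1)}$; and the resulting variance must still be small enough for Chebyshev to beat $1/\log\log n$. The choice $M = n^2\log\log n$ sits in this narrow window. A minor issue is that Chebyshev technically needs the centered variance rather than the raw second moment, but $\var(Y_{i,j})\le \E[Y_{i,j}^2]$ suffices.
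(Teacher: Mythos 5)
Your proposal is correct and follows essentially the same route as the paper: truncate the entries at a threshold of the form $n^2\log\log n$, use a union bound over the $n^2$ entries (which supplies the dominant $O(1/\log\log n)$ failure probability), observe that the truncated sum has expectation $(4+o(1))\tfrac{1}{\pi}n^2\ln n$, and then concentrate the truncated sum. The only difference is that you use Chebyshev's inequality where the paper applies Bernstein's inequality, which is harmless here since the concentration error ($O(\log\log n/\ln^2 n)$ vs.\ the paper's $O(1/\ln n)$) is in either case negligible compared to the truncation term.
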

\begin{proof}
Let $\Delta\in\mathbb{R}^{n\times n}$ be a random matrix such that each entry is an i.i.d. $C(0,1)$ random Cauchy variable. Let $B=n^2\ln\ln n.$ Let $Z\in\mathbb{R}^{n\times n}$ and $\forall i,j\in[n],$
\begin{align*}
Z_{i,j}=\left\{\begin{array}{ll}|\Delta_{i,j}|&|\Delta_{i,j}|<B\\ B & \text{Otherwise}\end{array}\right..
\end{align*}
For fixed $i,j\in[n],$ we have
\begin{align*}
\E[Z_{i,j}]&=\frac{2}{\pi}\int_{0}^B \frac{x}{1+x^2} \mathrm{d}x+\Pr[|\Delta_{i,j}|\geq B]\cdot B\\
&= \frac{1}{\pi}\ln(B^2+1)+\Pr[|\Delta_{i,j}|\geq B]\cdot B\\
&\leq \frac{1}{\pi}\ln(B^2+1)+1
\end{align*}
where the first inequality follows by the cumulative distribution function of a half Cauchy random variable. We also have $\E[Z_{i,j}]\geq \frac{1}{\pi}\ln(B^2+1).$
For the second moment, we have
\begin{align*}
\E[Z_{i,j}^2]&=\frac{2}{\pi}\int_{0}^B \frac{x^2}{1+x^2} \mathrm{d}x+\Pr[|\Delta_{i,j}|\geq B]\cdot B^2\\
&=\frac{2}{\pi}(B-\tan^{-1} B)+\Pr[|\Delta_{i,j}|\geq B]\cdot B^2\\
&\leq \frac{2}{\pi} B + B\\
&\leq 2B
\end{align*}
where the first inequality follows by the cumulative distribution function of a half Cauchy random variable.
By applying Bernstein's inequality, we have
\begin{align}
&\Pr\left[\|Z\|_1-\E[\|Z\|_1]>0.0001 \E[\|Z\|_1]\right]\notag\\
\leq~&\exp\left(-\frac{0.5\cdot0.0001^2 \E[\|Z\|_1]^2}{n^2\cdot 2B+\frac{1}{3}B\cdot 0.0001 \E[\|Z\|_1]}\right)\notag\\
\leq~&\exp(-\Omega(\ln n/\ln\ln n))\notag\\
\leq~&O(1/\ln n).\label{eq:bernstein_sum_of_Z}
\end{align}
The first inequality follows by the definition of $Z$ and the second moment of $Z_{i,j}.$ The second inequality follows from $\E[\|Z\|_1]=\Theta(n^2\ln n)$ and $B=\Theta(n^2\ln\ln n).$
Notice that
\begin{align*}
~&\Pr \left[ \| \Delta \|_1 > \frac{4.0002}{\pi} n^2 \ln n \right]\\
=~&\Pr \left[ \| \Delta \|_1 > \frac{4.0002}{\pi} n^2 \ln n \mid \forall i,j, |\Delta_{i,j}|<B \right]\Pr\left[\forall i,j, |\Delta_{i,j}|<B\right]\\
~&+\Pr \left[ \| \Delta \|_1 > \frac{4.0002}{\pi} n^2 \ln n \mid \exists i,j, |\Delta_{i,j}|\geq B \right]\Pr\left[\exists i,j, |\Delta_{i,j}|\geq B\right]\\
\leq~&\Pr \left[ \| \Delta \|_1 > \frac{4.0002}{\pi} n^2 \ln n \mid \forall i,j, |\Delta_{i,j}|<B \right]+\Pr\left[\exists i,j, |\Delta_{i,j}|\geq B\right]\\
\leq~&\Pr \left[ \| Z \|_1 > \frac{4.0002}{\pi} n^2 \ln n  \right]+\Pr\left[\exists i,j, |\Delta_{i,j}|\geq B\right]\\
\leq~&\Pr \left[ \| Z \|_1 > \frac{4.0002}{\pi} n^2 \ln n  \right]+n^2\cdot 1/B\\
\leq~&\Pr \left[ \| Z \|_1 > 1.0001  \E[\|Z\|_1]\right]+n^2\cdot 1/B\\
\leq~&O(1/\log(n))+O(1/\log \log n)\\
\leq~&O(1/\log\log n)\\
\end{align*}
The second inequality follows by the definition of $Z$. The third inequality follows by the union bound and the cumulative distribution function of a half Cauchy random variable. The fourth inequality follows from $\E[\|Z\|_1]\leq n^2(1/\pi\cdot\ln(B^2+1)+1)\leq 4.0000001/\pi\cdot n^2\ln n$ when $n$ is sufficiently large.
\end{proof}

\subsection{``For Each'' Guarantee}

In the following Lemma, we show that, for each fixed coefficient vector $\alpha$, if the entry of $\alpha$ is too large, the fitting cost cannot be small.
\begin{lemma}[For each fixed $\alpha$, the entry cannot be too large]\label{lem:for_each_alpha_can_not_be_too_large}
Let $c>0$ be a sufficiently large constant, $n\geq d\geq 1,$ $u\in\mathbb{R}^n$ be any fixed vector and $\Delta \in \R^{n\times d}$ be a random matrix where $\forall i\in[n],j\in[d],\Delta_{i,j} \sim C(0,1)$ independently. For any fixed $\alpha \in \R^d$ with $\| \alpha \|_1 = n^{c}$,
\begin{align*}
\Pr_{\Delta \sim \{ C(0,1) \}^{n\times d} } [ \| ( u \cdot {\bf 1}^\top + \Delta ) \alpha \|_1 > n^3 ] > 1 - (1/n)^{\Theta(n)}.
\end{align*}
\end{lemma}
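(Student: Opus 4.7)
The plan is to reduce the lower bound on $\|(u\mathbf{1}^\top + \Delta)\alpha\|_1$ to an anti-concentration bound for each coordinate, then exploit independence across rows to get a failure probability of $(1/n)^{\Theta(n)}$.

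First I would expand coordinate $i$ as
\[
y_i \;:=\; [(u\mathbf{1}^\top + \Delta)\alpha]_i \;=\; (\mathbf{1}^\top \alpha)u_i + \sum_{j=1}^d \Delta_{i,j}\alpha_j.
\]
By the 1-stability of the Cauchy distribution (the characteristic function of a standard Cauchy is $e^{-|t|}$, so $\sum_j \alpha_j \Delta_{i,j}$ has characteristic function $e^{-\|\alpha\|_1 |t|}$), the random variable $X_i := \sum_{j=1}^d \Delta_{i,j}\alpha_j$ is distributed as $C(0,\|\alpha\|_1) = C(0,n^c)$, and since the rows of $\Delta$ are independent, $X_1,\ldots,X_n$ are mutually independent. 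With the deterministic shift $\mu_i := (\mathbf{1}^\top\alpha)u_i$ we therefore have $y_i = \mu_i + X_i$.

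Next I would observe that if $\|(u\mathbf{1}^\top+\Delta)\alpha\|_1 = \sum_i |y_i| \leq n^3$, then necessarily $|y_i| \leq n^3$ for \emph{every} $i\in[n]$, because all terms are nonnegative. Hence it suffices to upper bound $\Pr[\bigcap_i \{|y_i| \leq n^3\}]$. The density of $C(0,n^c)$ is bounded uniformly by $\frac{1}{\pi n^c}$, so for any deterministic shift $\mu_i$,
\[
\Pr[|\mu_i + X_i| \leq n^3] \;\leq\; \frac{2n^3}{\pi n^c}.
\]
By independence of the $X_i$'s,
\[
\Pr\bigl[\|(u\mathbf{1}^\top + \Delta)\alpha\|_1 \leq n^3\bigr] \;\leq\; \left(\frac{2n^3}{\pi n^c}\right)^n \;=\; (1/n)^{(c-3)n - o(n)},
\]
which for any constant $c > 3$ is $(1/n)^{\Theta(n)}$, giving the claim once $c$ is taken sufficiently large.

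I do not anticipate a real obstacle here: the argument rests only on two textbook facts about the Cauchy distribution (stability under linear combinations and a uniform upper bound on the density) together with row-wise independence of $\Delta$. The one point worth emphasizing is that the density bound $\frac{1}{\pi n^c}$ is uniform in the shift $\mu_i$, which is precisely what allows the argument to absorb the arbitrary deterministic vector $u$; the ``flatness'' of the Cauchy density at the relevant scale $n^c \gg n^3$ is what drives the $(1/n)^{\Theta(n)}$ bound.
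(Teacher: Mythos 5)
Your proposal is correct and follows essentially the same route as the paper: reduce via $1$-stability to $n$ independent coordinates each distributed as a shifted $C(0,n^c)$ variable, apply a per-coordinate anti-concentration bound, and multiply over the independent rows. The only (minor) difference is how the deterministic shift is absorbed: the paper drops it by a symmetry/unimodality argument ($\Pr[\|\Delta\alpha\|_1<n^3]\geq \Pr[\|\Delta\alpha+u\,\mathbf{1}^\top\alpha\|_1<n^3]$) and then uses the Cauchy CDF with $c>10$, whereas you use the uniform density bound $\frac{1}{\pi n^c}$, which is shift-independent and gives the same $(1/n)^{\Theta(n)}$ conclusion slightly more directly.
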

\begin{proof}
Let $c$ be a sufficiently large constant.
Let $\alpha\in\mathbb{R}^d$ with $\|\alpha\|_1=n^c$. Let $u\in\mathbb{R}^n$ be any fixed vector. Let $\Delta \in \R^{n\times d}$ be a random matrix where $\forall i\in[n],j\in[d],\Delta_{i,j} \sim C(0,1).$ Then $\Delta\alpha\in\mathbb{R}^n$ is a random vector with each entry drawn independently from $C(0,\|\alpha\|_1)$. Due to the probability density function of standard Cauchy random variables,
\begin{align*}
\Pr[\|\Delta\alpha\|_1<n^3]\geq \Pr[\|\Delta\alpha+u\cdot{\bf 1}^\top\alpha\|_1<n^3].
\end{align*}
It suffices to upper bound $\Pr[\|\Delta\alpha\|_1<n^3].$ If $c>10,$ then due to the cumulative distribution function of Cauchy random variables, for a fixed $i\in [n],$ $\Pr[|(\Delta\alpha)_i|<n^3]<1/n.$ Thus, $\Pr[\|\Delta\alpha\|_1<n^3]<(\frac{1}{n})^n.$ Thus,
\begin{align*}
\Pr_{\Delta \sim \{ C(0,1) \}^{n\times d} } [ \| ( u \cdot {\bf 1}^\top + \Delta ) \alpha \|_1 > n^3 ] > 1 - (1/n)^n.
\end{align*}
\end{proof}

\subsection{From ``For Each'' to ``For All'' via an $\epsilon$-Net}

\begin{definition}[$\epsilon$-net for the $\ell_1$-norm ball]\label{def:l1epsnet}
Let $A\in\mathbb{R}^{n\times d}$ have rank $d$, and let $L=\{y\in\mathbb{R}^n\mid y=Ax,x\in\mathbb{R}^d\}$ be the column space of $A.$ An $\epsilon$-net of the $\ell_1$-unit sphere $\mathcal{S}^{d-1}=\{y\mid \|y\|_1=1,y\in L\}\subset L$ is a set $N\subset \mathcal{S}^{d-1}$ of points for which $\forall y\in \mathcal{S}^{d-1},\exists y'\in N$ for which $\|y-y'\|\leq \epsilon.$
\end{definition}

\cite{ddhkm09} proved an upper bound on the size of an $\epsilon$-net.
\begin{lemma}[See, e.g., the ball $B$ on page 2068 of~\cite{ddhkm09}]\label{lem:l1_eps_net_size}
Let $A\in\mathbb{R}^{n\times d}$ have rank $d$, and let $L=\{y\in\mathbb{R}^n\mid y=Ax,x\in\mathbb{R}^d\}$ be the column space of $A.$ For $\epsilon\in(0,1),$ an $\epsilon$-net (Definition~\ref{def:l1epsnet}) $N$ of the $\ell_1$-unit sphere $\mathcal{S}^{d-1}=\{y\mid \|y\|_1=1,y\in L\}\subset L$ exists. Furthermore, the size of $N$ is at most $(3/\epsilon)^d.$
\end{lemma}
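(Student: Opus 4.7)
The plan is to follow the standard greedy packing/volume comparison argument. Since the column space $L$ is a $d$-dimensional linear subspace of $\mathbb{R}^n$, the restriction of the $\ell_1$-norm to $L$ is a norm on a $d$-dimensional vector space, and all of the usual volume-based packing arguments apply once we fix any Lebesgue measure on $L$ (for instance, the one induced by a linear isomorphism $L \cong \mathbb{R}^d$). I will construct $N$ as a \emph{maximal} $\epsilon$-separated subset of $\mathcal{S}^{d-1}$, i.e., a set with pairwise $\ell_1$-distances strictly greater than $\epsilon$ that cannot be extended. Such an $N$ exists because $\mathcal{S}^{d-1}$ is compact in $L$ (closed and bounded in a finite-dimensional space) and one can take the limit of any chain of $\epsilon$-separated sets; alternatively one can build it greedily.

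First I would verify that a maximal $\epsilon$-separated set $N$ is automatically an $\epsilon$-net. Indeed, for any $y \in \mathcal{S}^{d-1}$, if $\|y - y'\|_1 > \epsilon$ for every $y' \in N$, then $N \cup \{y\}$ would still be $\epsilon$-separated, contradicting maximality; hence some $y' \in N$ satisfies $\|y - y'\|_1 \leq \epsilon$. This is exactly the net condition from Definition~\ref{def:l1epsnet}.

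Next I would carry out the volume comparison. Let $B = \{y \in L \mid \|y\|_1 \leq 1\}$ be the $\ell_1$-unit ball inside $L$, and let $\mu$ be any Lebesgue measure on $L$. Because the points of $N$ are pairwise at $\ell_1$-distance greater than $\epsilon$, the open balls $y' + (\epsilon/2) B$ for $y' \in N$ are pairwise disjoint. Moreover, each of these balls is contained in $(1 + \epsilon/2) B$: if $z = y' + w$ with $\|w\|_1 < \epsilon/2$, then $\|z\|_1 \leq \|y'\|_1 + \|w\|_1 < 1 + \epsilon/2$. Since $\mu$ scales as the $d$-th power of a radius under dilations (as $L$ has dimension $d$), this yields
\begin{align*}
|N| \cdot (\epsilon/2)^d \, \mu(B) \;=\; \sum_{y' \in N} \mu\!\left( y' + (\epsilon/2) B \right) \;\leq\; \mu\!\left( (1+\epsilon/2) B \right) \;=\; (1+\epsilon/2)^d \, \mu(B).
\end{align*}
Canceling $\mu(B) > 0$ and rearranging gives $|N| \leq (1 + 2/\epsilon)^d \leq (3/\epsilon)^d$ for $\epsilon \in (0,1)$.

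The only mildly subtle point, and the one I would be most careful about, is that the volume argument must be carried out \emph{inside} the $d$-dimensional subspace $L$ rather than in the ambient $\mathbb{R}^n$ (where $B$ has Lebesgue measure zero). Once one fixes any linear isomorphism $L \to \mathbb{R}^d$ and pushes forward Lebesgue measure, the scaling identity $\mu(tB) = t^d \mu(B)$ used above is immediate and $\mu(B) > 0$ since $B$ has nonempty interior in $L$. With this in place, the argument is unconditional on which norm we use (the $\ell_1$ structure enters only through the definition of $B$ and the disjointness/containment of the balls), and we obtain the claimed bound $|N| \leq (3/\epsilon)^d$.
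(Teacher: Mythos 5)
Your proof is correct. The paper itself does not prove this lemma at all---it simply cites \cite{ddhkm09} (the ball $B$ on page 2068), so there is no in-paper argument to compare against; your maximal-$\epsilon$-separated-set plus volume-comparison argument is the standard self-contained route and it goes through: the packing balls of radius $\epsilon/2$ are disjoint because the centers are strictly more than $\epsilon$ apart, they sit inside $(1+\epsilon/2)B$, and carrying out the volume computation inside the $d$-dimensional subspace $L$ (where $0<\mu(B)<\infty$) gives $|N|\leq (1+2/\epsilon)^d\leq (3/\epsilon)^d$ for $\epsilon\in(0,1)$. Two trivial points you could tighten but that do not affect correctness: you call the translates $y'+(\epsilon/2)B$ ``open balls'' while $B$ as you defined it is closed (either works, since strict $\epsilon$-separation already makes the closed balls disjoint), and the finiteness of $N$, needed before summing, follows from applying the same volume bound to any finite subset of $N$.
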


\begin{lemma}[For all possible $\alpha$, the entry cannot be too large]\label{lem:for_all_alpha_can_not_be_too_large}
 Let $n\geq 1,d=n^{o(1)}.$  Let $u = n^{c_0} \cdot {\bf 1} \in \R^n$ denote a fixed vector where $c_0$ is a constant. Let $\Delta \in \R^{n\times d}$ be a random matrix where $\forall i\in[n],j\in[d],\Delta_{i,j} \sim C(0,1)$ independently. Let $c>0$ be a sufficiently large constant. Conditioned on $\|\Delta\|_1\leq n^3,$ with probability at least $1-(1/n)^{\Theta(n)},$ for all $\alpha\in\mathbb{R}^d$ with $\|\alpha\|_1\geq n^c,$ we have $\| ( u \cdot {\bf 1}^\top + \Delta ) \alpha \|_1>0.9n^3.$
\end{lemma}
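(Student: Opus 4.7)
The plan is to upgrade the ``for each'' guarantee of Lemma~\ref{lem:for_each_alpha_can_not_be_too_large} to a ``for all'' guarantee via a standard $\epsilon$-net argument, where the conditioning $\|\Delta\|_1\leq n^3$ is precisely what lets us control the net discretization error. First, by positive homogeneity of $\alpha\mapsto \|(u\cdot{\bf 1}^\top+\Delta)\alpha\|_1$, it suffices to prove the bound on the scaled sphere $\mathcal{S}:=\{\alpha\in\R^d:\|\alpha\|_1=n^c\}$: any $\alpha$ with $\|\alpha\|_1\geq n^c$ can be written as $\lambda\beta$ with $\lambda\geq 1$ and $\beta\in\mathcal{S}$, and a bound of $0.9n^3$ on $\|(u{\bf 1}^\top+\Delta)\beta\|_1$ yields $\lambda\cdot 0.9n^3\geq 0.9n^3$. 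Applying Lemma~\ref{lem:l1_eps_net_size} with $A=I_d$ and rescaling by $n^c$ produces a net $N\subseteq\mathcal{S}$ of size at most $(3/\epsilon)^d$ that covers $\mathcal{S}$ to within $n^c\epsilon$ in $\ell_1$. I would take $\epsilon=n^{-(c+c_0+10)}$.

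Next, apply the ``for each'' bound of Lemma~\ref{lem:for_each_alpha_can_not_be_too_large} to each $\alpha'\in N$ (every such $\alpha'$ has $\|\alpha'\|_1=n^c$) and union bound: since $|N|=(3/\epsilon)^d=n^{O(d)}=n^{n^{o(1)}}$ while the failure probability per point is $(1/n)^{\Theta(n)}$, with probability at least $1-(1/n)^{\Theta(n)}$ we have $\|(u{\bf 1}^\top+\Delta)\alpha'\|_1>n^3$ for every $\alpha'\in N$ simultaneously. Conditioning on the event $\{\|\Delta\|_1\leq n^3\}$ only costs a factor $1/\Pr[\|\Delta\|_1\leq n^3]=O(1)$ by Lemma~\ref{lem:bound_of_Delta}, which is absorbed into the $(1/n)^{\Theta(n)}$ bound.

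Now fix any $\alpha\in\mathcal{S}$ and let $\alpha'\in N$ be the closest net point, so $\|\alpha-\alpha'\|_1\leq n^c\epsilon$. The triangle inequality gives
\begin{align*}
\|(u{\bf 1}^\top+\Delta)\alpha\|_1\geq \|(u{\bf 1}^\top+\Delta)\alpha'\|_1 - \|u{\bf 1}^\top(\alpha-\alpha')\|_1 - \|\Delta(\alpha-\alpha')\|_1.
\end{align*}
The first error term equals $|{\bf 1}^\top(\alpha-\alpha')|\cdot\|u\|_1\leq n^c\epsilon\cdot n^{c_0+1}=n^{-9}$. Using the standard bound $\|\Delta x\|_1\leq \|\Delta\|_1\|x\|_1$ together with the conditioning, the second error term is at most $n^3\cdot n^c\epsilon=n^{-7-c_0}$. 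Both are $o(1)$, so the total discretization error is far below $0.1n^3$, yielding $\|(u{\bf 1}^\top+\Delta)\alpha\|_1\geq n^3-0.1n^3=0.9n^3$.

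The only delicate point is tuning $\epsilon$: it must be small enough that the discretization error, which scales like $(\|u\|_1+\|\Delta\|_1)\cdot n^c\epsilon\leq n^{c_0+c+4}\epsilon$, is negligible against $n^3$, yet not so small that $d\log(1/\epsilon)$ overtakes $n\log n$ and spoils the union bound. Since $d=n^{o(1)}$ we have slack of order $n$ in the exponent, so any inverse polynomial $\epsilon$ works and both requirements hold with room to spare; hence no serious obstacle beyond the standard net-versus-concentration tradeoff.
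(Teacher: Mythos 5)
Your proposal is correct and follows essentially the same route as the paper: reduce to the sphere $\|\alpha\|_1=n^c$ by homogeneity, build an $\ell_1$-net via Lemma~\ref{lem:l1_eps_net_size}, union bound the ``for each'' guarantee of Lemma~\ref{lem:for_each_alpha_can_not_be_too_large} over the net, and absorb the discretization error using the triangle inequality together with the conditioning $\|\Delta\|_1\leq n^3$. The only (harmless) differences are that you bound the perturbation term directly via $\|\Delta(\alpha-\alpha')\|_1\leq\|\Delta\|_1\|\alpha-\alpha'\|_1$ where the paper detours through the $\ell_2$ norm, and you explicitly account for the $O(1)$ cost of conditioning, which the paper leaves implicit.
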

\begin{proof}
Due to Lemma~\ref{lem:l1_eps_net_size}, there is a set $N\subset\{\alpha\in\mathbb{R}^d\mid \|\alpha\|_1=n^c \}\subset \mathbb{R}^{d}$ with $|N|\leq 2^{\Theta(d\log n)}$ such that $\forall \alpha\in\mathbb{R}^d$ with $\|\alpha\|_1=n^c,$ $\exists \alpha'\in N$ such that $\|\alpha-\alpha'\|_1\leq 1/n^{c'}$ where $c'>c_0+100$ is a constant. By applying Lemma~\ref{lem:for_each_alpha_can_not_be_too_large} and union bounding over all the points in $N,$ with probability at least $1-(1/n)^n\cdot|N|\geq 1-(1/n)^n\cdot 2^{n^{o(1)}}=1-(1/n)^{\Theta(n)},$ $\forall \alpha'\in N,$ $\| ( u \cdot {\bf 1}^\top + \Delta ) \alpha' \|_1>n^3.$ $\forall \alpha\in\mathbb{R}^{d}$ with $\|\alpha\|_1=n^c,$ we can find $\alpha'\in N$ such that $\|\alpha-\alpha'\|_1\leq 1/n^{c'}.$ Let $\gamma=\alpha-\alpha'.$ Then,
\begin{align*}
&\|(u\cdot {\bf 1}^\top+\Delta)\alpha\|_1\\
=~& \|(u\cdot {\bf 1}^\top+\Delta)(\alpha'+\gamma)\|_1\\
\geq~ & \|(u\cdot {\bf 1}^\top+\Delta)\alpha'\|_1-\|(u\cdot {\bf 1}^\top+\Delta)\gamma\|_1\\
\geq~& n^3-\sqrt{n}\|(u\cdot {\bf 1}^\top+\Delta)\gamma\|_2\\
\geq~& n^3-\sqrt{n}(\|u\cdot {\bf 1}^\top\|_2+\|\Delta\|_2)\|\gamma\|_2\\
\geq~& n^3-n^{c_0+50}/n^{c'}\\
\geq~& 0.9n^3.
\end{align*}
The first equality follows from $\alpha=\alpha'+\gamma.$ The first inequality follows by the triangle inequality. The second inequality follows by the relaxation from the $\ell_1$ norm to the $\ell_2$ norm. The third inequality follows from the operator norm and the triangle inequality. The fourth inequality follows using $\|\Delta\|_2\leq\|\Delta\|_1\leq n^3,\|u\|_2\leq n^{c_0+10},\|\gamma\|_2\leq \|\gamma\|_1\leq (1/n)^{c'}.$ The last inequality follows since $c'>c_0+100.$

For $\alpha\in\mathbb{R}^n$ with $\|\alpha\|_1>n^c,$ let $\alpha'=\alpha/\|\alpha\|_1\cdot n^c.$ Then
\begin{align*}
\|(u\cdot {\bf 1}^\top+\Delta)\alpha\|_1\geq \|(u\cdot {\bf 1}^\top+\Delta)\alpha'\|_1 \geq 0.9n^3.
\end{align*}
\end{proof}

\subsection{Bounding the Cost from the Large-Entry Part via ``Bad'' Regions}

In this section, we will use the concept of \emph{well-conditioned basis} in our analysis.

\begin{definition}[Well-conditioned basis~\cite{ddhkm09}]\label{def:well_conditioned_basis_for_l1}
	Let $A\in\mathbb{R}^{n\times m}$ have rank $d$. Let $p\in[1,\infty),$ and let $\|\cdot\|_q$ be the dual norm of $\|\cdot\|_p,$ i.e., $1/p+1/q=1.$ If $U\in\mathbb{R}^{n\times d}$ satisfies
	\begin{enumerate}
		\item $\|U\|_p\leq \alpha,$
		\item $\forall z\in\mathbb{R}^d,\|z\|_q\leq \beta\|Uz\|_p,$
	\end{enumerate}
	then $U$ is an $(\alpha,\beta,p)$ well-conditioned basis for the column space of $A$.
\end{definition}

The following theorem gives an existence result of a well-conditioned basis. 

\begin{theorem}[$\ell_1$ well-conditioned basis~\cite{ddhkm09}]\label{thm:l1_well_conditioned_basis}
	Let $A\in\mathbb{R}^{n\times m}$ have rank $d$. There exists $U\in\mathbb{R}^{n\times d}$ such that $U$ is a $(d,1,1)$ well-conditioned basis for the column space of $A$.
\end{theorem}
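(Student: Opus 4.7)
The plan is to prove this via Auerbach's lemma, which gives exactly the type of basis required. Let $V = \{Ax \mid x \in \mathbb{R}^m\} \subseteq \mathbb{R}^n$ be the column space of $A$, a $d$-dimensional subspace equipped with the $\ell_1$ norm inherited from $\mathbb{R}^n$. Fix any linear isomorphism $\varphi : V \to \mathbb{R}^d$, and define $D(v_1,\dots,v_d) := |\det[\varphi(v_1) \mid \cdots \mid \varphi(v_d)]|$ for any $d$-tuple of vectors in $V$. The key step is to choose $(u_1,\dots,u_d)$ to be a $d$-tuple that maximizes $D(\cdot)$ subject to the constraint $\|u_i\|_1 = 1$ for every $i \in [d]$. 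Since the $\ell_1$-unit sphere in the finite-dimensional space $V$ is compact and $D$ is continuous, this maximum is attained, and the maximum is strictly positive (since $V$ admits \emph{some} basis of unit-$\ell_1$ vectors, simply by normalizing an arbitrary basis). I then set $U := [u_1 \mid u_2 \mid \cdots \mid u_d] \in \mathbb{R}^{n \times d}$.

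Verifying $\|U\|_1 \leq d$ is immediate: since each column satisfies $\|u_i\|_1 = 1$, we get $\|U\|_1 = \sum_{i=1}^d \|u_i\|_1 = d$, giving $\alpha = d$. To verify $\|z\|_\infty \leq \|Uz\|_1$ for every $z \in \mathbb{R}^d$ (which is the $\beta = 1$ condition, since the dual of $\|\cdot\|_1$ is $\|\cdot\|_\infty$), I would use a swap-and-compare argument on the determinant. By homogeneity it suffices to handle the case $\|Uz\|_1 = 1$. Set $v := Uz = \sum_i z_i u_i \in V$, which is a unit-$\ell_1$ vector in $V$. For each $j \in [d]$, consider the tuple obtained by replacing $u_j$ with $v$ while leaving the other $u_i$ unchanged. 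This is still a tuple of unit-$\ell_1$ vectors in $V$, so by maximality
\begin{align*}
D(u_1,\dots,u_{j-1},v,u_{j+1},\dots,u_d) \leq D(u_1,\dots,u_d).
\end{align*}
On the other hand, multilinearity of the determinant together with $v = \sum_i z_i u_i$ yields
\begin{align*}
D(u_1,\dots,u_{j-1},v,u_{j+1},\dots,u_d) = |z_j| \cdot D(u_1,\dots,u_d),
\end{align*}
because all cross terms contain a repeated column and vanish. Since $D(u_1,\dots,u_d) > 0$, dividing gives $|z_j| \leq 1 = \|Uz\|_1$. Taking the maximum over $j \in [d]$ yields $\|z\|_\infty \leq \|Uz\|_1$, and rescaling handles arbitrary $z$.

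The ``main obstacle'' here is quite mild: one has to be careful that the maximizer $(u_1,\ldots,u_d)$ remains a \emph{basis} of $V$, i.e., $D(u_1,\ldots,u_d)$ is strictly positive, so that the swap argument and the division in the last step are legitimate. This follows because the constrained maximum of $D$ over $d$-tuples of unit-$\ell_1$ vectors is at least the value achieved by normalizing any fixed basis of $V$, which is strictly positive since that original basis has nonzero determinant in $\varphi$-coordinates. Apart from this sanity check, the argument is self-contained and no further moment or distributional assumptions (such as those used elsewhere in the paper) are needed, since the result is purely a deterministic statement about subspaces of $(\mathbb{R}^n,\|\cdot\|_1)$.
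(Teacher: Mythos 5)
Your proof is correct. The paper does not prove this statement at all---it simply cites \cite{ddhkm09}, whose $(d,1,1)$ guarantee for $p=1$ is exactly Auerbach's lemma---and your determinant-maximization argument is precisely the standard proof of that lemma: compactness of the $\ell_1$-unit sphere in the $d$-dimensional column space gives a maximizing tuple, $\|U\|_1=\sum_i\|u_i\|_1=d$ gives the first condition, and the swap-and-compare step via multilinearity gives $|z_j|\le \|Uz\|_1$ for each $j$, i.e.\ $\|z\|_\infty\le\|Uz\|_1$, which is the correct dual-norm condition for $p=1$, $q=\infty$. The only point you leave implicit is the case $Uz=0$ with $z\neq 0$, but this cannot occur since you verified $D(u_1,\dots,u_d)>0$, so the columns of $U$ are linearly independent and the homogeneity reduction is legitimate; you also correctly note the result is purely deterministic and independent of the distributional assumptions used elsewhere in the paper.
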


In the following lemma, we consider vectors from low-dimensional subspaces. For a coordinate, if there is a vector from the subspace for which this entry is large, but the norm of the vector is small, then this kind of coordinate is pretty ``rare''. More formally, 
\begin{lemma}\label{lem:size_of_the_bad_region}
Given a matrix $U\in \R^{n\times r}$ for a sufficiently large $n\geq 1$, let $r=n^{o(1)}$. Let $S = \{ y | y = U x , x \in \R^r \}$. Let the set $T$ denote $\{ i\in [n] ~ | ~ \exists y \in S, |y_i| \geq n^{1.0001} \text{~and~} \| y \|_1 < 8 n^2 \ln n  \}$. Then we have
\begin{align*}
|T| \leq n^{0.99999 }.
\end{align*}
\end{lemma}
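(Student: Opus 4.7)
The plan is to exploit the fact that vectors in $S$ which have large $\ell_\infty$ mass at a single coordinate $i$ must also have appreciable mass at $i$ with respect to \emph{any} $\ell_1$ well-conditioned basis of $S$. Concretely, I would invoke Theorem~\ref{thm:l1_well_conditioned_basis} to obtain a matrix $V \in \R^{n \times r}$ whose columns form an $(r,1,1)$ well-conditioned basis of $S$, i.e., $\|V\|_1 \leq r$ and $\|z\|_\infty \leq \|Vz\|_1$ for all $z \in \R^r$. (If the column space of $U$ has rank strictly less than $r$ we first pass to a full-rank reparameterization; this does not affect $S$.)

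The key per-coordinate inequality is then the following. Any $y \in S$ can be uniquely written as $y = Vz$ for some $z \in \R^r$, and by H\"older and the well-conditioned property,
\begin{align*}
|y_i| \;=\; |V^i z| \;\leq\; \|V^i\|_1 \, \|z\|_\infty \;\leq\; \|V^i\|_1 \, \|Vz\|_1 \;=\; \|V^i\|_1 \, \|y\|_1.
\end{align*}
Now suppose $i \in T$, so there exists $y \in S$ with $|y_i| \geq n^{1.0001}$ and $\|y\|_1 < 8 n^2 \ln n$. The inequality above forces
\begin{align*}
\|V^i\|_1 \;\geq\; \frac{n^{1.0001}}{8 n^2 \ln n} \;=\; \frac{1}{8 n^{0.9999} \ln n}.
\end{align*}

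Summing this lower bound over all $i \in T$ and comparing with the global budget $\|V\|_1 = \sum_{i=1}^{n} \|V^i\|_1 \leq r$, I obtain
\begin{align*}
\frac{|T|}{8 n^{0.9999} \ln n} \;\leq\; \sum_{i \in T} \|V^i\|_1 \;\leq\; \|V\|_1 \;\leq\; r \;=\; n^{o(1)},
\end{align*}
hence $|T| \leq 8 r \cdot n^{0.9999} \ln n \leq n^{0.99999}$ for $n$ sufficiently large, which is the claimed bound. There is no real obstacle here once one recognizes that the well-conditioned basis is the right tool: it simultaneously controls the total $\ell_1$ mass of $V$ (giving a budget over all coordinates) and the uniform norm of the coordinate vector $z$ in terms of $\|y\|_1$ (enabling the per-coordinate upper bound on $|y_i|$). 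The only mild care needed is the rank-deficient case, handled by reparameterization as noted.
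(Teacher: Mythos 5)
Your proposal is correct and follows essentially the same route as the paper: both invoke the $(r,1,1)$ well-conditioned basis from Theorem~\ref{thm:l1_well_conditioned_basis}, bound $|y_i| \leq \|V^i\|_1 \|y\|_1$ via H\"older and the well-conditioned property to force $\|V^i\|_1 \gtrsim n^{1.0001}/(n^2 \ln n)$ for every $i \in T$, and then compare with the budget $\|V\|_1 \leq r$ to conclude $|T| \leq n^{0.99999}$. Your explicit handling of the rank-deficient case is a small point of extra care that the paper glosses over, but the argument is identical in substance.
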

\begin{proof}
Due to Theorem~\ref{thm:l1_well_conditioned_basis}, let $U\in\mathbb{R}^{n\times r}$ be the $(r,1,1)$ well-conditioned basis of the column space of $U$. If $i\in T,$ then $\exists x\in \mathbb{R}^r$ such that $|(Ux)_i|\geq n^{1.0001}$ and $\|Ux\|_1<8n^2\ln n.$ Thus, we have
\begin{align*}
n^{1.0001}\leq |(Ux)_i|\leq \|U^i\|_1\|x\|_{\infty}\leq \|U^i\|_1\|Ux\|_1\leq \|U^i\|_1\cdot 8n^2\ln n.
\end{align*}
The first inequality follows using $n^{1.0001}\leq |(Ux)_i|.$ The second inequality follows by H\"{o}lder's inequality. The third inequality follows by the second property of the well-conditioned basis. The fourth inequality follows using $\|Ux\|_1<8n^2\ln n.$ Thus, we have
\begin{align*}
\|U^i\|_1\geq n^{1.0001}/n^{2+o(1)}\geq 1/n^{0.9999-o(1)}.
\end{align*}
Notice that $\sum_{j=1}^n \|U^j\|_1=\|U\|_1\leq r.$ Thus,
\begin{align*}
|T|\leq r/(1/n^{0.9999-o(1)})=n^{0.9999+o(1)}\leq n^{0.99999}.
\end{align*}
\end{proof}

\begin{definition}[Bad region]
Given a matrix $U \in \R^{n\times r}$, we say ${\cal B}(U) = \{ i \in [n] ~|~ \exists y \in \mathrm{colspan}(U) \subset \R^n \mathrm{~s.t.~} y_i \geq n^{1.0001} \mathrm{~and~} \| y \|_1 \leq 8 n^2 \ln n \}$ is a bad region for $U$.
\end{definition}

Next we state a lower and an upper bound on the probability that a Cauchy random variable is in a certain range,
\begin{claim}\label{cla:prob_Cauchy_range}
Let $X\sim C(0,1)$ be a standard Cauchy random variable. Then for any $x>1549,$
\begin{align*}
\frac{2}{\pi}\cdot\frac{\ln(1.001)}{x}\geq \Pr[|X|\in (x,1.001x]]\geq \frac{1.999}{\pi}\cdot\frac{\ln(1.001)}{x}.
\end{align*}
\end{claim}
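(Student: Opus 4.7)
The plan is to work directly with the Cauchy density $f(z)=1/[\pi(1+z^2)]$, writing
\[
\Pr[|X|\in(x,1.001x]] \;=\; \frac{2}{\pi}\int_x^{1.001x}\frac{dz}{1+z^2}.
\]
Both the upper and lower bounds in the claim are within tight multiplicative constants of $\frac{2}{\pi}\cdot\frac{1}{1001\,x}$, which is essentially the first-order Taylor approximation of the integral. So the strategy is to estimate the integral against $\int \frac{dz}{z^2}=\frac{1}{1001\,x}$ from both sides and then do a careful numerical comparison between $\frac{1}{1001}$ and $\ln(1.001)$.

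For the upper bound I would simply use $\tfrac{1}{1+z^2}\le \tfrac{1}{z^2}$, giving
\[
\int_x^{1.001 x}\frac{dz}{1+z^2}\;\le\;\frac{1}{x}-\frac{1}{1.001\,x}\;=\;\frac{1}{1001\,x},
\]
and then verify the purely numerical inequality $\tfrac{1}{1001}\le \ln(1.001)$ from the Taylor expansion $\ln(1.001)=0.001-\tfrac{1}{2}\cdot10^{-6}+O(10^{-9})=0.00099950\ldots$ versus $\tfrac{1}{1001}=0.00099900\ldots$ Multiplying by $\tfrac{2}{\pi}$ then gives the stated upper bound.

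For the lower bound I would use the sharper pointwise inequality $\tfrac{1}{1+z^2}\ge \tfrac{1}{z^2}-\tfrac{1}{z^4}$ (the first two terms of the geometric series), yielding
\[
\int_x^{1.001x}\frac{dz}{1+z^2}\;\ge\;\frac{1}{1001\,x}\;-\;\int_x^{1.001x}\frac{dz}{z^4}\;\ge\;\frac{1}{1001\,x}\Bigl(1-\frac{C}{x^2}\Bigr)
\]
for an absolute constant $C\le 1.01$, since $\int_x^{1.001x} z^{-4}\,dz \le \tfrac{0.001}{x^3}$. It then remains to check $\tfrac{1}{1001}(1-C/x^2)\ge \tfrac{1.999}{2}\ln(1.001)=0.9995\ln(1.001)$ whenever $x>1549$. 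The main obstacle is precisely this numerical tightness: since $\tfrac{1}{1001}\approx 0.000999001$ and $0.9995\ln(1.001)\approx 0.000999001$ agree to five decimal places, the Taylor expansions must be carried out carefully. The slack between the two numbers is about $4.5\cdot 10^{-10}$, while the $C/x^2$ correction at $x=1549$ is about $4\cdot 10^{-7}\cdot\tfrac{1}{1001}\approx 4\cdot 10^{-10}$, so the inequality just barely holds; this is precisely what forces the threshold $x>1549$. Multiplying by $\tfrac{2}{\pi}$ yields the claimed lower bound.
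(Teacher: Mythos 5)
Your overall strategy is the natural way to substantiate this claim, and indeed it is essentially what the paper leaves implicit: the paper's proof is the one-line assertion $\frac{2}{\pi}\cdot\frac{\ln(1.001)}{x}\geq\frac{2}{\pi}(\tan^{-1}(1.001x)-\tan^{-1}(x))\geq\frac{1.999}{\pi}\cdot\frac{\ln(1.001)}{x}$ for $x>1549$, with no further detail, and your integral representation plus series comparison is the obvious way to check it. Your upper bound is complete and correct: $\int_x^{1.001x}\frac{dz}{1+z^2}\leq\frac{1}{1001x}$ together with $\ln(1.001)\geq\frac{0.001}{1.001}=\frac{1}{1001}$ (which follows from $\ln(1+u)\geq u/(1+u)$).

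The lower bound, however, has a genuine numerical gap: the chain of inequalities you propose does not hold for all $x>1549$, only for $x\gtrsim 1550.3$. The true slack is $\frac{1}{1001}-0.9995\ln(1.001)\approx 4.161\times10^{-10}$, not $4.5\times 10^{-10}$ as you estimated (use $\ln(1.001)=0.0009995003330\ldots$). Meanwhile, your bound $\int_x^{1.001x}z^{-4}\,dz\leq 0.001/x^3$ corresponds to $C=1.001$, and the correction $\frac{C}{1001x^2}$ at $x=1549$ is $\approx 4.168\times 10^{-10}$, which already exceeds the slack; so your displayed inequality fails for $x$ just above $1549$ (and if one literally takes $C\leq 1.01$, it only kicks in around $x\gtrsim 1557$). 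The constant $1549$ in the claim sits essentially at the exact crossover (about $1548.7$), so there is no room for the $0.2\%$ loss incurred by replacing the correction integral with its crude overestimate. The fix is easy: evaluate the correction exactly, $\int_x^{1.001x}z^{-4}\,dz=\frac{1-1.001^{-3}}{3x^3}\leq\frac{0.0009981}{x^3}$, i.e.\ $C\leq 0.9991$, which lowers the required threshold to $x\geq\bigl(0.9991/(1001\cdot 4.1608\times 10^{-10})\bigr)^{1/2}\approx 1548.8<1549$; with that replacement (and the corrected value of the slack) your argument goes through for all $x>1549$.
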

\begin{proof}
When $x>1549,$ $\frac{2}{\pi}\cdot\frac{\ln(1.001)}{x}\geq\frac{2}{\pi}\cdot (\tan^{-1}(1.001x)-\tan^{-1}(x))\geq \frac{1.999}{\pi}\cdot\frac{\ln(1.001)}{x}.$
\end{proof}

We build a level set for the ``large'' noise values, and we show the bad region cannot cover much of the large noise. The reason is that the bad region is small, and for each row, there is always some large noise.

\begin{lemma}\label{lem:analysis_for_large_part}
Given a matrix $U\in \R^{n\times r}$ with $n$ sufficiently large, let $r=n^{o(1)}$, and consider a random matrix $\Delta \in \R^{n\times (n-r)}$ with $\Delta_{i,j} \sim C(0,1)$ independently. Let $L_t = \{ (i,j) ~|~ (i,j) \in [n]\times [n-r], |\Delta_{i,j}| \in (1.001^t , 1.001^{t+1} ] \}$. With probability at least $1-1/2^{n^{\Theta(1)}}$, for all $t \in ( \frac{1.0002\ln n}{\ln 1.001} , \frac{1.9999 \ln n}{\ln 1.001} ) \cap \mathbb{N}$,
\begin{align*}
| L_t \setminus ({\cal B}(U) \times [n-r]) | \geq n(n-r) \cdot 1.998\cdot \ln(1.001)/(\pi\cdot 1.001^t).
\end{align*}
\end{lemma}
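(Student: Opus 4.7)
The plan is to observe that $\mathcal{B}(U)$, being a function only of $U$, is a fixed (deterministic) set of size at most $n^{0.99999}$ by Lemma~\ref{lem:size_of_the_bad_region}, and then reduce the entire claim to a Chernoff bound on the independent Cauchy entries that lie in rows outside $\mathcal{B}(U)$.

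For each fixed $t$, I would rewrite $|L_t \setminus (\mathcal{B}(U)\times [n-r])|$ as the sum $\sum_{i\in[n]\setminus\mathcal{B}(U)}\sum_{j=1}^{n-r}\mathbbm{1}[|\Delta_{i,j}|\in(1.001^t, 1.001^{t+1}]]$. Since $\Delta_{i,j}\sim C(0,1)$ are i.i.d.\ and $\mathcal{B}(U)$ is deterministic, these indicators are i.i.d.\ Bernoulli$(p_t)$ with $p_t := \Pr[|C(0,1)|\in(1.001^t,1.001^{t+1}]]$. Because $t > \tfrac{1.0002\ln n}{\ln 1.001}$ gives $1.001^t \geq n^{1.0002}\gg 1549$, Claim~\ref{cla:prob_Cauchy_range} yields $p_t\geq \tfrac{1.999\,\ln(1.001)}{\pi\cdot 1.001^t}$. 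Let $\mu_t$ be the expectation of the above sum; using $|\mathcal{B}(U)|\leq n^{0.99999}$ gives $\mu_t \geq (1-n^{-0.00001})\,n(n-r)\cdot\tfrac{1.999\,\ln(1.001)}{\pi\cdot 1.001^t}$, and since $1.001^t\leq n^{1.9999}$ on this range, $\mu_t = \Omega(n^{0.0001})$.

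The main step is a multiplicative Chernoff bound with $\delta = 10^{-4}$: the probability that $|L_t\setminus(\mathcal{B}(U)\times[n-r])| \leq (1-\delta)\mu_t$ is at most $\exp(-\delta^2\mu_t/2) = \exp(-\Omega(n^{0.0001})) = 1/2^{n^{\Theta(1)}}$. For $n$ large, the factor $(1-\delta)(1-n^{-0.00001})\cdot 1.999$ exceeds $1.998$, so on the good event $|L_t\setminus(\mathcal{B}(U)\times[n-r])| \geq n(n-r)\cdot\tfrac{1.998\,\ln(1.001)}{\pi\cdot 1.001^t}$, exactly as required. A union bound over the $O(\log n)$ admissible integer values of $t$ preserves the overall failure probability at $1/2^{n^{\Theta(1)}}$.

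The one point that needs care is ensuring $\mu_t$ stays at least polynomially large in $n$, since the exponent in the Chernoff bound is driven by $\mu_t$; this is precisely what the upper cutoff $t<\tfrac{1.9999\ln n}{\ln 1.001}$ buys us (if $t$ were allowed up to $\tfrac{2\ln n}{\ln 1.001}$, $\mu_t$ could be $O(1)$ and the desired $1/2^{n^{\Theta(1)}}$ concentration would be unattainable). The other clean feature of this approach is that independence of $\mathcal{B}(U)$ from $\Delta$ makes the reduction to i.i.d.\ Bernoullis immediate, with no need to reason about correlations between the rows that belong to the bad region and the magnitudes of the Cauchy entries.
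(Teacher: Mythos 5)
Your proof is correct and follows essentially the same route as the paper: both arguments rest on Claim~\ref{cla:prob_Cauchy_range}, the size bound $|\mathcal{B}(U)|\leq n^{0.99999}$ from Lemma~\ref{lem:size_of_the_bad_region}, a Chernoff bound whose exponent is driven by $\mu_t=\Omega(n^{0.0001})$ (guaranteed by the upper cutoff on $t$), and a union bound over the $O(\log n)$ admissible values of $t$. The only cosmetic difference is that the paper lower-bounds $|L_t|$ and upper-bounds $|L_t\cap(\mathcal{B}(U)\times[n-r])|$ separately and subtracts, whereas you bound the restricted count directly as a sum of i.i.d.\ Bernoulli indicators over $([n]\setminus\mathcal{B}(U))\times[n-r]$, which is a marginally cleaner use of the same independence.
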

\begin{proof}
Let $N=n\cdot(n-r).$ Then according to Claim~\ref{cla:prob_Cauchy_range}, $\forall t\in ( \frac{1.0002\ln n}{\ln 1.001} , \frac{1.9999 \ln n}{\ln 1.001} )\cap \mathbb{N},$ $\E(|L_t|)\geq N\cdot 1.999\cdot \ln(1.001)/(\pi\cdot 1.001^t)\geq n^{\Theta(1)}.$ For a fixed $t,$ by a Chernoff bound, with probability at least $1-1/2^{n^{\Theta(1)}}$, $|L_t|\geq N\cdot 1.9989\cdot \ln(1.001)/(\pi\cdot 1.001^t).$ Due to Lemma~\ref{lem:size_of_the_bad_region}, $|{\cal B}(U) \times [n-r]|\leq n^{0.99999}(n-r)=N/n^{0.00001}.$ Due to the Chernoff bound, with probability at least $1-1/2^{n^{\Theta(1)}},$ $| L_t \cap ({\cal B}(U) \times [n-r]) |<N/n^{0.00001}\cdot 2.0001\cdot \ln(1.001)/(\pi\cdot 1.001^t).$ Thus, with probability at least $1-1/2^{n^{\Theta(1)}},$ $|L_t\setminus ({\cal B}(U) \times [n-r])|\geq N\cdot 1.998\cdot \ln(1.001)/(\pi\cdot 1.001^t).$ By taking a union bound over all $t\in ( \frac{1.0002\ln n}{\ln 1.001} , \frac{1.9999 \ln n}{\ln 1.001} ) \cap \mathbb{N}$, we complete the proof.
\end{proof}

\begin{lemma}[The cost of the large noise part]\label{lem:bound_for_large_part}
Let $n\geq 1$ be sufficiently large, and let $r=n^{o(1)}.$ Given a matrix $U\in \R^{n\times r}$, and a random matrix $\Delta \in \R^{n\times (n-r)}$ with $\Delta_{i,j} \sim C(0,1)$ independently, let ${\cal I}= \{ (i,j) \in [n]\times[n-r] ~ | ~ |\Delta_{i,j}| \geq n^{1.0002} \}$. If $\| \Delta \|_1 \leq 4 n^2\ln n$, then with probability at least $1-1/2^{n^{\Theta(1)}}$, for all $X\in \R^{r\times n}$, either
\begin{align*}
\sum_{(i,j)\in {\cal I}} | (U X - \Delta)_{i,j} | >  \frac{1.996}{\pi} n^2 \ln n,
\end{align*}
or
\begin{align*}
\|UX-\Delta\|_1>4n^2\ln n
\end{align*}
\end{lemma}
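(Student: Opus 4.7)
The plan is to condition on the high-probability event supplied by Lemma~\ref{lem:analysis_for_large_part} (an event that depends only on $U$ and $\Delta$, not on $X$) and then argue deterministically and uniformly over all $X$. First I would dispense with the easy case by assuming $\|UX - \Delta\|_1 \le 4 n^2 \ln n$, since otherwise the second alternative of the statement already holds. Under this assumption, together with the hypothesis $\|\Delta\|_1 \le 4 n^2 \ln n$, the triangle inequality yields $\|UX\|_1 \le 8 n^2 \ln n$, so every column $(UX)_j$ is a vector in $\mathrm{colspan}(U)$ of $\ell_1$-norm at most $8 n^2 \ln n$. By the definition of the bad region $\mathcal{B}(U)$, this forces
\begin{equation*}
|(UX)_{i,j}| < n^{1.0001} \qquad \text{for every } i \notin \mathcal{B}(U) \text{ and every } j \in [n-r].
\end{equation*}

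Next I would invoke Lemma~\ref{lem:analysis_for_large_part}: with probability at least $1 - 1/2^{n^{\Theta(1)}}$, for every integer $t$ in the range $(1.0002\ln n/\ln 1.001,\,1.9999\ln n/\ln 1.001)$, $|L_t \setminus (\mathcal{B}(U) \times [n-r])| \ge n(n-r)\cdot 1.998\ln(1.001)/(\pi \cdot 1.001^t)$. For any such $t$ and any $(i,j) \in L_t \setminus (\mathcal{B}(U) \times [n-r])$, $|\Delta_{i,j}| > 1.001^t \ge n^{1.0002}$, while the previous paragraph gives $|(UX)_{i,j}| < n^{1.0001}$, so
\begin{equation*}
|(UX - \Delta)_{i,j}| \ge |\Delta_{i,j}| - |(UX)_{i,j}| \ge 1.001^t - n^{1.0001} \ge 1.001^t\bigl(1 - n^{-0.0001}\bigr).
\end{equation*}

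Since the $L_t$ in this range are disjoint subsets of $\mathcal{I}$, summing the per-coordinate lower bound and using the level-set cardinality bound, each $t$ contributes at least
\begin{equation*}
n(n-r)\cdot \tfrac{1.998\,\ln(1.001)}{\pi \cdot 1.001^t}\cdot 1.001^t(1 - n^{-0.0001}) = n(n-r)\cdot \tfrac{1.998\,\ln(1.001)}{\pi}\cdot (1 - n^{-0.0001}),
\end{equation*}
and there are at least $0.9997\ln n/\ln 1.001 - 1$ integer values of $t$ in the prescribed range. Multiplying, the total contribution to $\sum_{(i,j)\in\mathcal{I}}|(UX-\Delta)_{i,j}|$ is at least
\begin{equation*}
\tfrac{1.998 \cdot 0.9997}{\pi}\, n(n-r)\,\ln n \cdot (1 - n^{-0.0001}) - O(n^2),
\end{equation*}
which exceeds $\tfrac{1.996}{\pi}\, n^2 \ln n$ for $n$ sufficiently large, since $1.998\cdot 0.9997 = 1.9974\ldots > 1.996$ and $r = n^{o(1)}$.

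The main obstacle is the constant bookkeeping: the target constant $1.996$ is close to the product $1.998 \cdot 0.9997$ coming out of the level-set lemma and the length of the valid range of $t$, so one must track the $(1 - n^{-0.0001})$ truncation, the $(1 - r/n)$ loss from $n(n-r)$, and the $-1$ rounding in the integer count, and check that their combined effect leaves a positive gap for large $n$. By contrast, the uniformity over $X$ is automatic, because the only probabilistic step is Lemma~\ref{lem:analysis_for_large_part}, whose event depends solely on $U$ and $\Delta$; after conditioning on it, the bound derived above applies simultaneously to every $X$ for which the total cost is at most $4n^2\ln n$.
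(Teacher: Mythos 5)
Your proof is correct and follows essentially the same route as the paper's: condition on Lemma~\ref{lem:analysis_for_large_part}, use the bad region $\mathcal{B}(U)$ together with the dichotomy $\|UX\|_1 \lessgtr 8n^2\ln n$ to cap $|(UX)_{i,j}|$ by $n^{1.0001}$ off the bad region, and sum over the level sets $L_t$. The only (harmless) difference is that you absorb the $n^{1.0001}$ loss per coordinate into the factor $1.001^t(1-n^{-0.0001})$, whereas the paper bounds the total $\sum_{(i,j)\in\mathcal{I}\setminus\mathcal{B}(U)}|(UX)_{i,j}|$ via a Chernoff bound on $|\mathcal{I}|$; your variant slightly simplifies the argument by avoiding that extra probabilistic step.
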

\begin{proof}
\begin{align}
&\sum_{(i,j)\in {\cal I}} | (U X - \Delta)_{i,j} | \notag\\
\geq~& \sum_{(i,j)\in {\cal I}\setminus \mathcal{B}(U)} | (U X - \Delta)_{i,j} |\notag\\
\geq~&\sum_{(i,j)\in {\cal I}\setminus \mathcal{B}(U)} | (\Delta)_{i,j} |-\sum_{(i,j)\in {\cal I}\setminus \mathcal{B}(U)} | (UX)_{i,j} |\label{eq:large_part_fixed_and_random}
\end{align}
Let $N=n(n-r).$ By a Chernoff bound and the cumulative distribution function of a Cauchy random variable, with probability at least $1-1/2^{n^{\Theta(1)}},$ $|{\cal I}|\leq 1.1\cdot N/n^{1.0002}.$ If $\exists (i,j)\in {\cal I}\setminus \mathcal{B}(U)$ which has $| (UX)_{i,j} |>n^{1.0001}$, then according to the definition of $\mathcal{B}(U),$ $\|UX\|_1\geq \|(UX)_j\|_1\geq 8n^2\ln n.$ Due to the triangle inequality, $\|UX-\Delta\|_1\geq \|UX\|_1-\|\Delta\|_1\geq 4n^2\ln n.$ If $\forall (i,j)\in {\cal I}\setminus \mathcal{B}(U)$ we have $| (UX)_{i,j} |\leq n^{1.0001},$ then
\begin{align}\label{eq:large_part_fixed_part}
\sum_{(i,j)\in {\cal I}\setminus \mathcal{B}(U)} | (UX)_{i,j} |\leq |\mathcal{I}|\cdot n^{1.0001}\leq 1.1\cdot N/n^{0.0001}.
\end{align}

Due to Lemma~\ref{lem:analysis_for_large_part}, with probability at least $1-1/2^{n^{\Theta(1)}},$
\begin{align}
&\sum_{(i,j)\in {\cal I}\setminus \mathcal{B}(U)} | (\Delta)_{i,j} |\notag\\
\geq~&\sum_{t\in( \frac{1.0002\ln n}{\ln 1.001} , \frac{1.9999 \ln n}{\ln 1.001} ) \cap \mathbb{N}} 1.001^t \cdot N \cdot 1.998\cdot \ln(1.001)/(\pi\cdot 1.001^t)\notag\\
\geq~&\frac{1.997}{\pi}\cdot N \ln n. \label{eq:large_part_random_part}
\end{align}
We plug~\eqref{eq:large_part_fixed_part} and~\eqref{eq:large_part_random_part} into~\eqref{eq:large_part_fixed_and_random}, from which we have
\begin{align*}
\sum_{(i,j)\in {\cal I}} | (U X - \Delta)_{i,j} |\geq \frac{1.996}{\pi} n^2\ln n.
\end{align*}
\end{proof}

\subsection{Cost from the Sign-Agreement Part of the Small-Entry Part}

We use $-y$ to fit $\Delta$ (we think of $A_S \alpha = A_S^* \alpha - y$, and want to minimize $\| - y - \Delta \|_1$). If the sign of $y_j$ is the same as the sign of $\Delta_j$, then both coordinate values will collectively contribute.
\begin{lemma}[The contribution from $\Delta_i$ when $\Delta_i$ and $y_i$ have the same sign]\label{lem:small_part_same_sign_delta}
Suppose we are given a vector $y\in \R^n$ and a random vector $\Delta \in \R^{n}$ with $\Delta_{j} \sim C(0,1)$ independently. Then with probability at least $1-1/2^{n^{\Theta(1)}},$
\begin{align*}
\sum_{j ~:~ \sign(y_j)=\sign(\Delta_{j}) \mathrm{~and~} |\Delta_{j}| \leq n^{0.9999} } |\Delta_{j}| > \frac{0.9998}{\pi} n \ln n.
\end{align*}
\end{lemma}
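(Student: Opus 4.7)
The plan is a direct application of Bernstein's inequality to the sum $S := \sum_{j=1}^n X_j$, where
\[
X_j := |\Delta_j| \cdot \mathbf{1}\!\left[\sign(y_j) = \sign(\Delta_j),\ |\Delta_j| \leq n^{0.9999}\right].
\]
The $X_j$ are independent (inheriting independence from the $\Delta_j$), each lies in $[0,M]$ with $M := n^{0.9999}$, and $S$ is exactly the quantity to be lower bounded.

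The first step computes the per-coordinate expectation. By symmetry of the Cauchy distribution, $\Pr[\sign(\Delta_j) = \sign(y_j)] = 1/2$ for each $j$ with $y_j \neq 0$; coordinates with $y_j = 0$ contribute $X_j = 0$ deterministically and remove at most $O(n)$ from $\E[S]$, which is negligible at the final scale. Conditional on the sign-matching event, $|\Delta_j|$ is half-Cauchy with density $2/(\pi(1+x^2))$, so
\[
\E[X_j] = \frac{1}{\pi}\int_0^{n^{0.9999}} \frac{x}{1+x^2}\,\mathrm{d}x = \frac{1}{2\pi}\ln(1+n^{1.9998}) = \frac{0.9999}{\pi}\ln n - O(1),
\]
giving $\E[S] \geq \frac{0.9999}{\pi}\, n \ln n - O(n)$. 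For the second moment I would use $\E[X_j^2] \leq \frac{1}{\pi}\int_0^{n^{0.9999}} \frac{x^2}{1+x^2}\,\mathrm{d}x \leq M/\pi$, so $\sum_j \var(X_j) = O(n^{1.9999})$.

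Choosing the deviation $t := \frac{0.00005}{\pi}\, n \ln n$ places $\E[S] - t$ above $\frac{0.9998}{\pi}\, n \ln n$ for large $n$, and Bernstein's inequality then yields
\[
\Pr[S \leq \E[S] - t] \leq \exp\!\left(-\frac{t^2/2}{\sum_j \var(X_j) + Mt/3}\right) \leq \exp\!\left(-\Omega\!\left(\frac{n^2 \ln^2 n}{n^{1.9999}\ln n}\right)\right) \leq 1/2^{n^{\Theta(1)}},
\]
which is exactly the stated probability bound.

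There is no deep analytic obstacle: the argument is a standard Bernstein concentration bound on a sum of bounded, independent random variables. The only care required is constant-tracking, since the claimed factor $0.9998$ leaves just $0.0001/\pi$ of normalized slack below the true expectation $0.9999/\pi + o(1)$, so I must verify that (i) truncation at $|\Delta_j| \leq n^{0.9999}$ loses only $O(1)$ per coordinate in the conditional expectation (immediate from the explicit logarithm above) and (ii) the Bernstein deviation $t$ is chosen well within this slack, both of which hold for $n$ sufficiently large.
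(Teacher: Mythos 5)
Your proposal is correct and follows essentially the same route as the paper: truncate at $n^{0.9999}$, compute the first moment $\frac{1}{2\pi}\ln(1+n^{1.9998})$ and the second-moment bound $O(n^{0.9999})$ per coordinate, and apply Bernstein's inequality; the paper merely packages the sign-matching condition via symmetry into the variables $Z_j=\Delta_j\cdot\mathbf{1}[0<\Delta_j\le n^{0.9999}]$, which have exactly the moments you computed. One small caveat: your aside that coordinates with $y_j=0$ remove only $O(n)$ from $\E[S]$ is quantitatively off (each such coordinate removes $\Theta(\ln n)$, so many zeros would actually falsify the statement), but the lemma implicitly assumes $y_j\neq 0$ for all $j$ --- as does the paper's proof --- and in its application $y$ has no zero entries almost surely.
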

\begin{proof}
For $j\in[n]$, define the random variable
\begin{align*}
Z_j=\left\{\begin{array}{ll}\Delta_j&0<\Delta_j\leq n^{0.9999}\\0&\text{otherwise}\end{array}\right..
\end{align*}
Then, we have
\begin{align*}
\Pr\left[\sum_{j ~:~ \sign(y_j)=\sign(\Delta_{i,j}) \mathrm{~and~} |\Delta_{j}| \leq n^{0.9999} } |\Delta_{j}| > \frac{0.9998}{\pi} n \ln n\right]
=\Pr\left[\sum_{j=1}^n Z_j>\frac{0.9998}{\pi} n \ln n\right].
\end{align*}
Let $B=n^{0.9999}.$
For $j\in [n],$
\begin{align*}
\E[Z_j]=\frac{1}{\pi}\int_{0}^{B} \frac{x}{1+x^2} \mathrm{d}x=\frac{1}{2\pi}\ln(B^2+1).
\end{align*}
Also, 
\begin{align*}
\E[Z_j^2]=\frac{1}{\pi}\int_{0}^{B} \frac{x^2}{1+x^2} \mathrm{d}x=\frac{B-\tan^{-1}(B)}{\pi}\leq B.
\end{align*}
By Bernstein's inequality,
\begin{align*}
&\Pr\left[\E\left[\sum_{j=1}^n Z_j\right]-\sum_{j=1}^n Z_j>10^{-5}\E\left[\sum_{j=1}^n Z_j\right]\right]\\
\leq~&\exp\left(-\frac{0.5\cdot \left(10^{-5}\E\left[\sum_{j=1}^n Z_j\right]\right)^2}{\sum_{j=1}^n \E[Z_j^2]+\frac{1}{3}B\cdot 10^{-5}\E\left[\sum_{j=1}^n Z_j\right]}\right)\\
\leq~&\exp\left(-\frac{5\cdot 10^{-11}n^2\ln^2(B^2+1)/(4\pi^2)}{nB+\frac{1}{3}B\cdot 10^{-5}n\ln(B^2+1)/(2\pi)}\right)\\
\leq~&e^{-n^{\Theta(1)}}.\\
\end{align*}
The last inequality follows since $B=n^{0.9999}.$ Thus, we have
\begin{align*}
\Pr\left[\sum_{j=1}^n Z_j<0.9998/\pi\cdot n\ln n\right]\leq\Pr\left[\sum_{j=1}^n Z_j<0.99999n\ln(B^2+1)/(2\pi)\right]\leq e^{-n^{\Theta(1)}}.
\end{align*}
\end{proof}

\begin{lemma}[Bound on level sets of a Cauchy vector]\label{lem:bound_of_Cauchy_level_size}
Suppose we are given a random vector $y\in \R^n$ with $y_i\sim C(0,1)$ chosen independently. Let
\begin{align*}
L_t^- = \{ i \in [n] ~ | ~ -y_i \in (1.001^t, 1.001^{t+1}] \} \mathrm{~and~}  L_t^+ = \{ i \in [n] ~ | ~ y_i \in (1.001^t, 1.001^{t+1}] \} .
\end{align*}
With probability at least $1-1/2^{n^{\Theta(1)}}$, for all $t\in(\frac{\ln 1549}{\ln 1.001},\frac{0.9999\ln n}{\ln 1.001})\cap \mathbb{N}$,
\begin{align*}
\min ( |L_t^-|, |L_t^+| ) \geq 0.999n \cdot \frac{1}{\pi}  \frac{\ln 1.001}{1.001^t} .
\end{align*}
\end{lemma}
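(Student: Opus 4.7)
\medskip

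\noindent\textbf{Proof plan for Lemma~\ref{lem:bound_of_Cauchy_level_size}.}
The plan is to express $|L_t^+|$ and $|L_t^-|$ as sums of $n$ independent Bernoulli indicators, estimate their means using the anti-concentration bound for a single Cauchy variable proved in Claim~\ref{cla:prob_Cauchy_range}, apply a Chernoff bound for each fixed $t$ and each sign, and finally union bound over the $O(\log n)$ admissible values of $t$ and over the two signs.

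First I would fix $t \in (\tfrac{\ln 1549}{\ln 1.001},\tfrac{0.9999\ln n}{\ln 1.001})\cap \mathbb{N}$ and write $|L_t^+|=\sum_{i=1}^n \mathbf{1}[y_i\in(1.001^t,1.001^{t+1}]]$. By the symmetry of the standard Cauchy density about $0$, for every $i$,
\begin{align*}
\Pr[y_i\in(1.001^t,1.001^{t+1}]] \;=\; \Pr[-y_i\in(1.001^t,1.001^{t+1}]] \;=\; \tfrac{1}{2}\Pr[|y_i|\in(1.001^t,1.001^{t+1}]].
\end{align*}
Since $1.001^t>1549$ by our lower restriction on $t$, Claim~\ref{cla:prob_Cauchy_range} yields
\begin{align*}
\Pr[y_i\in(1.001^t,1.001^{t+1}]] \;\geq\; \frac{1}{\pi}\cdot\frac{1.999\ln(1.001)}{2\cdot 1.001^t}\cdot\frac{1}{1} \;\geq\; \frac{0.9995}{\pi}\cdot\frac{\ln 1.001}{1.001^t},
\end{align*}
and similarly for $|L_t^-|$. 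Hence $\E[|L_t^+|], \E[|L_t^-|]\geq \mu_t := \tfrac{0.9995}{\pi}\cdot n\cdot \tfrac{\ln 1.001}{1.001^t}$.

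Next, because $t<\tfrac{0.9999\ln n}{\ln 1.001}$ we have $1.001^t<n^{0.9999}$, so $\mu_t = \Omega(n^{0.0001}/\ln n) = n^{\Theta(1)}$. Applying the standard multiplicative Chernoff bound to the sum of independent $\{0,1\}$-valued indicators with $\delta = 4\cdot 10^{-4}$ (small enough so that $(1-\delta)\cdot 0.9995/\pi \geq 0.999/\pi$) gives
\begin{align*}
\Pr\!\left[|L_t^+|<(1-\delta)\mu_t\right] \;\leq\; \exp\!\left(-\tfrac{\delta^2}{2}\mu_t\right) \;\leq\; 2^{-n^{\Theta(1)}},
\end{align*}
and the same bound for $|L_t^-|$. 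Taking a union bound over the two signs and over the $O(\log n)$ integer values of $t$ in the stated range still gives failure probability $2^{-n^{\Theta(1)}}$, and on the good event $\min(|L_t^+|,|L_t^-|) \geq (1-\delta)\mu_t \geq 0.999 n \cdot \tfrac{1}{\pi}\cdot \tfrac{\ln 1.001}{1.001^t}$ simultaneously for every $t$, as required.

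The only step that requires a bit of care is checking that the lower bound on $t$ ensures $1.001^t > 1549$ so that Claim~\ref{cla:prob_Cauchy_range} applies, and that the upper bound on $t$ keeps $\mu_t$ polynomially large so that the Chernoff exponent $-\delta^2\mu_t/2$ is of the form $-n^{\Theta(1)}$; both follow directly from the stated range of $t$, so there is no real obstacle.
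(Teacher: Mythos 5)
Your proposal is correct and follows essentially the same route as the paper's proof: lower-bound $\Pr[y_i\in(1.001^t,1.001^{t+1}]]$ by $\frac{0.9995}{\pi}\cdot\frac{\ln 1.001}{1.001^t}$ via Claim~\ref{cla:prob_Cauchy_range} and symmetry, note the mean is $n^{\Theta(1)}$ because $1.001^t\leq n^{0.9999}$, apply a Chernoff bound for each $t$ and each sign, and union bound over the admissible $t$. The only differences are cosmetic (you make the slack parameter $\delta$ and the symmetry step explicit), so there is nothing to fix.
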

\begin{proof}
For $i\in[n],t\geq \frac{\ln 1549}{\ln 1.001},$ according to Claim~\ref{cla:prob_Cauchy_range}, $\Pr[y_i\in (1.001^t,1.001^{t+1}]]\geq 0.9995/\pi\cdot \ln(1.0001)/1.001^t.$ Thus, $\E[|L_t^+|]=\E[|L_t^-|]=n\cdot 0.9995/\pi\cdot \ln(1.0001)/1.001^t.$ Since $t\leq \frac{0.9999\ln n}{\ln 1.001},$ $1.001^t\leq n^{0.9999},$ we have $\E[|L_t^+|]=\E[|L_t^-|]\geq n^{\Theta(1)}.$ By applying a Chernoff bound,
\begin{align*}
\Pr[|L_t^+|>0.999n/\pi\cdot \ln(1.0001)/1.001^t]\geq 1-1/2^{n^{\Theta(1)}}.
\end{align*}
Similarly, we have
\begin{align*}
\Pr[|L_t^-|>0.999n/\pi\cdot \ln(1.0001)/1.001^t]\geq 1-1/2^{n^{\Theta(1)}}.
\end{align*}
By taking a union bound over all the $L_t^+$ and $L_t^-,$ we complete the proof.
\end{proof}

\begin{lemma}[The contribution from $y_i$ when $\Delta_i$ and $y_i$ have the same sign]\label{lem:small_part_same_sign_alpha}
Let $u=\eta\cdot\mathbf{1}\in\mathbb{R}^n$ where $\eta\in\mathbb{R}$ is an arbitrary real number. Let $y\in \R^n$ be a random vector with $y_i\sim C(0,\beta)$ independently for some $\beta>0.$ Let $\Delta\in\mathbb{R}^n$ be a random vector with $\Delta_i \sim C(0,1)$ independently. With probability at least $1-1/2^{n^{\Theta(1)}},$
\begin{align*}
\sum_{i ~: ~ \sign((u+y)_i) = \sign(\Delta_{i}) \mathrm{~and~} |\Delta_{i}| \leq n^{0.9999}} |(u+y)_i| \geq  \beta\cdot \frac{0.997}{\pi} n\ln n.
\end{align*}
\end{lemma}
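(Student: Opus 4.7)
The strategy is to decompose the sum over the level sets of $y$, apply Lemma~\ref{lem:bound_of_Cauchy_level_size}, and control the independent sign-match events induced by $\Delta$ via a Chernoff bound. The joint distribution of $(y,\Delta)$ is invariant under $(y,\Delta)\mapsto(-y,-\Delta)$, while both the event $\{\sign((u+y)_i)=\sign(\Delta_i)\}$ and the value $|(u+y)_i|$ are invariant under this flip combined with $\eta\mapsto -\eta$. Hence I may assume $\eta\ge 0$. Since $y_i/\beta\sim C(0,1)$, Lemma~\ref{lem:bound_of_Cauchy_level_size} gives that, with probability at least $1-1/2^{n^{\Theta(1)}}$, the level sets
\begin{align*}
L_t^{+} := \{i : y_i \in (\beta\cdot 1.001^t,\ \beta\cdot 1.001^{t+1}]\}, \quad L_t^{-} := \{i : -y_i \in (\beta\cdot 1.001^t,\ \beta\cdot 1.001^{t+1}]\}
\end{align*}
both have size at least $0.999\,n\ln(1.001)/(\pi\cdot 1.001^t)$ for every integer $t\in T:=(\ln 1549/\ln 1.001,\ 0.9999\ln n/\ln 1.001)$.

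In the main regime $\eta\le\beta\cdot n^{0.0001}$, let $t_0:=\lceil\log_{1.001}(2\eta/\beta)\rceil$, so that $t_0\le 0.0001\ln n/\ln 1.001+O(1)$. For any $t\in T$ and $i\in L_t^{+}$, the shift preserves sign: $(u+y)_i=\eta+y_i\ge \beta\cdot 1.001^t$ with $\sign((u+y)_i)=+$. For any $t\in T\cap[t_0,\infty)$ and $i\in L_t^{-}$, the scale $\beta\cdot 1.001^t\ge 2\eta$ forces $\sign((u+y)_i)=-$ and $|(u+y)_i|\ge \beta\cdot 1.001^t-\eta$. Conditioning on $y$, each sign-match indicator $\mathbf{1}\{\sign(\Delta_i)=\sign((u+y)_i),\ |\Delta_i|\le n^{0.9999}\}$ is a Bernoulli with parameter $p=\pi^{-1}\tan^{-1}(n^{0.9999})=1/2-o(1)$ (independent of $\sign((u+y)_i)$ by the symmetry of $\Delta$), and these indicators are mutually independent across $i$ because $\Delta$ is independent of $y$. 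Since $|L_t^{\pm}|\ge n^{\Theta(1)}$ throughout $T$, a Chernoff bound per level followed by a union bound over the $O(\ln n)$ levels yields, with probability at least $1-1/2^{n^{\Theta(1)}}$, that for every $t\in T$ the number of sign matches within each of $L_t^{+}$ and $L_t^{-}$ is at least $0.49999\,|L_t^{\pm}|$. Summing, the $L_t^{+}$ contribution exceeds $0.49999\cdot 0.999\cdot n\beta\ln(1.001)/\pi\cdot|T|\ge (0.499-o(1))\,n\beta\ln n/\pi$, while the $L_t^{-}$ contribution over $t\ge t_0$ exceeds $0.49999\cdot 0.999\cdot n\ln(1.001)/\pi\cdot\sum_{t\in T,\,t\ge t_0}(\beta-\eta/1.001^t)$. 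The geometric tail $\sum_{t\ge t_0}\eta/1.001^t\le 500.5\,\beta$ absorbs into an $O(n\beta)=o(n\beta\ln n)$ correction, so the $L_t^{-}$ contribution also exceeds $(0.499-o(1))\,n\beta\ln n/\pi$. The two contributions sum to more than $0.997\,n\beta\ln n/\pi$.

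In the remaining regime $\eta>\beta\cdot n^{0.0001}$, a direct ``shift dominates'' argument suffices: the set $J:=\{i:y_i\in[0,\eta]\}$ satisfies $\Pr[i\in J]=\pi^{-1}\tan^{-1}(\eta/\beta)\ge 1/2-o(1)$, so $|J|\ge n/3$ by a Chernoff bound with probability $1-1/2^{n^{\Theta(1)}}$; every $i\in J$ has $(u+y)_i\ge\eta$ with sign $+$, and a second Chernoff bound on the independent sign-match events produces at least $n/10$ sign-matched coordinates contributing at least $\eta$ each, for a total of at least $n\eta/10\ge\beta\cdot n^{1.0001}/10\gg n\beta\ln n/\pi$. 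The principal obstacle is the first regime, where the magnitude $|(u+y)_i|$ on $L_t^{-}$ degrades by an additive $\eta$ near the threshold $t_0$; the geometric-series estimate above limits this loss to $O(n\beta)$, which sits comfortably inside the slack between the attained $\approx 0.998$ and the claimed $0.997$.
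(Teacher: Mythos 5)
Your proof is correct, and it shares the paper's skeleton — scale Lemma~\ref{lem:bound_of_Cauchy_level_size} to the level sets $L_t^{\pm}$ of $y$, then use that, conditioned on $y$, the events $\{\sign(\Delta_i)=\sign((u+y)_i),\ |\Delta_i|\le n^{0.9999}\}$ are i.i.d.\ Bernoulli with parameter $\tfrac{1}{\pi}\tan^{-1}(n^{0.9999})=\tfrac12-o(1)$ and apply Chernoff per level with a union bound — but you handle the shift $u=\eta\mathbf{1}$ by a genuinely different mechanism. The paper never cases on $\eta$: it pairs the sign-matched indices of $L_t^{+}$ with those of $L_t^{-}$ at the same level and uses the elementary inequality $|a+c|+|b-c|\ge |a+b|$, so each pair contributes at least $2\beta\cdot 1.001^t$ and the shift cancels exactly, for any $\eta$. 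You instead reduce to $\eta\ge 0$ by symmetry, split into $\eta\le \beta n^{0.0001}$ (where $L_t^{+}$ contributes at full strength, $L_t^{-}$ is used only for $t\ge t_0=\lceil\log_{1.001}(2\eta/\beta)\rceil$ so the sign is forced, and the loss $\sum_{t\ge t_0}\eta/1.001^t\le 500.5\beta$ per-level is absorbed as an $O(n\beta)=o(n\beta\ln n)$ correction) and $\eta>\beta n^{0.0001}$ (where the shift alone already yields $\ge n\eta/10\gg n\beta\ln n$ from the $\approx n/2$ coordinates with $y_i\in[0,\eta]$). Both arguments are sound and your numerics ($\approx 0.998$ versus the claimed $0.997$) clear the bar; the paper's pairing trick buys a shorter, case-free proof uniform in $\eta$, while your route is more hands-on and makes explicit where the additive shift could hurt (only near the threshold level $t_0$) and why it cannot.
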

\begin{proof}
For all $t\in(\frac{\ln 1549}{\ln 1.001},\frac{0.9999\ln n}{\ln 1.001})\cap \mathbb{N},$ define
\begin{align*}
L_t^- = \{ i \in [n] ~ | ~ -y_i \in (\beta\cdot 1.001^t, \beta\cdot 1.001^{t+1}] \} \mathrm{~and~}  L_t^+ = \{ i \in [n] ~ | ~ y_i \in (\beta\cdot1.001^t, \beta\cdot1.001^{t+1}] \} .
\end{align*}
Define
\begin{align*}
G=\{i\in[n]\mid \sign((u+y)_i) = \sign(\Delta_{i})\mathrm{~and~}|\Delta_{i}| \leq n^{0.9999}\}.
\end{align*}
Then $\forall i\in[n],\Pr[i\in G]\geq 0.5-1/n^{0.9999}\geq 0.4999999999.$
Due to Lemma~\ref{lem:bound_of_Cauchy_level_size},
\begin{align*}
\min ( |L_t^-|, |L_t^+| ) \geq 0.999n \cdot \frac{1}{\pi}  \frac{\ln 1.001}{1.001^t} \geq n^{\Theta(1)}.
\end{align*}
By a Chernoff bound and a union bound, with probability at least $1-1/2^{n^{\Theta(1)}},$ $\forall t\in(\frac{\ln 1549}{\ln 1.001},\frac{0.9999\ln n}{\ln 1.001})\cap \mathbb{N},$
\begin{align}
&\min ( |L_t^-\cap G|, |L_t^+ \cap G|)\notag\\
\geq~&0.499n \cdot \frac{1}{\pi}  \frac{\ln 1.001}{1.001^t}. \label{eq:size_of_pm_level}
\end{align}
Then we have
\begin{align*}
&\sum_{i\in G} |(u+y)_i|\\
\geq~&\sum_{t\in(\frac{\ln 1549}{\ln 1.001},\frac{0.9999\ln n}{\ln 1.001})\cap \mathbb{N}}\left(\sum_{i\in L_t^+,i\in G}|y_i+\eta|+\sum_{i\in L_t^-,i\in G}|-y_i-\eta|\right)\\
\geq~&\sum_{t\in(\frac{\ln 1549}{\ln 1.001},\frac{0.9999\ln n}{\ln 1.001})\cap \mathbb{N}}0.499n \cdot \frac{1}{\pi}  \frac{\ln 1.001}{1.001^t}\cdot 2\cdot 1.001^t\cdot \beta\\
\geq~&\beta\cdot\frac{0.997}{\pi}n\ln n
\end{align*}
The second inequality follows by Equation~\eqref{eq:size_of_pm_level} and the triangle inequality, i.e., $\forall a,b,c\in\mathbb{R},|a+c|+|b-c|\geq |a+b|.$
\end{proof}

\subsection{Cost from the Sign-Disagreement Part of the Small-Entry Part}

\begin{lemma}\label{lem:small_part_different_sign}
Given a vector $y\in \R^n$ and a random vector $\Delta\in \R^{n}$ with $\Delta_{i} \sim C(0,1)$ independently, with probability at least $1-1/2^{n^{\Theta(1)}}$,
\begin{align*}
\sum_{i ~ : ~ \sign(y_i) \neq \sign(\Delta_{i}) \mathrm{~and~} |\Delta_{i}| < n^{0.9999} } |y_i + \Delta_{i}| > \frac{0.03}{\pi}n\ln n.
\end{align*}
\end{lemma}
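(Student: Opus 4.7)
The plan is to bound $\sum_{i=1}^n X_i$ from below with high probability, where I write $X_i := |y_i + \Delta_i| \cdot \mathbbm{1}[\sign(y_i) \neq \sign(\Delta_i) \text{ and } |\Delta_i| < n^{0.9999}]$, so that the lemma reduces to proving $\sum_i X_i > \frac{0.03}{\pi} n \ln n$.

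The first step is an expectation lower bound. By the symmetry of $C(0,1)$, I may assume $y_i \geq 0$, and the change of variables $u = -\Delta_i$ gives
\[
\E[X_i] = f(y_i) := \int_0^{n^{0.9999}} \frac{|y_i - u|}{\pi(1+u^2)}\, du.
\]
The function $f$ is convex in $y_i$ as an average of convex functions, and setting $f'(y)=0$ identifies the unique minimizer $y^\ast = \tan(\arctan(n^{0.9999})/2)$, which tends to $1$ as $n\to\infty$. A short direct computation then gives $f(y^\ast) = \tfrac{1}{2\pi}\ln(1+n^{1.9998}) - \tfrac{1}{\pi}\ln(1+(y^\ast)^2) \geq 0.3\,\ln n$ for all sufficiently large $n$, so $\E[X_i] \geq 0.3\ln n$ holds uniformly in $y_i$. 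In particular $\E[\sum_i X_i] \geq 0.3\, n\ln n$, comfortably exceeding the target $\tfrac{0.03}{\pi} n\ln n$ by a factor of roughly $30$.

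To convert the expectation into a high-probability bound I would split the coordinates into $I_s := \{i : |y_i|\leq 10\, n^{0.9999}\}$ and $I_l := [n]\setminus I_s$, and case on whether $|I_s|\geq n/2$ or $|I_l| > n/2$. In the first case, each $X_i$ for $i\in I_s$ is bounded deterministically by $B_0 := 11\, n^{0.9999}$, so Bernstein's inequality applies with variance bound $\sum_{i\in I_s} \E[X_i^2] \leq B_0 \sum_{i\in I_s} \E[X_i]$; the resulting Bernstein exponent is on the order of $\E[\sum_{i\in I_s}X_i]/B_0 \geq 0.15\, n\ln n /(11\, n^{0.9999}) = \Omega(n^{0.0001}\ln n) = n^{\Theta(1)}$, enough to conclude $\sum_{i\in I_s} X_i \geq \tfrac{1}{2}\E[\sum_{i\in I_s}X_i]$ with probability $1-2^{-n^{\Theta(1)}}$. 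In the second case, whenever the indicator in $X_i$ fires for $i\in I_l$, the contribution satisfies $|y_i+\Delta_i|\geq |y_i|-n^{0.9999}\geq 9\, n^{0.9999}$; since that indicator has probability $\geq 1/3$ (using a small-tail estimate for $|\Delta_i|\geq n^{0.9999}$), a Chernoff bound on the $|I_l|>n/2$ independent Bernoulli events gives at least $n/8$ successes with probability $1-2^{-\Theta(n)}$, contributing at least $n^{1.9999}$ to the sum, which overwhelms the target.

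The main technical hurdle is the uniform lower bound $f(y)\geq 0.3\ln n$: near the minimizer $y^\ast\approx 1$, cancellation between $y_i$ and $-\Delta_i$ is most severe and $f$ genuinely drops to order $(\ln n)/\pi$. Fortunately the target constant $0.03/\pi$ is well below the resulting $\approx 0.318/\pi$, which gives enough slack to absorb both the $O(1)$ correction in $f(y^\ast)$ and the Bernstein deviation, and no sharpening of either estimate is needed.
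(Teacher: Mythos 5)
Your proof is correct, and it takes a genuinely different route from the paper's. The paper never computes an expectation of $|y_i+\Delta_i|$: it partitions the surviving coordinates into level sets $L_t$ according to $|\Delta_i|\in(4^t,4^{t+1}]$, uses anti-concentration of the Cauchy distribution to discard the event $|\Delta_i|\in[|y_i|-4^t,|y_i|+4^t]$ (so that sign disagreement forces $|y_i+\Delta_i|=\bigl||y_i|-|\Delta_i|\bigr|>4^t$ no matter how large $|y_i|$ is), applies a Chernoff bound to each of the $\Theta(\ln n)$ level sets, and sums $|L_t|\cdot 4^t$. You instead compute the per-coordinate conditional expectation exactly, $\E[X_i]=f(|y_i|)$ with $f$ convex and minimized at $y^\ast=\tan(\arctan(n^{0.9999})/2)\to 1$, where $f(y^\ast)=\frac{1}{2\pi}\ln(1+n^{1.9998})-\frac{1}{\pi}\ln\bigl(1+(y^\ast)^2\bigr)\ge 0.3\ln n$ for large $n$ (I verified this computation), and then concentrate via Bernstein; your split into $|y_i|\le 10\,n^{0.9999}$ versus large $|y_i|$ is exactly what is needed to control the range and variance for Bernstein, and the large-$|y_i|$ case is handled correctly by a Chernoff count of indicator firings. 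That case split is the step the paper's level-set trick makes unnecessary, since its pointwise bound $|y_i+\Delta_i|>4^t$ is uniform in $|y_i|$; in exchange, your route yields a stronger constant (your bounded case alone gives at least $0.075\,n\ln n\approx\frac{0.236}{\pi}n\ln n$ versus the paper's $\frac{0.03}{\pi}n\ln n$) and pinpoints where the worst case sits, namely $|y_i|\approx 1$. One cosmetic slip: in your final paragraph the per-coordinate expectation is $\approx\frac{1}{\pi}\ln n$, i.e.\ the relevant constant is $0.318\approx\frac{1}{\pi}$ rather than ``$0.318/\pi$''; this does not affect the argument since your bound clears the target by a wide margin either way.
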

\begin{proof}
For $t\in [0,\frac{0.9999\ln n}{\ln 4})\cap \mathbb{N}$ define
\begin{align*}
L_t=\{i\in[n]\mid \sign(y_i) \neq \sign(\Delta_{i}),|\Delta_i|\in (4^t,4^{t+1}],|\Delta_i|\not\in[|y_i|-4^t,|y_i|+4^t]\}.
\end{align*}
$\forall x\geq 1,y>0,$ we have
\begin{align*}
&\Pr_{X\sim C(0,1)}[|X|\in (x,4x],|X|\not\in [y-x,y+x]]\\
\geq~& \Pr_{X\sim C(0,1)}[|X|\in (3x,4x]]\\
=~ &\frac{2}{\pi}\cdot(\tan^{-1}(4x)-\tan^{-1}(3x))\\
\geq~& \frac{0.1}{\pi}\cdot\frac{\ln(4)}{x}
\end{align*}
Thus, $\forall i\in[n],t\in [0,\frac{0.9999\ln n}{\ln 4})\cap \mathbb{N},$
\begin{align*}
\Pr[i\in L_t]\geq \frac{0.05}{\pi}\cdot\frac{\ln(4)}{4^t}.
\end{align*}
Thus, $\forall t\in [0,\frac{0.9999\ln n}{\ln 4})\cap \mathbb{N},\E[|L_t|]\geq 0.05n/\pi\cdot \ln(4)/4^t\geq n^{\Theta(1)}.$ By a Chernoff bound and a union bound, with probability at least $1-1/2^{n^{\Theta(1)}}$ $\forall t\in [0,\frac{0.9999\ln n}{\ln 4})\cap \mathbb{N},$ $|L_t|\geq 0.04n/\pi\cdot \ln(4)/4^t.$ Thus, we have, with probability at least $1-1/2^{n^{\Theta(1)}},$
\begin{align*}
&\sum_{i ~ : ~ \sign(y_i) \neq \sign(\Delta_{i}) \mathrm{~and~} |\Delta_{i}| < n^{0.9999} } |y_i + \Delta_{i}|\\
\geq~&\sum_{t\in [0,\frac{0.9999\ln n}{\ln 4})\cap \mathbb{N}}|L_t|\cdot 4^t\\
\geq~&\frac{0.03}{\pi}n\ln n.
\end{align*}
\end{proof}

\subsection{Overall Cost of the Small-Entry Part}

\begin{lemma}[For each]\label{lem:fixed_small_part}
Let $u=\eta\cdot \mathbf{1}\in\mathbb{R}^n$ where $\eta\in\mathbb{R}$ is an arbitrary real number. Let $\alpha\in\mathbb{R}^d$ where $\|\alpha\|_1\geq 1-10^{-20}.$ Let $\Delta\in\mathbb{R}^{n\times(d+1)}$ and $\forall (i,j)\in [n]\times [d+1],\Delta_{i,j}\sim C(0,1)$ are i.i.d. standard Cauchy random variables. Then with probability at least $1-1/2^{n^{\Theta(1)}},$
\begin{align*}
\sum_{j\in[n],|\Delta_{j,d+1}|<n^{0.9999}}|(u+\Delta_{d+1}-(u\mathbf{1}^\top+\Delta_{[d]})\alpha)_j|\geq \frac{2.025}{\pi}n\ln n.
\end{align*}
\end{lemma}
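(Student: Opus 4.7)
The plan is to reduce the claim to the three preceding lemmas (Lemma~\ref{lem:small_part_same_sign_delta}, Lemma~\ref{lem:small_part_same_sign_alpha}, and Lemma~\ref{lem:small_part_different_sign}) applied to carefully chosen pieces of the sum. First I would rewrite
\[ u + \Delta_{d+1} - (u \mathbf{1}^\top + \Delta_{[d]}) \alpha = (1 - \mathbf{1}^\top \alpha) u + \Delta_{d+1} - \Delta_{[d]} \alpha, \]
set $u' := (1 - \mathbf{1}^\top \alpha) u = \eta' \mathbf{1}$ with $\eta' = (1 - \mathbf{1}^\top \alpha) \eta$, and let $z := -\Delta_{[d]} \alpha$. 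By the stability of the Cauchy distribution, the coordinates of $z$ are i.i.d.\ $C(0, \beta)$ with $\beta := \|\alpha\|_1 \geq 1 - 10^{-20}$, and $z$ is independent of $\Delta_{d+1}$ since they involve disjoint columns of $\Delta$. The task then becomes lower bounding $\sum_{j : |\Delta_{j,d+1}| < n^{0.9999}} |(u' + z)_j + \Delta_{j,d+1}|$.

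Next I would split the indices $j$ appearing in the sum according to whether $\sign((u' + z)_j) = \sign(\Delta_{j,d+1})$. On the same-sign part, the identity $|(u' + z)_j + \Delta_{j,d+1}| = |(u' + z)_j| + |\Delta_{j,d+1}|$ lets me bound the two pieces separately. For the $\sum |\Delta_{j,d+1}|$ piece I would condition on $z$ (so that $y := u' + z$ becomes deterministic) and invoke Lemma~\ref{lem:small_part_same_sign_delta} over the fresh randomness of $\Delta_{d+1}$, giving a lower bound of $\tfrac{0.9998}{\pi} n \ln n$. For the $\sum |(u' + z)_j|$ piece I would apply Lemma~\ref{lem:small_part_same_sign_alpha} directly with the substitutions $u \mapsto u'$, $y \mapsto z$, $\Delta \mapsto \Delta_{d+1}$, obtaining $\|\alpha\|_1 \cdot \tfrac{0.997}{\pi} n \ln n \geq (1 - 10^{-20}) \cdot \tfrac{0.997}{\pi} n \ln n$. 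On the opposite-sign part I would again condition on $z$ and invoke Lemma~\ref{lem:small_part_different_sign} with $y := u' + z$, yielding at least $\tfrac{0.03}{\pi} n \ln n$.

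Adding the three contributions and taking a union bound over the three high-probability events (each with failure probability $1/2^{n^{\Theta(1)}}$) would then yield the stated lower bound, since
\[ \frac{(1 - 10^{-20}) \cdot 0.997 + 0.9998 + 0.03}{\pi} n \ln n \; > \; \frac{2.025}{\pi} n \ln n. \]
The main point to be careful about is not a probabilistic estimate but the independence bookkeeping: $z$ is a function of the first $d$ columns of $\Delta$ while $\Delta_{d+1}$ is the last column, so conditioning on $z$ leaves $\Delta_{d+1}$ as an i.i.d.\ standard Cauchy vector, which is exactly the hypothesis required for Lemmas~\ref{lem:small_part_same_sign_delta} and~\ref{lem:small_part_different_sign}; Lemma~\ref{lem:small_part_same_sign_alpha} in turn uses the joint independence of $z$ and $\Delta_{d+1}$ together with the fact that $u'$ is still a scalar multiple of $\mathbf{1}$.
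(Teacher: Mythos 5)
Your proposal is correct and follows essentially the same route as the paper: rewrite the residual as $(1-\mathbf{1}^\top\alpha)u - \Delta_{[d]}\alpha + \Delta_{d+1}$, split the small-$|\Delta_{j,d+1}|$ indices by sign agreement, and invoke Lemma~\ref{lem:small_part_same_sign_alpha}, Lemma~\ref{lem:small_part_same_sign_delta}, and Lemma~\ref{lem:small_part_different_sign} on the respective pieces before summing the three contributions. The only difference is that you spell out the conditioning-on-$z$/independence bookkeeping, which the paper leaves implicit; the decomposition and constants are identical.
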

\begin{proof}
Let $G_1=\{j\in[n]\mid |\Delta_{j,d+1}|<n^{0.9999},\sign((u(1-\mathbf{1}^\top\alpha)-\Delta_{[d]}\alpha)_j)=\sign(\Delta_{d+1})_j)\},G_2=\{j\in[n]\mid |\Delta_{j,d+1}|<n^{0.9999},\sign((u(1-\mathbf{1}^\top\alpha)-\Delta_{[d]}\alpha)_j)\not=\sign(\Delta_{d+1})_j)\}.$ Notice that $\Delta_{[d]}\alpha$ is a random vector with each entry independently drawn from $C(0,\|\alpha\|_1).$
Then with probability at least $1-1/2^{n^{\Theta(1)}},$
\begin{align*}
&\sum_{j\in[n],|\Delta_{j,d+1}|<n^{0.9999}}|(u+\Delta_{d+1}-(u\mathbf{1}^\top+\Delta_{[d]})\alpha)_j|\\
=~&\sum_{j\in[n],|\Delta_{j,d+1}|<n^{0.9999}} |(u(1-\mathbf{1}^\top\alpha)-\Delta_{[d]}\alpha+\Delta_{d+1})_j|\\
=~&\sum_{j\in G_1} |(u(1-\mathbf{1}^\top\alpha)-\Delta_{[d]}\alpha+\Delta_{d+1})_j|+\sum_{j\in G_2} |(u(1-\mathbf{1}^\top\alpha)-\Delta_{[d]}\alpha+\Delta_{d+1})_j|\\
=~& \sum_{j\in G_1} |(u(1-\mathbf{1}^\top\alpha)-\Delta_{[d]}\alpha)_j|+\sum_{j\in G_1}|(\Delta_{d+1})_j|+\sum_{j\in G_2} |(u(1-\mathbf{1}^\top\alpha)-\Delta_{[d]}\alpha+\Delta_{d+1})_j|\\
\geq~&\|\alpha\|_1\cdot\frac{0.997}{\pi}\cdot n\ln n+\frac{0.9998}{\pi}n\ln n+\frac{0.03}{\pi}n\ln n\\
\geq~&\frac{2.025}{\pi}n\ln n
\end{align*}
The first inequality follows by Lemma~\ref{lem:small_part_same_sign_alpha}, Lemma~\ref{lem:small_part_same_sign_delta} and Lemma~\ref{lem:small_part_different_sign}. The second inequality follows by $\|\alpha\|_1\geq 1-10^{-20}.$
\end{proof}

\begin{lemma}[For all]\label{lem:bound_for_small_part}
Let $c>0,c_0>0$ be two arbitrary constants. Let $u=\eta\cdot \mathbf{1}\in\mathbb{R}^n$ where $\eta\in\mathbb{R}$ satisfies $|\eta|\leq n^{c_0}$. Consider a random matrix $\Delta\in\mathbb{R}^{n\times(d+1)}$ with $d=n^{o(1)}$ and $\forall (i,j)\in [n]\times [d+1],\Delta_{i,j}\sim C(0,1)$ are i.i.d. standard Cauchy random variables. Conditioned on $\|\Delta\|_1\leq n^3,$ with probability at least $1-1/2^{n^{\Theta(1)}},$ $\forall \alpha\in\mathbb{R}^{d}$ with $1-10^{-20}\leq\|\alpha\|_1\leq n^c,$
\begin{align*}
\sum_{j\in[n],|\Delta_{j,d+1}|<n^{0.9999}}|(u+\Delta_{d+1}-(u\mathbf{1}^\top+\Delta_{[d]})\alpha)_j|\geq \frac{2.024}{\pi}n\ln n.
\end{align*}
\end{lemma}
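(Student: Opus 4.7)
The plan is to derive the ``for all $\alpha$'' statement from the ``for each $\alpha$'' statement of Lemma~\ref{lem:fixed_small_part} by a standard $\epsilon$-net argument, mirroring how Lemma~\ref{lem:for_all_alpha_can_not_be_too_large} was derived from Lemma~\ref{lem:for_each_alpha_can_not_be_too_large}. Fix a sufficiently large constant $c''>c_0+c+10$. The goal is to cover the annulus $\mathcal{A}=\{\alpha\in\mathbb{R}^d:1-10^{-20}\leq \|\alpha\|_1\leq n^c\}$ by a finite set $N\subseteq\mathcal{A}$ such that every $\alpha\in\mathcal{A}$ has some $\alpha'\in N$ with $\|\alpha-\alpha'\|_1\leq 1/n^{c''}$. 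By a standard construction (rescale radii from the unit-$\ell_1$-sphere net of Lemma~\ref{lem:l1_eps_net_size} and take a union over $O(\log n)$ dyadic scales), one can take $|N|\leq 2^{O(d\log n)}=2^{n^{o(1)}}$.

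Next, I apply Lemma~\ref{lem:fixed_small_part} to each $\alpha'\in N$ (its hypothesis $\|\alpha'\|_1\geq 1-10^{-20}$ holds since $\alpha'\in\mathcal{A}$) and union bound over $N$; since $|N|\cdot 2^{-n^{\Theta(1)}}=2^{-n^{\Theta(1)}}$, with probability at least $1-2^{-n^{\Theta(1)}}$ the lower bound $\frac{2.025}{\pi}n\ln n$ holds simultaneously at every net point. From now on I work on this event and on the conditioning event $\|\Delta\|_1\leq n^3$.

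Now fix an arbitrary $\alpha\in\mathcal{A}$ and pick $\alpha'\in N$ with $\gamma:=\alpha-\alpha'$ satisfying $\|\gamma\|_1\leq 1/n^{c''}$. Crucially, the index set $\{j:|\Delta_{j,d+1}|<n^{0.9999}\}$ over which the sum is taken is independent of $\alpha$, so the triangle inequality gives
\begin{align*}
\bigl|\mathrm{LHS}(\alpha)-\mathrm{LHS}(\alpha')\bigr|\leq \|(u\mathbf{1}^\top+\Delta_{[d]})\gamma\|_1\leq |\mathbf{1}^\top\gamma|\cdot\|u\|_1+\|\Delta_{[d]}\gamma\|_1.
\end{align*}
The first term is at most $\|\gamma\|_1\cdot n\cdot n^{c_0}\leq n^{c_0+1-c''}$. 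For the second, one has $\|\Delta_{[d]}\gamma\|_1\leq \sum_j|\gamma_j|\|\Delta_j\|_1\leq \|\gamma\|_1\cdot\|\Delta\|_1\leq n^{3-c''}$. Choosing $c''$ large enough makes the total perturbation $O(1)$, which is negligible compared to $\frac{0.001}{\pi}n\ln n$. Hence
\begin{align*}
\mathrm{LHS}(\alpha)\geq \mathrm{LHS}(\alpha')-O(1)\geq \frac{2.025}{\pi}n\ln n - \frac{0.001}{\pi}n\ln n \geq \frac{2.024}{\pi}n\ln n,
\end{align*}
as claimed.

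The main delicate point is really just bookkeeping: verifying that the per-point failure probability $2^{-n^{\Theta(1)}}$ from Lemma~\ref{lem:fixed_small_part} dominates the net size $2^{n^{o(1)}}$ (it does, since $d=n^{o(1)}$ implies $\log|N|=n^{o(1)}$), and choosing $c''$ large enough that the perturbation is absorbed into the gap between $2.025/\pi$ and $2.024/\pi$. The conditioning on $\|\Delta\|_1\leq n^3$ is used exactly once, in bounding $\|\Delta_{[d]}\gamma\|_1$; everything else carries through by the high-probability control already established in Lemma~\ref{lem:fixed_small_part}.
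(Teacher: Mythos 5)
Your proposal is correct and takes essentially the same route as the paper's own proof: a fine net of the annulus $\{1-10^{-20}\leq\|\alpha\|_1\leq n^c\}$ of size $2^{O(d\log n)}=2^{n^{o(1)}}$, a union bound applying Lemma~\ref{lem:fixed_small_part} at every net point, and a triangle-inequality perturbation bound over the fixed index set $\{j:|\Delta_{j,d+1}|<n^{0.9999}\}$ using $\|\Delta\|_1\leq n^3$ and $|\eta|\leq n^{c_0}$ (the paper's net is simply the lattice $\{q/n^{c+c_0+1000}:q\in\mathbb{Z}^d\}$ intersected with the annulus). One small imprecision: a union of sphere nets over $O(\log n)$ dyadic radii does not by itself give mesh $1/n^{c''}$ over the whole annulus (you would need radial spacing $1/n^{c''}$, i.e.\ $\mathrm{poly}(n)$ scales, or just a grid as in the paper), but this does not affect the $2^{O(d\log n)}$ size bound or any later step.
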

\begin{proof}
Let $\mathcal{N}$ be a set of points:
\begin{align*}
\mathcal{N}=\left\{\alpha\in\mathbb{R}^d\mid 1-10^{-20}\leq \|\alpha\|_1\leq n^c\mathrm{~and~} \exists q\in \mathbb{Z}^d,\mathrm{~such~that~}\alpha=q/n^{c+c_0+1000}\right\}.
\end{align*}
Since $d=n^{o(1)},$ we have $|\mathcal{N}|\leq (n^{2c+c_0+2000})^d=2^{n^{o(1)}}.$ By Lemma~\ref{lem:fixed_small_part} and a union bound, with probability at least $1-1/2^{n^{\Theta(1)}}\cdot |\mathcal{N}|\geq 1-1/2^{n^{\Theta(1)}},$ $\forall \alpha\in \mathcal{N},$ we have
\begin{align*}
\sum_{j\in[n],|\Delta_{j,d+1}|<n^{0.9999}}|(u+\Delta_{d+1}-(u\mathbf{1}^\top+\Delta_{[d]})\alpha)_j|\geq \frac{2.025}{\pi}n\ln n.
\end{align*}
Due to the construction of $\mathcal{N},$ we have $\forall \alpha\in\mathbb{R}^{d}$ with $1-10^{-20}\leq\|\alpha\|_1\leq n^c,$ $\exists \alpha'\in\mathcal{N}$ such that $\|\alpha-\alpha'\|_\infty\leq 1/n^{c+c_0+1000}.$ Let $\gamma=\alpha-\alpha'.$ Then
\begin{align*}
&\sum_{j\in[n],|\Delta_{j,d+1}|<n^{0.9999}}|(u+\Delta_{d+1}-(u\mathbf{1}^\top+\Delta_{[d]})\alpha)_j|\\
=~&\sum_{j\in[n],|\Delta_{j,d+1}|<n^{0.9999}}|(u+\Delta_{d+1}-(u\mathbf{1}^\top+\Delta_{[d]})(\alpha'+\gamma))_j|\\
\geq~&\sum_{j\in[n],|\Delta_{j,d+1}|<n^{0.9999}}|(u+\Delta_{d+1}-(u\mathbf{1}^\top+\Delta_{[d]})\alpha')_j|-\sum_{j\in[n],|\Delta_{j,d+1}|<n^{0.9999}}|((u\mathbf{1}^\top+\Delta_{[d]})\gamma)_j|\\
\geq~&\sum_{j\in[n],|\Delta_{j,d+1}|<n^{0.9999}}|(u+\Delta_{d+1}-(u\mathbf{1}^\top+\Delta_{[d]})\alpha')_j|-\|(u\mathbf{1}^\top+\Delta_{[d]})\gamma\|_1\\
\geq~&\frac{2.025}{\pi}n\ln n-1/n^{500}\\
\geq~&\frac{2.024}{\pi}n\ln n
\end{align*}
The first equality follows from $\alpha=\alpha'+\gamma.$ The first inequality follows by the triangle inequality. The third inequality follows from $\|\gamma\|_1\leq 1/n^{c+c_0+800},\|u\mathbf{1}^\top\|_1\leq n^{c_0+10},\|\Delta\|_1\leq n^3,$ and $\forall \alpha'\in \mathcal{N},$
\begin{align*}
\sum_{j\in[n],|\Delta_{j,d+1}|<n^{0.9999}}|(u+\Delta_{d+1}-(u\mathbf{1}^\top+\Delta_{[d]})\alpha')_j|\geq \frac{2.025}{\pi}n\ln n.
\end{align*}
\end{proof}

\subsection{Main result}

\begin{theorem}[Formal version of Theorem~\ref{thm:intro_l1_hardness}]\label{thm:dis_l1_hardness}
Let $n>0$ be sufficiently large. Let $A=\eta\cdot \mathbf{1}\cdot\mathbf{1}^\top+\Delta\in\mathbb{R}^{n\times n}$ be a random matrix where $\eta=n^{c_0}$ for some sufficiently large constant $c_0,$ and $\forall i,j\in[n],\Delta_{i,j}\sim C(0,1)$ are i.i.d. standard Cauchy random variables. Let $r=n^{o(1)}.$ Then with probability at least $1-O(1/\log\log n),$ $\forall S\subset[n]$ with $|S|=r,$
\begin{align*}
\min_{X\in\mathbb{R}^{r\times n}} \|A_SX-A\|_1\geq 1.002\|\Delta\|_1
\end{align*}
\end{theorem}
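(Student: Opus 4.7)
The plan is to fix a subset $S \subset [n]$ with $|S|=r$, establish a high-probability lower bound $\min_{X} \|A_S X - A\|_1 \geq \frac{4.01}{\pi} n^2 \ln n$ with failure probability $2^{-n^{\Theta(1)}}$, and then union bound over the $\binom{n}{r} = 2^{n^{o(1)} \log n}$ choices of $S$. Combining this with the upper bound $\|\Delta\|_1 \leq \frac{4.0002}{\pi} n^2 \ln n$ from Lemma~\ref{lem:bound_of_Delta} (which holds with probability $1-O(1/\log\log n)$) yields the ratio $\frac{4.01}{4.0002} > 1.002$.

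For a fixed $S$, I would decouple the problem column by column: the objective factors as $\sum_{j \notin S} \min_{\alpha_j} \|A_S \alpha_j - A_j\|_1$, and each term has the form $\|u(\mathbf{1}^\top \alpha_j - 1) + \Delta_S \alpha_j - \Delta_j\|_1$ where $u = \eta \mathbf{1}$. I split the analysis of $\alpha_j$ into three cases. If $\|\alpha_j\|_1 \geq n^c$ for a large constant $c$, Lemma~\ref{lem:for_all_alpha_can_not_be_too_large} (proved via an $\ell_1$-net of $(3/\epsilon)^r$ points combined with the per-point bound of Lemma~\ref{lem:for_each_alpha_can_not_be_too_large}) forces cost $\geq 0.9 n^3$, already far larger than $n^2 \ln n$. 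If $\|\alpha_j\|_1 \leq n^c$ but $|1 - \mathbf{1}^\top \alpha_j| > 10^{-20}$, Fact~\ref{fac:bound_of_sum_of_alpha} gives cost $> n^3$ because the $\eta (1 - \mathbf{1}^\top \alpha_j) \mathbf{1}$ signal cannot be absorbed by the small $\Delta$ contributions. Thus I may restrict to the genuine case $\|\alpha_j\|_1 \leq n^c$ and $|1 - \mathbf{1}^\top \alpha_j| \leq 10^{-20}$.

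In this genuine case I partition the coordinates into large entries $\mathcal{I} = \{(i,j) : |\Delta_{i,j}| \geq n^{1.0002}\}$ and small entries $\{(i,j) : |\Delta_{i,j}| < n^{0.9999}\}$, discarding the middle and using both contributions as lower bounds. For the large part, Lemma~\ref{lem:bound_for_large_part} shows that either $\|A_SX - A\|_1 > 4n^2 \ln n$ (done immediately) or $\sum_{\mathcal{I}} |(A_SX - A)_{i,j}| \geq \frac{1.996}{\pi} n^2 \ln n$; the intuition is that the bad region of $A_S$ has size only $n^{0.99999}$ by Lemma~\ref{lem:size_of_the_bad_region}, while the heavy entries of $\Delta$ are spread uniformly across $[n]$ and cover level sets of size $\Theta(n^2/1.001^t)$ by Lemma~\ref{lem:analysis_for_large_part}, so $A_S X$ can only fit a negligible fraction of them. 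For the small part, Lemma~\ref{lem:bound_for_small_part} applied to each column yields per-column cost $\geq \frac{2.024}{\pi} n \ln n$, and summing over the $n - r = (1 - o(1)) n$ remaining columns gives $\geq \frac{2.024}{\pi} n^2 \ln n$. The per-column bound itself comes from a net argument over $\alpha_j$ combined with Lemma~\ref{lem:fixed_small_part}, which splits the cost into three independent contributions: sign-agreement between $\Delta_j$ and the residual giving $\frac{0.9998}{\pi} n \ln n$ via Lemma~\ref{lem:small_part_same_sign_delta}, sign-agreement on the shifted Cauchy $u + \Delta_S \alpha_j$ giving $\frac{0.997}{\pi} n \ln n$ via Lemma~\ref{lem:small_part_same_sign_alpha} (using $\|\alpha_j\|_1 \geq 1 - 10^{-20}$), and anti-concentration on the sign-disagreement part giving $\frac{0.03}{\pi} n \ln n$ via Lemma~\ref{lem:small_part_different_sign}. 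Summing the large and small contributions gives $\frac{1.996 + 2.024}{\pi} n^2 \ln n = \frac{4.02}{\pi} n^2 \ln n \geq \frac{4.01}{\pi} n^2 \ln n$.

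The main obstacle is ensuring that every intermediate bound holds with failure probability $2^{-n^{\Theta(1)}}$, so that the final union bound over the $2^{n^{o(1)} \log n}$ choices of $S$ (and over the $\epsilon$-net of coefficient matrices $X$) still leaves room for the exponentially small failure probability. This requires careful use of Bernstein-type inequalities at the truncation scales $n^{0.9999}$ and $n^{1.0002}$, together with the observation that the net arguments only add $2^{n^{o(1)}}$ multiplicative overhead since $r = n^{o(1)}$. The remaining event, $\|\Delta\|_1 \leq \frac{4.0002}{\pi} n^2 \ln n$, contributes the $O(1/\log \log n)$ failure probability and dominates the total failure probability stated in the theorem.
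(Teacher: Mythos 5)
Your proposal is correct and follows essentially the same route as the paper's proof: the same case split on $\|\alpha_j\|_1$ and $|1-\mathbf{1}^\top\alpha_j|$ via the net argument and Fact~\ref{fac:bound_of_sum_of_alpha}, the same large/small-coordinate decomposition using Lemmas~\ref{lem:bound_for_large_part} and~\ref{lem:bound_for_small_part}, and the same union bound over $S$ against the $\frac{4.0002}{\pi}n^2\ln n$ upper bound of Lemma~\ref{lem:bound_of_Delta}. Your restriction $|1-\mathbf{1}^\top\alpha_j|\leq 10^{-20}$ (rather than the paper's stated $\leq 1-10^{-20}$) is in fact the cleaner choice, since it directly yields $\|\alpha_j\|_1\geq 1-10^{-20}$ as required by Lemma~\ref{lem:bound_for_small_part}.
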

\begin{proof}
We first argue that for a fixed set $S,$ conditioned on $\|\Delta\|_1\leq 100n^2\ln n,$ with probability at least $1-1/2^{n^{\Theta(1)}},$
\begin{align*}
\min_{X\in\mathbb{R}^{r\times n}} \|A_SX-A\|_1\geq 1.002\|\Delta\|_1.
\end{align*}
Then we can take a union bound over the at most $n^r=2^{n^{o(1)}}$ possible choices of $S.$ It suffices to show for a fixed set $S,$  $\min_{X\in\mathbb{R}^{r\times n}} \|A_SX-A\|_1$ is not small.

Without loss of generality, let $S=[r],$ and we want to argue the cost
\begin{align*}
\min_{X\in\mathbb{R}^{r\times n}} \|A_SX-A\|_1\geq\min_{X\in\mathbb{R}^{r\times n}} \|A_SX_{[n]\setminus S}-A_{[n]\setminus S}\|_1\geq 1.002\|\Delta\|_1.
\end{align*}
Due to Lemma~\ref{lem:bound_of_Delta}, with probability at least $1-O(1/\log\log n),$ $\|\Delta\|_1\leq 4.0002/\pi\cdot n^2\ln n.$ Now, we can condition on $\|\Delta\|_1\leq 4.0002/\pi\cdot n^2\ln n.$

Consider $j\in[n]\setminus S.$ Due to Lemma~\ref{lem:for_all_alpha_can_not_be_too_large}, with probability at least $1-(1/n)^{\Theta(n)},$ for all $X_j\in\mathbb{R}^r$ with $\|X_j\|_1\geq n^c$ for some constant $c>0,$ we have
\begin{align*}
\|A_SX_j-A_j\|_1&=\|(\eta\cdot\mathbf{1}\cdot\mathbf{1}^\top+[\Delta_S\ \Delta_j])[X_j^\top\ -1]^\top\|_1\geq 0.9n^3.
\end{align*}

By taking a union bound over all $j\in[n]\setminus S,$ with probability at least $1-(1/n)^{\Theta(n)},$ for all $X\in\mathbb{R}^{r\times n}$ with $\exists j\in[n]\setminus S,\|X_j\|_1\geq n^c,$ we have
\begin{align*}
\|A_SX-A\|_1\geq 0.9n^3.
\end{align*}
Thus, we only need to consider the case $\forall j\in [n]\setminus S,\|X_j\|_1\leq n^c.$ Notice that we condition on $\|\Delta\|_1\leq 4.0002/\pi\cdot n^2\ln n.$ By Fact~\ref{fac:bound_of_sum_of_alpha}, we have that if $\|X_j\|_1\leq n^c$ and $|1-\mathbf{1}^\top X_j|>1-10^{-20},$ then $\|A_SX-A\|_1\geq \|A_SX_j-A_j\|_1> n^3.$

Thus, we only need to consider the case $\forall j\in [n]\setminus S,\|X_j\|_1\leq n^c,|1-\mathbf{1}^\top X_j|\leq 1-10^{-20}.$ $\forall X\in\mathbb{R}^{r\times n}$ with $\forall j\in[n]\setminus S, \|X_j\|_1\leq n^c,|1-\mathbf{1}^\top X_j|\leq 1-10^{-20},$ if $\|A_SX_{[n]\setminus S}-A_{[n]\setminus S}\|_1\leq 4n^2\ln n,$ then
\begin{align*}
&\|A_SX_{[n]\setminus S}-A_{[n]\setminus S}\|_1\\
=~&\|(\eta\cdot\mathbf{1}\cdot\mathbf{1}^\top+\Delta_S)X_{[n]\setminus S}-(\eta\cdot\mathbf{1}\cdot\mathbf{1}^\top+\Delta_{[n]\setminus S})\|_1\\
\geq~& \sum_{i\in[n],j\in[n]\setminus S,|\Delta_{i,j}|\geq n^{1.0002}} |(((\eta\cdot\mathbf{1}\cdot\mathbf{1}^\top+\Delta_S)X_{[n]\setminus S}-\eta\cdot\mathbf{1}\cdot\mathbf{1}^\top)-\Delta_{[n]\setminus S})_{i,j}|\\
&+\sum_{i\in[n],j\in[n]\setminus S,|\Delta_{i,j}|< n^{0.9999}} |((\eta\cdot\mathbf{1}\cdot\mathbf{1}^\top+\Delta_S)X_{[n]\setminus S}-(\eta\cdot\mathbf{1}\cdot\mathbf{1}^\top+\Delta_{[n]\setminus S}))_{i,j}|\\
\geq~& \frac{1.996}{\pi}\cdot n^2\ln n+\sum_{i\in[n],j\in[n]\setminus S,|\Delta_{i,j}|< n^{0.9999}} |((\eta\cdot\mathbf{1}\cdot\mathbf{1}^\top+\Delta_S)X_{[n]\setminus S}-(\eta\cdot\mathbf{1}\cdot\mathbf{1}^\top+\Delta_{[n]\setminus S}))_{i,j}|\\
=~&\frac{1.996}{\pi}\cdot n^2\ln n+\sum_{j\in[n]\setminus S}\sum_{i\in[n],|\Delta_{i,j}|< n^{0.9999}} |((\eta\cdot\mathbf{1}\cdot\mathbf{1}^\top+\Delta_S)X_j-\eta\cdot\mathbf{1}-\Delta_j)_i|\\
\geq~&\frac{1.996}{\pi}\cdot n^2\ln n+\sum_{j\in[n]\setminus S}\frac{2.024}{\pi}n\ln n\\
\geq~&\frac{1.996}{\pi}\cdot n^2\ln n+\frac{2.023}{\pi}n^2\ln n\\
\geq~&\frac{4.01}{\pi}\cdot n^2\ln n
\end{align*}
holds with probability at least $1-1/2^{n^\Theta(1)}.$ The first equality follows by the definition of $A$. The first inequality follows by the partition by $|\Delta_{i,j}|.$ Notice that $[\mathbf{1}\ \Delta_S]$ has rank at most $r+1=n^{o(1)}.$ Then, due to Lemma~\ref{lem:bound_for_large_part}, and the condition $\|A_SX_{[n]\setminus S}-A_{[n]\setminus S}\|_1\leq 4n^2\ln n,$ the second inequality holds with probability at least $1-1/2^{n^{\Theta(1)}}.$ The second equality follows by grouping the cost by each column. The third inequality holds with probability at least $1-1/2^{n^{\Theta(1)}}$ by Lemma~\ref{lem:bound_for_small_part}, and a union bound over all the columns in $[n]\setminus S.$ The fourth inequality follows by $n-r=n-n^{o(1)}\geq (1-10^{-100})n.$

Thus, conditioned on $\|\Delta\|_1\leq 4.0002/\pi\cdot n^2\ln n,$ with probability at least $1-1/2^{n^{\Theta(1)}},$ we have $\min_{X\in\mathbb{R}^{r\times n}} \|A_SX-A\|_1\geq \frac{4.02}{\pi}\cdot n^2\ln n.$ By taking a union bound over all the ${n\choose r} = 2^{n^{o(1)}}$ choices of $S,$ we have that conditioned on $\|\Delta\|_1\leq \frac{4.0002}{\pi}n^2\ln n,$ with probability at least $1-1/2^{n^{\Theta(1)}},$ $\forall S\subset [n]$ with $|S|=r=n^{o(1)},$
$\min_{X\in\mathbb{R}^{r\times n}} \|A_SX-A\|_1\geq \frac{4.02}{\pi}\cdot n^2\ln n.$ Since $4.01/4.0002>1.002,$
\begin{align*}
\min_{X\in\mathbb{R}^{r\times n}} \|A_SX-A\|_1\geq 1.002 \|\Delta\|_1.
\end{align*}
Since $\|\Delta\|_1\leq \frac{4.0002}{\pi}n^2\ln n$ happens with probability at least $1-O(1/\log \log n),$ this completes the proof.
\end{proof}

\end{document}